\def\expandafter\UrlBreaks\expandafter{\UrlBreaks
\do \d}
\newcolumntype{P}[1]{>{\centering\arraybackslash}p{#1}}
\newcolumntype{M}[1]{>{\centering\arraybackslash}m{#1}}
\let\svthefootnote\thefootnote
\NewDocumentCommand{\INTERVALINNARDS}{ m m }{
    #1 {,} #2
}
\NewDocumentCommand{\interval}{ s m >{\SplitArgument{1}{,}}m m o }{
    \IfBooleanTF{#1}{
        \left#2 \INTERVALINNARDS #3 \right#4
    }{
        \IfValueTF{#5}{
            #5{#2} \INTERVALINNARDS #3 #5{#4}
        }{
            #2 \INTERVALINNARDS #3 #4
        }
    }
}
\mathchardef\Re="023C
\mathchardef\Im="023D
\def\BibTeX{{\rm B\kern-.05em{\sc i\kern-.025em b}\kern-.08em
    T\kern-.1667em\lower.7ex\hbox{E}\kern-.125emX}}
\newcommand{\defeq}{\vcentcolon=}
\newcommand\aeq{\stackrel{\mathclap{\normalfont\mbox{(a)}}}{=}}
\newcommand\beq{\stackrel{\mathclap{\normalfont\mbox{(b)}}}{=}}
\newcommand\bg{\stackrel{\mathclap{\normalfont\mbox{(b)}}}{>}}
\newcommand\bl{\stackrel{\mathclap{\normalfont\mbox{(b)}}}{<}}
\newcommand\ceq{\stackrel{\mathclap{\normalfont\mbox{(c)}}}{=}}
\newcommand\cl{\stackrel{\mathclap{\normalfont\mbox{(c)}}}{<}}
\newcommand\cg{\stackrel{\mathclap{\normalfont\mbox{(c)}}}{>}}
\newcommand\deq{\stackrel{\mathclap{\normalfont\mbox{(d)}}}{=}}
\newcommand\dl{\stackrel{\mathclap{\normalfont\mbox{(d)}}}{<}}
\newcommand\dg{\stackrel{\mathclap{\normalfont\mbox{(d)}}}{>}}
\newcommand\eeq{\stackrel{\mathclap{\normalfont\mbox{(e)}}}{=}}
\newcommand\el{\stackrel{\mathclap{\normalfont\mbox{(e)}}}{<}}
\newcommand\eg{\stackrel{\mathclap{\normalfont\mbox{(e)}}}{>}}
\newcommand\feq{\stackrel{\mathclap{\normalfont\mbox{(f)}}}{=}}
\newcommand\fl{\stackrel{\mathclap{\normalfont\mbox{(f)}}}{<}}
\newcommand\ggl{\stackrel{\mathclap{\normalfont\mbox{(g)}}}{<}}
\newcommand\hheq{\stackrel{\mathclap{\normalfont\mbox{(h)}}}{=}}
\newcommand\hl{\stackrel{\mathclap{\normalfont\mbox{(h)}}}{<}}
\newcommand\il{\stackrel{\mathclap{\normalfont\mbox{(i)}}}{<}}
\newtheorem{prop}{Proposition}
\begin{document}
\setlength{\abovedisplayskip}{5pt}
\setlength{\belowdisplayskip}{5pt}

\font\myfont=cmr12 at 15pt
\title{{  Reconstructing Classes of Non-bandlimited Signals from Time Encoded Information\\
}}
\author{\IEEEauthorblockN{Roxana Alexandru, \textit{Student Member}, IEEE, and Pier Luigi Dragotti, \textit{Fellow}, IEEE}
\vspace{-1.5em}
}

\vspace{-2em}
\maketitle

\vspace{-2em}
\begin{abstract}
We investigate time encoding as an alternative method to classical sampling, and address the problem of reconstructing classes of non-bandlimited signals from time-based samples.
We consider a sampling mechanism based on first filtering the input, before obtaining the timing information using a time encoding machine. Within this framework, we show that sampling by timing is equivalent to a non-uniform sampling problem, where the reconstruction of the input depends on the characteristics of the filter and on its non-uniform shifts. The classes of filters we focus on are exponential and polynomial splines, and we show that their fundamental properties are locally preserved in the context of non-uniform sampling.
Leveraging these properties, we then derive sufficient conditions and propose novel algorithms for perfect reconstruction of classes of non-bandlimited signals such as: streams of Diracs, sequences of pulses and piecewise constant signals. Next, we extend these methods to operate with arbitrary filters, and also present simulation results on synthetic noisy data.
\end{abstract}
\vspace{-0.5em}
\begin{IEEEkeywords}
Analog-to-digital conversion, non-uniform sampling, sub-Nyquist sampling, finite rate of innovation, time encoding, integrate-and-fire, crossing detector, cardinal splines.
\end{IEEEkeywords}

\IEEEpeerreviewmaketitle
\vspace{-0.5em}
\section{Introduction}
Sampling plays a fundamental role in signal processing and communications, achieving the conversion of continuous time phenomena into discrete sequences \cite{843002}. 
From the Whittaker-Shannon theorem \cite{1697831}, to recent theories in compressed sensing \cite{1580791,1614066}, super-resolution \cite{candes-granda} and finite rate of innovation \cite{4156380,1003065,5686950,4483755, 4682542}, sampling theory has provided precise answers on when a faithful conversion of a continuous waveform into a discrete sequence is possible. These methods are generally based on recording the signal amplitude at specified times, which lead to uniform sampling if the samples are evenly spaced, and non-uniform sampling otherwise.

In this paper, we concentrate on an alternative method to classical sampling, which encodes the input into a sequence of non-uniformly spaced time events or \textit{spikes}. In other words, rather than recording the value of the signal at preset times, one records the instants when the signal crosses a pre-defined threshold or triggers a pre-defined event. 
\let\thefootnote\relax\footnote{Some of the work in this paper was, in part, presented at the IEEE International Conference on Acoustics, Speech and Signal Processing (ICASSP), Brighton, UK, May 2019 \cite{8682626}, and the International Conference on Sampling Theory and Applications (SampTA), Bordeaux, France, July 2019 \cite{alexandrusampta}.}Acquisition models inspired by this mechanism include zero-crossing detectors \cite{6770840}, delta-modulation schemes\cite{soton252088}, as well as the time encoding machine (TEM) introduced in \cite{1344228}.
This latter model is of particular interest, as it mimics the \textit{integrate-and-fire} mechanism of neurons in the human brain. Biological neurons use time encoding to represent sensory information as action potentials \cite{adrian1928basis, Dayan:2005:TNC:1205781, Gerstner:2002:SNM:583784}, which allows them to process information very efficiently. In the same manner, sampling inspired by the brain could lead to very simple and highly efficient devices, ranging from analog to digital converters \cite{1344228}, to neuromorphic computing or event-based vision sensors, which record only changes in the input intensity, leading to low power consumption and fewer storage requirements \cite{5537149}. 

At the same time, time-encoding methods extend theories of traditional sampling, and this makes this topic intriguing also from a research perspective. 
Within the study of time encoding, the key problem that arises is to find methods to retrieve the input signal from its timing information, and hence the key questions to pursue are the following. 1) Is time encoding invertible, and which classes of signals can be uniquely represented using timing information? 2) What algorithms allow perfect retrieval of these signals from their time-encoded samples? 

To address these questions, several authors have provided ways to sample and reconstruct bandlimited signals \cite{Lazar05timeencoding, 1201780, 5709990, Feichtinger2012, 1415989, Adam19}. These initial results on time-encoding machines have also been extended to functions that belong to shift-invariant spaces \cite{ FlorescuC15 , GONTIER201463}, typically by connecting time encoding with the problem of non-uniform sampling \cite{AldroubiGrochenig, article,330352}.
Time encoding theory has also been generalized to the case of non-bandlimited signals in \cite{LAP09b}, however in the context of studying the dynamics of populations of neurons, by leveraging stochastic assumptions on the firing parameters.

In this paper, we show that it is possible to perfectly reconstruct particular classes of continuous-time signals which are neither bandlimited nor belong to shift-invariant subspaces, from samples obtained using a time encoding mechanism.
The signals we focus on are infinite streams of Diracs, sequences of pulses, as well as piecewise constant signals.
Sampling and reconstructing pulses is of significant relevance to many real-world applications. For example, \textit{time-of-flight} cameras probe the 3D scene with pulses of light and reconstruct the scene by measuring their round trip time. In applications which require reduced computational power and speed, e.g. robots mapping their surroundings, time-of-flight technology may benefit from a time encoding framework which would significantly lower the sampling rate. Signals consisting of a stream of pulses appear in many other applications, including: ultrawideband communications \cite{1329542}, ECG acquisition and compression \cite{7857059}, radio-astronomy \cite{7736135}, image processing\cite{7465789}, ultrasound imaging \cite{5686950} and processing of neuronal signals \cite{Onativia2013}.

At the same time, time encoding principles have already been integrated in bio-inspired technologies such as \textit{dynamic vision sensors} (DVS) \cite{5537149}, which have many real-world applications, ranging from robotics to autonomous driving as well as low-power surveillance.
In a DVS camera, each pixel only records changes in the input at the time instants they occur, by taking a time derivative of the signal. Hence, at the local pixel-level, this is equivalent to time encoding of piecewise constant signals, which is studied in this paper. 

Motivated by these real-world applications, the time encoding strategy we propose is based on filtering the input signal before extracting the timing information using a crossing or an integrate-and-fire TEM. The filter may be used to reduce noise, or may model the distortion introduced by the acquisition device, for example the optics in a time-of-flight scanner or the photoreceptors in a DVS camera. 
In order to develop a framework for exact reconstruction, we initially focus on two classes of compact-support filters (sampling kernels): exponential and polynomials splines. 
Please note that exponential splines are very useful since they can be used to model any convolution operator with rational transfer function as for example, simple RC circuits \cite{1408194, 4156380}.
Our first main contribution is to prove that exponential (polynomial) splines locally preserve their exponential (polynomial) reproducing properties in the context of time-based sampling. Specifically, we show that within intervals where there are no knots of at least $N$ non-uniformly shifted kernels, we can locally reproduce exponentials (polynomials) of degree $N$. The second aspect of our contribution is to leverage these properties to address the problem of reconstructing some classes of non-bandlimited signals from timing information. We initially develop our reconstruction framework for the case of one Dirac, where we show how a linear combination of its non-uniform samples leads to a sequence of signal moments, which can then be annihilated using Prony's method \cite{Prony}, in order to retrieve the free parameters of the input. Furthermore, we extend this method to reconstruct infinite streams and bursts of Diracs, sequences of pulses as well as piecewise constant signals, for which we can achieve \textit{local} reconstruction given the compact support of the filter. Finally, we depart from the ideal case, and present a universal reconstruction strategy that works with timing-based samples taken by arbitrary kernels.
 
This paper is organized as follows. In Section \ref{subsection:Acquisition Models}, we describe the principles of time encoding, with two exemplary cases. Then, in Section \ref{subsection:Sampling Kernels} we show that sampling kernels which reproduce exponentials or polynomials preserve this property locally, when sampling is based on timing information. Furthermore,  in Section \ref{section:Perfect Recovery of Signals from Timing Information obtained with a Crossing Time Encoding Machine} we present methods for the reconstruction of non-bandlimited signals from their timing information obtained using a crossing TEM. We first propose a method for estimation of a single Dirac, and extend this to retrieve streams of Diracs and bursts of Diracs. Then, in Section \ref{section:Perfect Recovery of Signals from Timing Information obtained with an Integrate-and-fire System} we demonstrate the perfect retrieval of classes of non-bandlimited signals from timing information, obtained using an integrate-and-fire TEM. These estimation methods are then extended in Section \ref{section:Generalized Time-based Sampling} to the case of arbitrary sampling kernels. Here we also present results for the case of noisy signals. Finally, we highlight the high efficiency of sampling based on timing information in Section \ref{section:Density of Non-uniform Samples}, and present concluding remarks in Section \ref{sec:Conclusions}. Please note that the code to reproduce our simulations is available online \cite{coderoxana}.

\vspace{-1em}
\section{Time Encoding Mechanisms}
\label{sec:Time-based Sampling Mechanism}

\subsection{Acquisition Models}
\label{subsection:Acquisition Models}
In this section, we introduce the time encoding machines considered in this paper: the crossing TEM and the integrate-and-fire TEM. Specifically, we show how these TEMs map a real signal $x(t)$ to a strictly increasing sequence of times $\{t_n\}$ \cite{GONTIER201463}. We also show that although no measure of the amplitude of the signal is recorded, time encoding is equivalent to a non-uniform sampling problem.

\vspace{0.2em}
\subsubsection{Crossing Time Encoding Machine}
The crossing time encoding strategy is inspired by the A/D conversion scheme in e.g. \cite{838174,  GONTIER201463}, and is depicted in Fig.~\ref{fig:comparator}. It consists of a compact-support filter $\varphi(-t)$, and a comparator with a sinusoidal reference $g(t)$.
The output of the acquisition device is the sequence $\{t_n\}$, corresponding to the time instants when the filtered input signal crosses the reference, i.e. when $y(t_n)-g(t_n)=0$. Moreover, since the shape of the test function $g(t)$ is known, we can retrieve the amplitudes of the output samples, given by $y_n=y(t_n)=g(t_n)$. Hence, decoding the input signal is equivalent to a non-uniform sampling problem, where we aim to reconstruct $x(t)$ from the non-uniform samples given by:
\small
\begin{equation}\label{eq:non-uniform samples comparator}
y_n = y(t_n)= \int x(\tau)\varphi(\tau-t_n)d\tau=\langle x(t), \varphi(t-t_n)\rangle.
\end{equation}
\normalsize

In Fig.~\ref{fig:comparator_info} we depict the time encoded information of an input signal of 3 Diracs, obtained using the TEM in Fig.~\ref{fig:comparator}.

\small
\begin{figure}[htb]
\vspace{-0.5em}
\centering
\includegraphics[width=0.32\textwidth]{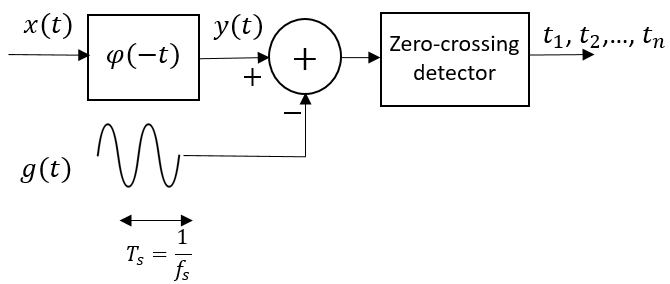}
\caption{Crossing Time Encoding Machine.}
\vspace{-0.5em}
\label{fig:comparator}
\vspace{-0.5em}
\end{figure}
\normalsize

\begin{figure}[htb]
\vspace{-0.5em}
\centering
\includegraphics[width=0.5\textwidth]{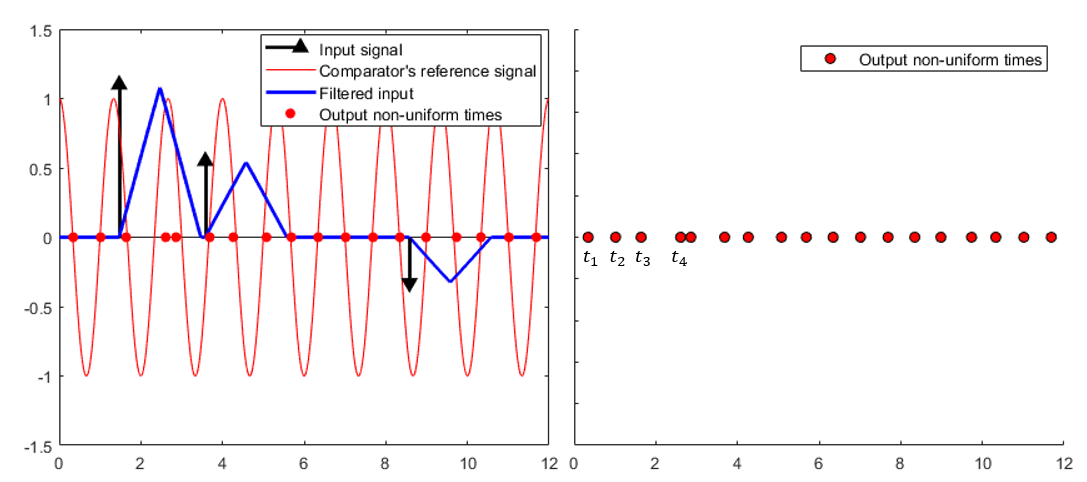}
\caption{Time encoding based on the Crossing TEM.}
\label{fig:comparator_info}
\end{figure}

\subsubsection{Time Encoding based on an Integrate-and-fire System}
The operating principle of this time encoding strategy is similar to the one in \cite{Lazar05timeencoding}, and is depicted in Fig.~\ref{fig:integrate_and_fire_model_2}.
The signal is first filtered with a compact-support filter with impulse response $\varphi(-t)$, before being passed to an integrator. When the output of the integrator reaches the positive trigger mark $C_T$, the time encoding machine outputs a spike and the integrated signal $y(t)$ is reset to zero. Similarly, a spike is generated and $y(t)$ resets to zero, when the integrator reaches the negative trigger mark $-C_T$.
The time instants when the integrator reaches the threshold $\pm C_T$ are recorded in the sequence $\{t_n\}$. 
Then, we can compute the output sample $y(t_n)$ at each spike $t_n$ as:
\small
\begin{equation} 
\label{eq:non_uniform_samples integrator}
y_n=y(t_n)=\pm C_T=\int_{t_{n-1}}^{t_n} f(\tau) d\tau,
\end{equation}
\normalsize
where $n \geq 2$ and $f(t)$ is defined as:

\small
\begin{equation}
\label{eq:filtered input integrator}
f(t)=\int  x(\alpha) \varphi(\alpha-t) d\alpha, \text{ for } t \in [t_{n-1}, t_n].
\end{equation}
\normalsize

Similarly, assuming that the input signal $x(t)=0$, for $t<\tau_1$, and that the filter $\varphi(-t)$ is causal, then the first output sample is given by:
\small
\begin{equation} 
\label{eq:non_uniform_sample_1 integrator}
y_1=y(t_1)=\pm C_T= \int_{\tau_1}^{t_1} f(\tau) d\tau.
\end{equation}
\normalsize

Hence, time encoding with an integrate-and-fire model is equivalent to a non-uniform sampling problem, where we aim to estimate the input $x(t)$ from the non-uniform samples $y(t_n)$.
In Fig.~\ref{fig:int_info} we depict the time encoding of an input signal, obtained using the device in Fig.~\ref{fig:integrate_and_fire_model_2}, for $C_T=0.15$.

\begin{figure}[htb]
\vspace{-0.5em}
\centering
\includegraphics[width=0.5\textwidth]{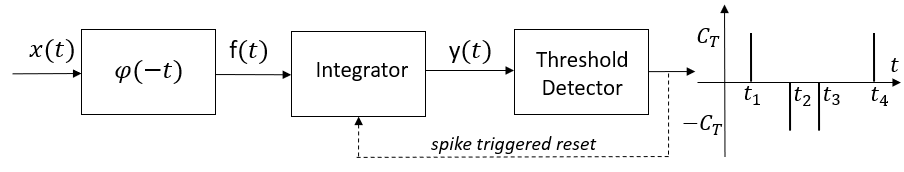}
\caption{Time Encoding Machine based on Integrate-and-fire.}
\label{fig:integrate_and_fire_model_2}
\vspace{-0.5em}
\end{figure}
\vspace{-0.8em}
\begin{figure}[htb]
\vspace{-0.5em}
\centering
\includegraphics[width=0.5\textwidth]{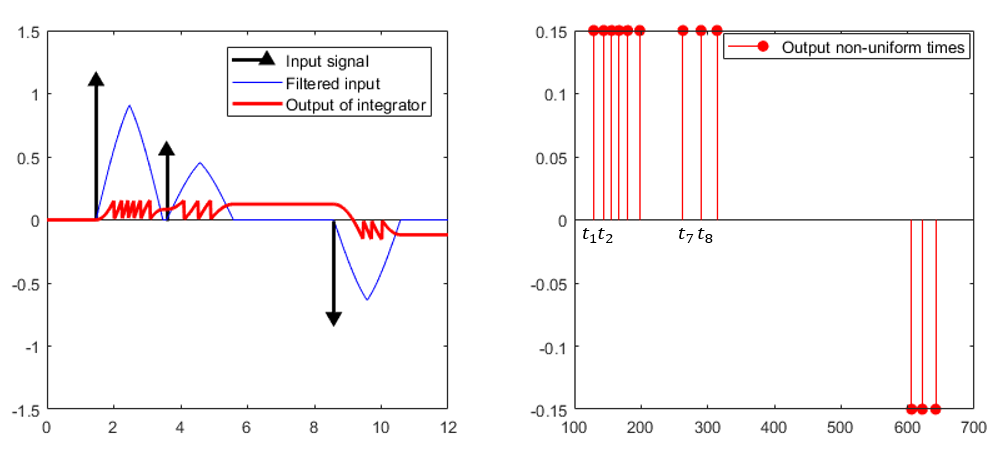}
\caption{Time Encoding based on the Integrate-and-fire TEM.}
\label{fig:int_info}
\end{figure}

Furthermore, leveraging the results in \cite{Lazar2008}, we can show that the non-uniform output samples we obtain using the acquisition model in Fig.~\ref{fig:integrate_and_fire_model_2} are the same as those obtained by filtering the input with the modified kernel $(\varphi*q_{\theta_n})(t)$:
\small
\begin{equation}
\label{eq:modified samples integrator summary}
y(t_n)=\langle x(t), (\varphi*q_{\theta_n})(t-t_{n-1}) \rangle,
\end{equation}
\normalsize
where $\theta_n=t_n-t_{n-1}$ and $q_{\theta_n}(t)$ is defined as:
\small
\begin{equation}
\label{eq:box function integrator}
q_{\theta_n}(t)=
\begin{cases} 1, & 0\leq t\leq \theta_n, \\ 0, & otherwise.
\end{cases}
\end{equation}
\normalsize

We can prove Eq. (\ref{eq:modified samples integrator summary}) by re-writing Eq. (\ref{eq:non_uniform_samples integrator}) as follows:
\small
\begin{equation}
\label{eq: output_samples_equivalent_filter}
\begin{split}
y(t_n) &= \int_{t_{n-1}}^{t_n} f(\tau) d\tau =\int_{t_{n-1}}^{t_n} \int_{-\infty}^{\infty} x(t) \varphi(t-\tau) dt d\tau \\
&\aeq \int_{-\infty}^{\infty} x(t) \int_{t_{n-1}}^{t_n} \varphi(t-\tau) d\tau dt \\
&\beq \int_{-\infty}^{\infty} x(t) \int_{t-t_n}^{t-t_{n-1}}\varphi(\tau) d\tau dt \\
&\ceq \int_{-\infty}^{\infty} x(t) \int_{t-t_n}^{t-t_{n-1}}\varphi(\tau)q_{\theta_n}(t-t_{n-1}-\tau)d\tau dt \\
&\deq\int_{-\infty}^{\infty} x(t) (\varphi*q_{\theta_n})(t-t_{n-1}) dt \\
&=\langle x(t), (\varphi*q_{\theta_n})(t-t_{n-1}) \rangle.
\end{split}
\end{equation}
\normalsize

In the derivations above, $(a)$ holds since we assume both the input $x(t)$ and the filter $\varphi(t)$ have compact support, and $(b)$ follows from a change of variable. Moreover, $(c)$ follows from the fact that $q_{\theta_n}(t-t_{n-1}-\tau)=1$ for $\tau \in [t-t_n, t-t_{n-1}]$ and $(d)$ holds since $q_{\theta_n}(t-t_{n-1}-\tau)=0$ for $\tau \notin [t-t_n, t-t_{n-1}]$, as defined in Eq. (\ref{eq:box function integrator}).

Finally, the first output sample can be computed as:
\small
\begin{equation}
\label{eq: output_sample_1_equivalent_filter}
y(t_1)\aeq\int_{\tau_1}^{t_1} f(\tau) d\tau=\langle x(t), (\varphi*q_{\theta_1})(t-\tau_1) \rangle,
\end{equation}
\normalsize
where $\theta_1=t_1-\tau_1$, and $(a)$ follows from Eq. (\ref{eq:non_uniform_sample_1 integrator}).

We conclude this subsection by making the following remark.
We observe that from the timing sequence $\{t_n\}$, we can either recover $y(t_n)=\langle x(t), \varphi(t-t_{n})\rangle$ for the case of the crossing TEM or $y(t_n)=\langle x(t), (\varphi*q_{\theta_n})(t-t_{n-1})\rangle$ for the integrate-and-fire model. This means that in both cases, the reconstruction of $x(t)$ will depend on the proper choice of the sampling kernel $\varphi(t)$ and on its non-uniform shifts $\varphi(t-t_n)$.

In what follows we focus on two families of kernels, polynomial and exponential splines \cite{4156380, 799930, 1408193}, and show that some of their fundamental properties are preserved in the case of non-uniform shifts.

\vspace{-0.8em}
\subsection{Sampling Kernels}
\label{subsection:Sampling Kernels}
The sampling kernels $\varphi(t)$, that we consider in this paper are all anti-causal since they are the time reversed versions of causal filters.
\subsubsection{Polynomial splines}
\label{subsubsection:Polynomial reproducing kernels}
A B-spline $\beta_P(t)$ of order $P$ is computed as the $(P+1)$-fold convolution of the box function $\beta_0(t)$ \cite{799930}:
\small
\begin{equation*}
\beta_P(t) =\underbrace{ \beta_0(t)*\beta_0(t)....*\beta_0(t)}_{P+1 \text{ times}},
\end{equation*}
\normalsize
where the anti-causal version of $\beta_0(t)$ is defined as:
\small
\begin{equation*}
\beta_0(t) = \begin{cases}
1, & -1\leq t\leq 0, \\
0, & \text{otherwise}.
\end{cases}
\end{equation*}
\normalsize

The B-spline of order $P$ satisfies the Strang-Fix conditions \cite{Strang2011} and hence, together with its uniform shifts, it can reproduce polynomials of maximum degree $P$:
\small
\begin{equation}\label{eq:poly_spline uniform}
\sum_{n \in \mathbb{Z}} c_{m,n} \beta_P(t-n)=t^m,
\end{equation}
\normalsize
where $m\in\{0,1,...,P\}$, and for a proper choice of the coefficients $c_{m,n}$.

For instance, the first-order B-spline satisfies Eq. (\ref{eq:poly_spline uniform}) for $P=1$, which means it can reproduce constant and linear polynomials, and is defined as:
\small
\begin{equation*}
\beta_1(t) = \begin{cases}
-t, & -1\leq t \leq 0,\\
2+t, & -2 \leq t < -1,\\
0, & \text{otherwise}.
\end{cases}
\end{equation*}
\normalsize

The first order B-spline has two continuous regions, each of which is a linear polynomial: $\beta_1^A(t) =-t$, for $t \in (-1,0)$ and $\beta_1^{B}(t)=2+t$, for $t \in (-2,-1)$. Using this observation, it is possible to show that the first-order B-spline, together with its \textit{non-uniformly} shifted versions can \textit{locally} reproduce polynomials of maximum degree $1$. In other words, it is possible to prove that within a time interval $I$ where the shifted kernels $\beta_1(t-t_n)$ have no knots, the following equation holds:
\small
\begin{equation}\label{eq:poly_spline non-uniform}
\sum_{n=0}^{N-1} c_{m,n}^I \beta_1(t-t_n)=t^m,
\end{equation}
\normalsize
where $N \geq 2$, $m \in \{0,1\}$, $t \in I$ and $\{t_n\}$ are non-uniform.

The proof can be outlined by setting $N=2$ for simplicity. Then, let $I$ be an interval where there are no knots of $\beta_1(t-t_0)$ and $\beta_1(t-t_1)$, with $I \subset (t_1-1,t_0)$. Furthermore, let $v_0(t)=\beta_1(t-t_0)=-t+t_0$ for $t \in I$ and $v_1(t)=\beta_1(t-t_1)=-t+t_1$ for $t \in I$. In the vector space of linear polynomials in $I$ which is a two-dimensional space, the elements $v_0(t)$ and $v_1(t)$ are linearly independent and so form a basis of the space, provided $t_0 \neq t_1$. Therefore, using a linear combination of the two functions, we can uniquely represent any vector in this space, including the vector $t$. In other words, we can determine the unique coefficients $c_{1,0}^I=\frac{t_1}{t_0-t_1}$ and $c_{1,1}^I=\frac{t_0}{t_1-t_0}$ that ensure $c_{1,0}^Iv_0(t)+c_{1,1}^Iv_1(t)=t$, for $t \in I$. Similarly, we find the unique coefficients $c_{0,0}^I=\frac{1}{t_0-t_1}$ and $c_{0,1}^I=\frac{1}{t_1-t_0}$ such that $c_{0,0}^Iv_0(t)+c_{0,1}^Iv_1(t)=1$, for $t \in I$. 
Hence, Eq. (\ref{eq:poly_spline non-uniform}) is satisfied in the knot-free interval $I$ for $N=2$. 

In the same manner, one can show that reproduction of constant and linear polynomials is achieved on any interval spanned by knot-free regions of at least two non-uniformly shifted B-splines. 
Lastly but importantly, for different knot-free intervals, the solution to Eq. (\ref{eq:poly_spline non-uniform}) differs, and this fact is highlighted in Fig.~\ref{fig:poly_recon}. Here, we depict two non-uniform shifts of the first-order B-spline, namely $\beta_1(t-2)$ and $\beta_1(t-2.625)$. The shifted kernel $\beta_1(t-2)$ has knots at $t=0$, $t=1$ and $t=2$, whilst $\beta_1(t-2.625)$ has knots at $t=0.625$, $t=1.625$ and $t=2.625$. As a result, reproduction of polynomials is possible within the knot-free regions $I_1=(0.625, 1)$ and $I_2=(1, 1.625)$, however with a different linear combination of the B-splines overlapping these regions, i.e. with $c_{m,n}^{I_1} \neq c_{m,n}^{I_2}$.

\begin{figure}[htb]
\vspace{-1em}
\centering
\includegraphics[width=0.4\textwidth]{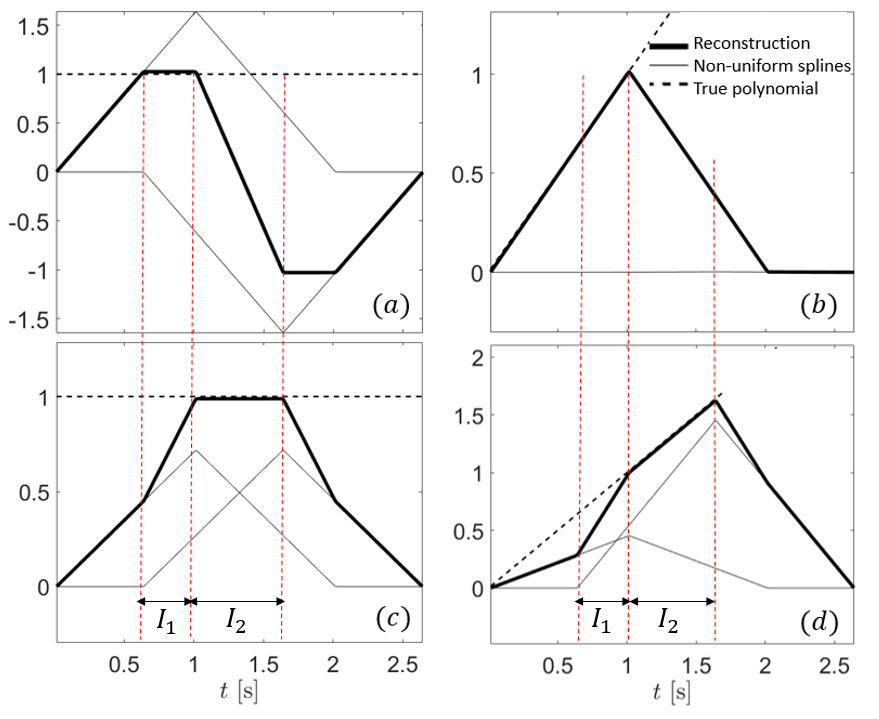}
\caption{Reproduction of constant and linear polynomials in two different time intervals, $I_1=(0.625, 1)s$ in $(a)$ and $(b)$, and $I_2=(1, 1.625)s$ in $(c)$ and $(d)$. In this case, two knot-free regions of two non-uniformly shifted first-order B-splines overlap $I_1$ and $I_2$.}
\label{fig:poly_recon}
\vspace{-1.5em}
\end{figure}

One can extend this result to the case of higher order polynomials by using B-splines of order $P>1$. This is due to the fact that polynomial splines are piecewise polynomial functions of degree $P$. Hence, in any interval $I$ that contains $P+1$ knot-free shifted versions of splines, it is possible to reproduce polynomials up to degree $P$.

\vspace{0.3em}
\subsubsection{Exponential splines}
\label{subsubsection: Exponential reproducing kernels}
The anti-causal version of the E-spline of first-order is defined as:
\small
\begin{equation*}
\varphi_1(t) = \begin{cases}
e^{-\alpha_0t}, & -1\leq t\leq 0, \\
0, & \text{otherwise}.
\end{cases}
\end{equation*}
\normalsize
where $\alpha_0$ can be either real or complex.

As with polynomial splines, E-splines of order $P$ are obtained from the convolution of first-order E-splines \cite{1408193}:
\small
\begin{equation}
\label{eq:higher_order_e_spline}
\varphi_P(t) =\varphi_{\alpha_0}(t)*\varphi_{\alpha_1}(t)....*\varphi_{\alpha_{P-1}}(t).
\end{equation}
\normalsize

An E-spline of order $P$ has compact support and can reproduce $P$ different exponentials of the form $e^{-\alpha_m t}$ \cite{1408193}:
\small
\begin{equation*}
\sum_{n \in \mathbb{Z}} c_{m,n} \varphi(t-n)=e^{-\alpha_m t},
\end{equation*} 
\normalsize
where $m=0,1,...,P$, and for a suitable choice of the coefficients $c_{m,n}$.

For example, the E-spline of order $P=2$ of support of arbitrary length $L$ is defined as:
\small
\begin{equation}
\varphi_2(t)= \begin{cases}
\label{eq: first_order_e_spline_definition_1}
\frac{e^{c_1-c_0}}{c_1-c_0}e^{-\alpha_0t}+\frac{e^{-c_1+c_0}}{c_0-c_1}e^{-\alpha_1t}, &\hspace{-0.5em}-L\leq t < -\frac{L}{2}, \\
\frac{1}{c_0-c_1}e^{-\alpha_0t}+\frac{1}{c_1-c_0}e^{-\alpha_1t}, &\hspace{-0.5em}-\frac{L}{2} \leq t \leq 0, \\
0, &\hspace{-0.5em} \text{otherwise},
\end{cases}
\end{equation}
\normalsize
where $\alpha_i \in \mathbb{C}$ (if $\Re\{\alpha_i\}=0$ then $\varphi_2(t) \in \mathbb{R}$), and where $c_i = \alpha_i \frac{L}{2}$ for $i=0,1$ in order to ensure continuity of $\varphi_2(t)$. Throughout the remainder of the paper, we assume for simplicity that $L=2$.

The second-order E-spline can reproduce the exponentials $e^{-\alpha_0t}$ and $e^{-\alpha_1t}$. In fact, we notice that within each of its knot-free regions, the function $\varphi_2(t)$ can be expressed as a linear combination of the exponentials $e^{-\alpha_0t}$ and $e^{-\alpha_1t}$.
This observation helps us prove that within any time interval $I$ which contains knot-free regions of non-uniformly shifted first-order E-splines, we can reproduce two exponentials:
\small
\begin{equation}\label{eq:exp_spline non-uniform}
\sum_{n=0}^{N-1} c_{m,n}^I \varphi_2(t-t_n)=e^{-\alpha_m t},
\end{equation}
\normalsize
where $N \geq2$, $m\in\{0,1\}$, $t\in I$ and $\{t_n\}$ are non-uniform. 

For example, let $I$ be an interval which contains knot-free regions of $\varphi_2(t-t_0)$ and $\varphi_2(t-t_1)$, with $I\subset(t_{1}-L,t_{0}-\frac{L}{2})$. Moreover, let $v_0(t)=\varphi_2(t-t_0)$ for $t\in I$ and $v_1(t)=\varphi_2(t-t_{1})$ for $t\in I$.
The elements $v_0(t)$ and $v_1(t)$ are linear combinations of $e^{-\alpha_0t}$ and $e^{-\alpha_1t}$, and therefore belong to the vector space spanned by these two exponentials. Moreover, $v_0(t)$ and $v_1(t)$ are linearly independent and so, form a basis of that vector space, since $t_{1} \neq t_0$.
Hence, using a linear combination of $v_0$ and $v_1$, we can uniquely represent any vector in this space, including $e^{-\alpha_0t}$ and $e^{-\alpha_1t}$. Therefore, in the interval $I$ where there are no knots, we can find unique coefficients $c_{m,0}^{I}$ and $c_{m,1}^{I}$ such that Eq. (\ref{eq:exp_spline non-uniform}) holds for $m\in \{0,1\}$.

Similarly, reproduction of two different exponentials is possible on any time interval spanned by knot-free regions of at least two shifted E-splines. Note that for different intervals $I_1$ and $I_2$, the solution to Eq. (\ref{eq:exp_spline non-uniform}) differs, i.e. $c_{m,n}^{I_1}\neq c_{m,n}^{I_2}$. This is highlighted in Fig.~\ref{fig:exp_recon}, where exponential reproduction is possible in the regions $I_1$ and $I_2$, but using a different linear combination of the E-splines that overlap these regions.
\begin{figure}[htb]
\vspace{-0.5em}
\centering
\includegraphics[width=0.42\textwidth]{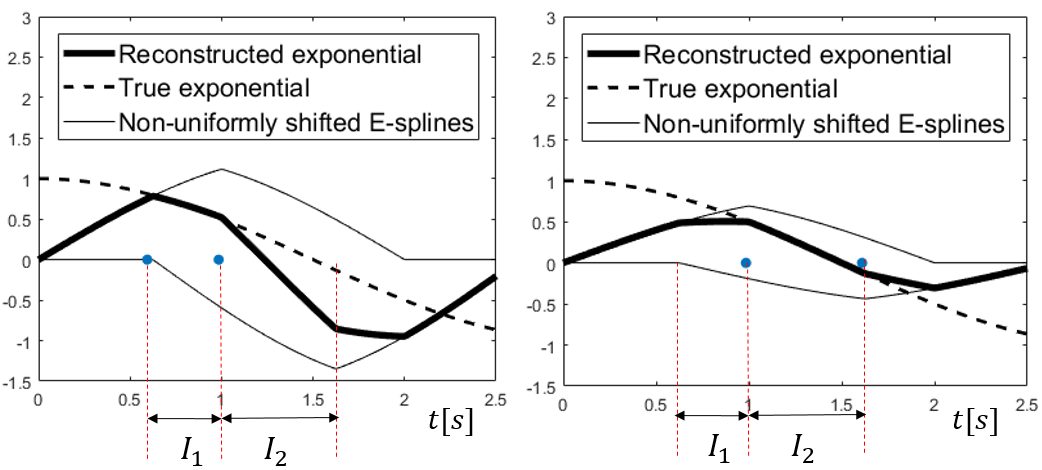}
\caption{Reproduction of $\Re\{e^{j\frac{2\pi}{5}t}\}$ in two different intervals, $I_1=(0.625, 1)s$ and $I_2=(1, 1.625)s$, overlapped by continuous regions of two non-uniformly shifted second-order E-splines.}
\label{fig:exp_recon}
\vspace{-0.5em}
\end{figure}

By using the same argument we can prove similar results for the general case of an E-spline of order $P$ and support of length $L$ which can reproduce $P$ different exponentials. Specifically, within an interval $I$ containing knot-free regions of at least $P$ non-uniformly shifted E-splines, we can reproduce $P$ different exponentials, such that Eq. (\ref{eq:exp_spline non-uniform}) holds for $N \geq P$ and $m \in \{0,1,...,P-1\}$. This is due to the fact that any knot-free interval of an E-spline of order $P$ is a linear combination of $P$ different exponentials.

Finally, let us consider the kernel $(\varphi_P*g)(t)$, where $\varphi_P(t)$ is a $P$-order E-spline which can reproduce the exponentials $e^{\alpha_mt}$, for $m=0,1,...,P-1$.
Furthermore, let us assume that $g(t)$ has compact support $[-\epsilon, \epsilon]$.
The support of $\varphi_P(t)$ is $[-L, 0]$ and its knots are located at instants $(-L + n \frac{L}{P})$ with $n \in \mathbb{N}$. Then, in the knot-free interval $(\frac{-L}{P},0)$ we can compactly represent  $\varphi_P(t)=\sum_{m=0}^{P-1} a_m e^{\alpha_mt}$, for some coefficients $a_m$.

If the length of the support of $g(t)$ satisfies $2\epsilon \leq \frac{L}{P}$ and $\int g(t)e^{-\alpha_mt}dt$ exists, then $(\varphi_P*g)(t)$ is given by:
\begin{equation} \label{eq: convo_pulses_1}
(\varphi_P*g)(t) = \sum_{m=0}^{P-1} a_mG_m e^{\alpha_mt},
\end{equation}
where $t \in (\epsilon -\frac{L}{P}, -\epsilon)$ and $G_m= \int_{-\epsilon}^{\epsilon} g(t)e^{-\alpha_mt}dt$.

Therefore, in the interval $(\epsilon -\frac{L}{P}, -\epsilon)$, $(\varphi_P*g)(t)$ is a linear combination of $P$ exponentials. As a result, within $I=(t_{N-1}+\epsilon-\frac{L}{P}, t_1- \epsilon)$, $(\varphi_P*g)(t)$ and its non-uniform shifts can reproduce $P$ exponentials, as follows:
\small
\begin{equation}
\label{eq:pulses_small_epsilon}
\sum_{n=0}^{N-1} c_{m,n}^I (\varphi_P*g)(t-t_n)=e^{-\alpha_m t},
\end{equation}
\normalsize
where $N \geq P$, and $m \in \{ 0,1,...,P-1\}$.

\section{Perfect Recovery of Signals from Timing Information obtained with a Crossing TEM}
\label{section:Perfect Recovery of Signals from Timing Information obtained with a Crossing Time Encoding Machine}
In the previous section, we showed how time encoding maps the input signal to a sequence of non-uniform samples, which depend on the signal and non-uniform shifts of the sampling kernel.
In what follows we assume that the sampling kernel $\varphi(t)$ is a second-order exponential reproducing spline, such that a linear combination of its non-uniformly shifted versions can reproduce two different exponentials, as described in Section \ref{subsubsection: Exponential reproducing kernels}. Moreover, $\varphi(t)$ has compact support of length $L$, with $\varphi(t)=0$ for $t \notin [-L, 0]$ and the two frequencies that this kernel can reproduce are $\alpha_0=j\omega_0$ and $\alpha_1=-\alpha_0$, which ensures that $\varphi(t)$ is a real-valued function. 

Under these assumptions, we study the problem of reconstructing different classes of non-bandlimited signals, from timing information obtained using the crossing TEM in Fig.~\ref{fig:comparator}. Specifically we present a method for estimation of an input Dirac. Here we show that two output spikes are sufficient to retrieve the input, provided they are located \textit{suitably} close to the Dirac, which is guaranteed by imposing conditions on the frequency and amplitude of the comparator's sinusoidal reference signal. 
We then extend this to retrieval of streams of Diracs and bursts of Diracs. While the reconstruction method proposed to retrieve one Dirac might not be unique, it has the advantage that it naturally generalizes to multiple Diracs.
We note that similar results could be proved using polynomial splines, but we omit these proofs to keep the focus of the paper on E-splines.
\vspace{-0.5em}
\subsection{Estimation of an Input Dirac}
\label{subsection:Estimation of an Input Dirac}
Let us consider a single input Dirac of the form:
\small
\begin{equation}
\vspace{-0.5em}
\label{eq:single input Dirac}
x(t)=x_1 \delta(t-\tau_1).
\end{equation}
\normalsize

\begin{prop}
\label{prop: Comparator 1 Dirac}
Let the sampling kernel $\varphi(t)$ be a second-order E-spline of support of length $L$, defined as in Eq. (\ref{eq: first_order_e_spline_definition_1}), with $\omega_1=-\omega_0$ and $0<\omega_0\leq\frac{\pi}{L}$. The filter $\varphi(t)$ and its non-uniform shifts can reproduce the exponentials $e^{j \omega_0 t}$ and $e^{j \omega_1 t}$ as in Eq. (\ref{eq:exp_spline non-uniform}). In addition, suppose that the reference signal $g(t)=A\cos(w_st)$ of the comparator in Fig.~\ref{fig:comparator} has amplitude $A>|x_1|$ and period $T_s<\frac{2L}{5}$.
Then, the timing information $\{t_1, t_2,...,t_N\}$ provided by the comparator TEM is a sufficient representation of an input Dirac as in Eq. (\ref{eq:single input Dirac}).
\end{prop}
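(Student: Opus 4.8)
The plan is to reduce the statement to the injectivity of an explicit finite linear map followed by a one–mode Prony step. \textbf{Step 1 (reformulation as finite non-uniform sampling).} By Eq.~(\ref{eq:non-uniform samples comparator}), knowing $\{t_n\}$ is the same as knowing the samples $y_n=g(t_n)=\langle x(t),\varphi(t-t_n)\rangle$, and for the single Dirac of Eq.~(\ref{eq:single input Dirac}) these collapse to $y_n=x_1\,\varphi(\tau_1-t_n)$. From the closed form Eq.~(\ref{eq: first_order_e_spline_definition_1}) with $\alpha_1=-\alpha_0=j\omega_0$ one checks that $\varphi$ is nonnegative on $[-L,0]$ with $\sup_t\varphi(t)\le 1$ (since $\tfrac{2\sin(\omega_0 L/2)}{\omega_0 L}\le 1$), hence the filtered signal $y(t)=x_1\varphi(\tau_1-t)$ is supported exactly on the \emph{active window} $(\tau_1,\tau_1+L)$ and $|y(t)|\le|x_1|<A$ everywhere.

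\textbf{Step 2 (the conditions produce two well-placed spikes).} Because $|y(t)|<A=\max|g|$, the function $y(t)-g(t)$ is negative at every maximum of $g(t)=A\cos(\omega_s t)$ and positive at every minimum, so it vanishes at least once between consecutive extrema; hence there is at least one spike in every interval of length $T_s/2$, and any length-$L$ window contains at least $\lfloor 2L/T_s\rfloor-1$ spikes. The hypothesis $T_s<2L/5$ then forces at least four spikes inside $(\tau_1,\tau_1+L)$, from which I would select two, $t_a<t_b$, together with a nonempty open knot-free interval $I\ni\tau_1$ on which both $\varphi(t-t_a)$ and $\varphi(t-t_b)$ coincide with (linearly independent) linear combinations of $e^{j\omega_0 t}$ and $e^{-j\omega_0 t}$: shrink $I$ around $\tau_1$ inside $(t_b-L/2,t_a)$ when $t_a,t_b\in(\tau_1,\tau_1+L/2)$, and inside $(t_a-L/2,t_a)\cap(t_b-L,t_b-L/2)$ when $t_a,t_b$ straddle $\tau_1+L/2$, discarding in each case the finitely many remaining knots. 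Moreover, outside $(\tau_1,\tau_1+L)$ every spike satisfies $g(t_n)=0$, i.e. lies on the known zero-grid of $g$, so the spikes deviating from that grid localize $\tau_1$ to a known interval of length at most $T_s<L$.

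\textbf{Step 3 (moments by local exponential reproduction, then Prony).} On $I$ the local reproduction property Eq.~(\ref{eq:exp_spline non-uniform}) applies with $N=2$, kernels $\varphi(t-t_a),\varphi(t-t_b)$ and frequencies $\alpha_0=j\omega_0$, $\alpha_1=-j\omega_0$, giving coefficients with $c^I_{m,a}\varphi(t-t_a)+c^I_{m,b}\varphi(t-t_b)=e^{-\alpha_m t}$ for $t\in I$. Applying them to the samples and using $\tau_1\in I$, the quantities $s_m:=c^I_{m,a}y_a+c^I_{m,b}y_b=x_1\bigl(c^I_{m,a}\varphi(\tau_1-t_a)+c^I_{m,b}\varphi(\tau_1-t_b)\bigr)=x_1 e^{-\alpha_m\tau_1}$ are computable from the timing alone, so $s_0=x_1e^{-j\omega_0\tau_1}$, $s_1=x_1e^{j\omega_0\tau_1}$. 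These form a power-sum pair with the single mode $u=e^{j\omega_0\tau_1}$, recovered by the length-$2$ annihilating filter of Prony's method \cite{Prony} via $u^2=s_1/s_0=e^{2j\omega_0\tau_1}$; since $\tau_1$ is confined to an interval of length $<L$ and $0<\omega_0\le\pi/L$, the map $\tau_1\mapsto e^{2j\omega_0\tau_1}$ is injective there, so $\tau_1$ is determined, and then $x_1=s_0e^{j\omega_0\tau_1}$ (automatically real). Thus $\{t_n\}$ determines $(x_1,\tau_1)$ uniquely.

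The routine content is the $2\times2$ solve and the single-mode Prony step; the delicate point I expect to be the main obstacle is Step 2 — making ``the spikes lie suitably close to $\tau_1$'' precise and verifying that the constants $A>|x_1|$ and $T_s<2L/5$ genuinely guarantee a knot-free interval that simultaneously contains the \emph{unknown} $\tau_1$ and is a valid exponential-reproduction region for two of the shifted kernels, i.e. coordinating the observed spike grid $\{t_n\}$ and its induced knot grid $\{t_n,\,t_n-L/2\}$ with the location of $\tau_1$.
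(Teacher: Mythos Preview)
Your overall architecture (timing $\to$ non-uniform samples $\to$ local exponential reproduction $\to$ moments $\to$ Prony) matches the paper exactly, and your Steps~1 and~3 are essentially the paper's argument; your explicit injectivity check for $\tau_1\mapsto e^{2j\omega_0\tau_1}$ and your identification of the active window via off-grid spikes are nice touches the paper leaves implicit.

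The gap you anticipate in Step~2 is real, and your proposed case analysis does not close it. Your counting (even after fixing the small slip --- crossings are guaranteed between consecutive \emph{extrema} of $g$, not in every interval of length $T_s/2$) yields four spikes in the full window $(\tau_1,\tau_1+L)$, but to decide between your ``both-in-first-half'' interval $(t_b-L/2,t_a)$ and your ``straddling'' interval $(t_a-L/2,t_a)\cap(t_b-L,t_b-L/2)$ you must compare $t_b$ with the unknown $\tau_1+L/2$; only one of the two candidate intervals actually contains $\tau_1$, and the reproduction coefficients $c^I_{m,n}$ differ between them, so choosing wrongly gives wrong moments. The paper sidesteps this by sharpening the IVT step rather than merely counting: it tracks the sign of $h(t)=g(t)-y(t)$ at $\tau_1$, at $\tau_1+T_s/2$, and at the next maximum of $g$ (which necessarily lies in $[\tau_1+3T_s/4,\tau_1+5T_s/4]$), and concludes that the \emph{first two} crossings after $\tau_1$ both lie in $(\tau_1,\tau_1+5T_s/4]$. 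The hypothesis $T_s<2L/5$ is precisely $5T_s/4<L/2$, so the first two off-grid spikes always fall in $(\tau_1,\tau_1+L/2)$, and one may \emph{always} take $I=(t_2-L/2,t_1)$ with no case split. Replacing your global count by this pointwise sign-tracking is the missing ingredient; it is also what makes $2L/5$ the correct constant and renders the choice of $I$ algorithmic without knowledge of $\tau_1$.
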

\vspace{-1em}
\begin{proof}
From the timing information  $\{t_1, t_2\}$, we can retrieve the non-uniform output samples $y(t_1)$ and $y(t_2)$, as described in Eq. (\ref{eq:non-uniform samples comparator}).
In what follows we show that we can find a linear combination of the samples $y(t_1)$ and $y(t_2)$ to get $c_{m,1} y(t_1)+c_{m,2} y(t_2)=x_1 e^{j\omega_m \tau_1}$, for $m=0,1$, from which we can retrieve the input parameters $x_1$ and $\tau_1$.

For simplicity, suppose that the amplitude of the input Dirac satisfies $x_1>0$. In addition, the hypothesis that $ \varphi(t)$ reproduces $e^{\pm j \omega_0 t}$ with $0<\omega_0\leq\frac{\pi}{L}$ means that $0\leq\varphi(t)<1$, for $t \in [-\frac{L}{2}, 0]$. 
Then, since $0<x_1<A$, the output $y(t)=x_1 \varphi(\tau_1-t)$ of the crossing TEM satisfies $0\leq y(t)<A=\max(g(t))$, for $t \in [\tau_1, \tau_1+\frac{L}{2}]$. Since we assume $\frac{5T_s}{4}<\frac{L}{2}$, this means that $0\leq y(t)<A=\max(g(t))$, for $t \in [\tau_1, \tau_1+\frac{5T_s}{4}]$.

Let us then define the continuous function $h(t)=g(t)-y(t)$. Using Bolzano's intermediate value theorem \cite{Bolzano} and the fact that $0\leq y(t)<\max(g(t))$, we show that within the interval $(\tau_1, \tau_1+\frac{5T_s}{4}]$, the signal $h(t)$ crosses zero at least twice. In other words, $\exists t_1, t_2 \in (\tau_1, \tau_1+\frac{5T_s}{4}]$ such that $h(t_1)=h(t_2)=0$. For example, if we assume $h(\tau_1)=g(\tau_1)>0$, then $g(\tau_1+\frac{T_s}{2})<0$ and since $y(t)\geq 0$, we get $h(\tau_1+\frac{T_s}{2})=g(\tau_1+\frac{T_s}{2})-y(\tau_1+\frac{T_s}{2})<0$. Then, Bolzano's intermediate value theorem states that $\exists t_1 \in (\tau_1, \tau_1+\frac{T_s}{2}]$ such that $h(t_1)=0$. 

Using the same argument one can then show that $\exists t_2 \in (\tau_1+\frac{T_s}{2}, \tau_1+\frac{5T_s}{4}]$ such that $h(t_2)=0$. This follows from the assumption that $g(\tau_1)>0$, which implies that $\exists \epsilon \in [0, \frac{T_s}{2}]$ such that $g(\tau_1+\frac{3T_s}{4}+\epsilon)=\cos(\tau_1+\frac{3T_s}{4}+\epsilon)=\max(g(t))=A$, as highlighted in Fig.~\ref{fig:bolzano}. 
At the same time, we showed that $0\leq y(t)< A$ for $t\in [\tau_1, \tau_1+\frac{5T_s}{4}]$ and hence, we get $h(\tau_1+\frac{3T_s}{4}+\epsilon)=g(\tau_1+\frac{3T_s}{4}+\epsilon)-y(\tau_1+\frac{3T_s}{4}+\epsilon) > 0$. Since $h(\tau_1+\frac{T_s}{2})<0$ and $h(\tau_1+\frac{3T_s}{4}+\epsilon)>0$, Bolzano's intermediate value theorem guarantees that $\exists t_2 \in (\tau_1+\frac{T_s}{2}, \tau_1+\frac{3T_s}{4}+\epsilon]$ such that $h(t_2)=0$, for $\epsilon \in [0, \frac{T_s}{2}]$. Therefore, at the maximum value of $\epsilon$, we have proved that the second output spike satisfies $t_2 \in [\tau_1+\frac{T_s}{2}, \tau_1+\frac{5T_s}{4}]$.

\begin{figure}[htb]
\vspace{-0.7em}
\centering
\includegraphics[width=0.3\textwidth]{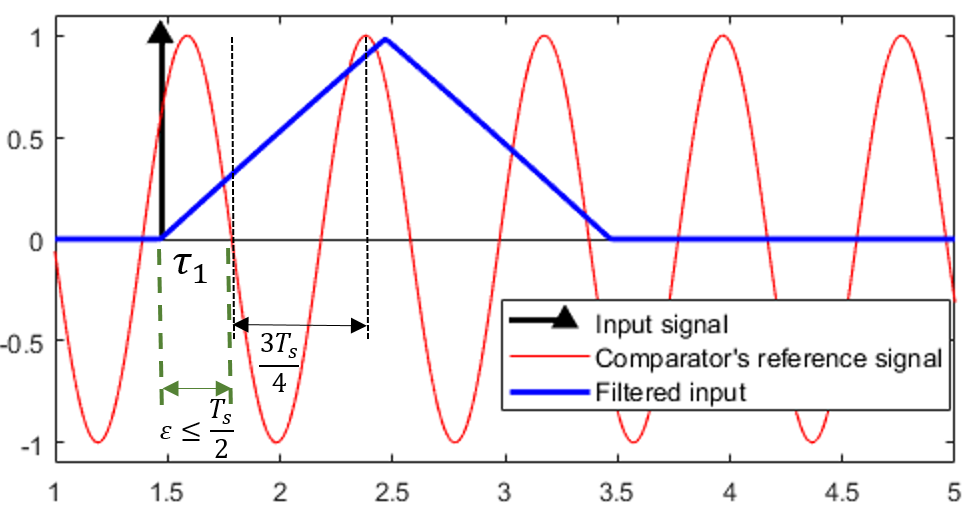}
\caption{Input Dirac located at $\tau_1$, filtered input and sinusoidal reference signal.}
\label{fig:bolzano}
\vspace{-1em}
\end{figure}

Hence, since we assume $\frac{L}{2}>\frac{5T_s}{4}$, we obtain the inequality $t_2\leq \tau_1+\frac{5T_s}{4}<\tau_1+\frac{L}{2}$. This guarantees that in a region $(\tau_1, \tau_1+\frac{L}{2})$ following a Dirac at $\tau_1$, there are at least 2 output samples, namely $y(t_1)$ and $y(t_2)$, as depicted in Fig.~\ref{fig:condition_comp_1_dirac_with_times}.

Then, in the interval $I=(t_2-\frac{L}{2}, t_1)$, which does not contain knots of either $\varphi(t-t_1)$ or $\varphi(t-t_2)$, we can reproduce two exponentials as described in Section \ref{subsubsection: Exponential reproducing kernels}. Specifically, we can find coefficients $c_{m,n}^I$ such that:
\vspace{-0.8em}
\small
\begin{equation}
\label{eq:exp recon one dirac}
\sum_{n=1}^{2} c_{m,n}^I \varphi(t-t_n) = e^{j\omega_m t}, \text{for } m\in\{0,1\}.
\vspace{-0.5em}
\end{equation}
\normalsize

We then define the signal moments $s_m$ as follows:
\small
\begin{equation}
\vspace{-0.5em}
\label{eq:moments one dirac}
\begin{split}
s_m &= \sum_{n=1}^{2} c_{m,n}^I y(t_n) \aeq  \sum_{n=1}^{2} c_{m,n}^I \langle x(t), \varphi(t-t_n)\rangle\\
&\beq \int_{-\infty}^{\infty} x(t) \sum_{n=1}^{2} c_{m,n}^I  \varphi(t-t_n) dt \\
&\ceq \int_{-\infty}^{\infty}  x_1 \delta(t-\tau_1)  \sum_{n=1}^{2} c_{m,n}^I  \varphi(t-t_n) dt \\
&\deq \int_I x_1 \delta(t-\tau_1)  e^{j\omega_m t}dt =  x_1 e^{j\omega_m \tau_1}=b_1 u_1^m,
\end{split}
\vspace{-1.5em}
\end{equation}
\normalsize
\vspace{-0.3em}
where $b_1 \defeq x_1 e^{j \omega_0 \tau_1}$, $u_1 \defeq e^{j\lambda \tau_1}$, the frequencies $\omega_m = \omega_0 + \lambda m$, for $m\in\{0,1\}$, and $\lambda=-2\omega_0$.

\begin{figure}[htb]
\vspace{-0.7em}
\centering
\includegraphics[width=0.35\textwidth]{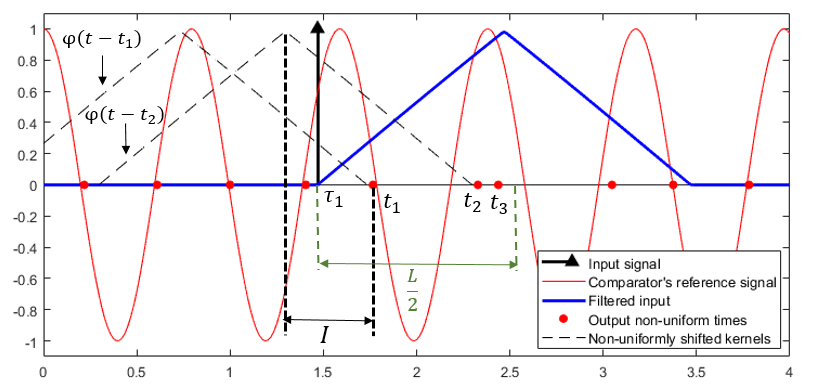}
\caption{Time encoding of an input Dirac located at $\tau_1 \in I$, when $\frac{L}{2}>\frac{5T_s}{4}$. In the interval $I$, $\varphi(t-t_1)$ and $\varphi(t-t_2)$ have no knots.}
\label{fig:condition_comp_1_dirac_with_times}
\vspace{-1em}
\end{figure}

In these derivations, $(a)$ follows from Eq. (\ref{eq:non-uniform samples comparator}), $(b)$ from the linearity of the inner product, and $(c)$ from Eq. (\ref{eq:single input Dirac}). Moreover, $(d)$ holds since $t_1,t_2 \in (\tau_1, \tau_1+\frac{L}{2})$, which means $\tau_1 \in I$, and from the local exponential reproduction property of $\varphi(t)$ in the region $I$, as given in Eq. (\ref{eq:exp recon one dirac}). 

The unknowns $\{b_1,u_1\}$ can be uniquely retrieved from the signal moments, as follows: $b_1 = s_0$ and $u_1 = \frac{s_1}{s_0}$. 
More generally, the parameters $\{b_1,u_1\}$ can also be found using the annihilating filter method \cite{Stoica104835}, also known as Prony's method \cite{Prony} (see Appendix \ref{appendix:Prony's method}). Then, we get the Dirac's amplitude and location, using $b_1 = x_1 e^{j \omega_0 \tau_1}$ and $u_1 = e^{j\lambda \tau_1}$.
\end{proof}

\vspace{-1em}
\subsection{Estimation of a Stream of Diracs}
Let us now consider the case of a stream of Diracs:
\small
\begin{equation}
\vspace{-0.7em}
\label{eq:input stream Diracs}
x(t) = \sum_{k}x_k \delta(t-\tau_k).
\vspace{-0.1em}
\end{equation}
\normalsize

\begin{prop}
\label{prop: Comparator Stream Diracs}
Let the sampling kernel $\varphi(t)$ be a second-order E-spline of support of length $L$, defined as in Eq. (\ref{eq: first_order_e_spline_definition_1}), with $\omega_1=-\omega_0$ and $0<\omega_0\leq\frac{\pi}{L}$. The filter $\varphi(t)$ and its non-uniform shifts can reproduce the exponentials $e^{j \omega_0 t}$ and $e^{j \omega_1 t}$ as in Eq. (\ref{eq:exp_spline non-uniform}). In addition, suppose that the reference signal $g(t)=A\cos(w_st)$ of the comparator has amplitude $A>|x_k|$, $\forall k$ and period $T_s<\frac{2L}{5}$, and that the minimum spacing between consecutive Diracs is larger than $L$.
Then, the timing information $\{t_1, t_2, ..., t_N\}$ provided by the device shown in Fig.~\ref{fig:comparator} is a sufficient representation of a stream of Diracs as in Eq. (\ref{eq:input stream Diracs}).
\end{prop}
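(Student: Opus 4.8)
The plan is to reduce the stream-of-Diracs problem to the single-Dirac case of Proposition~\ref{prop: Comparator 1 Dirac}, exploiting that the kernel $\varphi(t)$ has support of length $L$ while consecutive Diracs are more than $L$ apart. Writing $y(t)=\langle x(t),\varphi(t-\cdot)\rangle=\sum_k x_k\varphi(\tau_k-t)$, the contribution of the $k$-th Dirac is supported on $[\tau_k,\tau_k+L]$, and since $\tau_{k+1}-\tau_k>L$ for all $k$, these bumps are pairwise disjoint. In particular $y(t)=x_k\varphi(\tau_k-t)$ on the whole interval $(\tau_k,\tau_k+L)$, so around each Dirac the output coincides exactly with that of the single-Dirac TEM analysed before.

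Next I would apply the Bolzano argument of Proposition~\ref{prop: Comparator 1 Dirac} to each Dirac separately (the argument is symmetric in the sign of $x_k$). Since $A>|x_k|$ and $0\le\varphi(t)<1$ on $[-\tfrac L2,0]$, we have $|y(t)|<A=\max g(t)$ on $[\tau_k,\tau_k+\tfrac{5T_s}{4}]$, using $\tfrac{5T_s}{4}<\tfrac L2$; hence $h(t)=g(t)-y(t)$ has at least two zeros $t_{n_k},t_{n_k+1}\in(\tau_k,\tau_k+\tfrac{5T_s}{4}]\subset(\tau_k,\tau_k+\tfrac L2)$. This yields, for every $k$, at least two output spikes in $(\tau_k,\tau_k+\tfrac L2)$ whose amplitudes $y(t_{n_k})=g(t_{n_k})$ and $y(t_{n_k+1})=g(t_{n_k+1})$ are known from the timing data.

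For each $k$ I would then form the knot-free interval $I_k=(t_{n_k+1}-\tfrac L2,\,t_{n_k})$, on which neither $\varphi(t-t_{n_k})$ nor $\varphi(t-t_{n_k+1})$ has a knot and on which this pair of shifted E-splines reproduces $e^{j\omega_0 t}$ and $e^{j\omega_1 t}$ (Section~\ref{subsubsection: Exponential reproducing kernels}, Eq.~(\ref{eq:exp_spline non-uniform})). The point to verify is that $I_k$ contains $\tau_k$ and no other Dirac: from $t_{n_k}>\tau_k$ and $t_{n_k+1}-\tfrac L2<\tau_k+\tfrac{5T_s}{4}-\tfrac L2<\tau_k$ we get $\tau_k\in I_k$, while $I_k\subset(\tau_k-\tfrac L2,\tau_k+\tfrac{5T_s}{4})$ together with $\tau_{k-1}<\tau_k-L$ and $\tau_{k+1}>\tau_k+L$ forces $\tau_{k\pm1}\notin I_k$. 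Repeating the moment computation (\ref{eq:moments one dirac}) with coefficients $c_{m,n}^{I_k}$, and using that only $x_k\delta(t-\tau_k)$ lies in $I_k$, I obtain $s_m^{(k)}=\sum_{n\in\{n_k,n_k+1\}}c_{m,n}^{I_k}y(t_n)=x_k e^{j\omega_m\tau_k}$ for $m\in\{0,1\}$; then $x_k e^{j\omega_0\tau_k}=s_0^{(k)}$ and $e^{j\lambda\tau_k}=s_1^{(k)}/s_0^{(k)}$ recover $x_k$ and $\tau_k$ exactly as in Proposition~\ref{prop: Comparator 1 Dirac}.

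The main obstacle is the bookkeeping of deciding, from the timing sequence alone, which spikes belong to which Dirac, i.e.\ locating the Diracs before reconstructing them. I would handle this by noting that on any subinterval where $y(t)\equiv0$ (which occurs on the gaps $(\tau_k+L,\tau_{k+1})$ between consecutive bumps) the spikes are precisely the zero-crossings of $g(t)$, hence uniformly spaced by $T_s/2$; a Dirac therefore manifests as a local departure from this uniform pattern, and the first two spikes of each such cluster supply the pair $(t_{n_k},t_{n_k+1})$ used above. Once the clustering is justified, reconstruction is entirely local and the proposition follows by applying Proposition~\ref{prop: Comparator 1 Dirac} to each $k$.
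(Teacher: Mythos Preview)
Your local reconstruction is correct and matches the paper's: disjoint bumps, two crossings in $(\tau_k,\tau_k+\tfrac L2)$ via Bolzano, the knot-free interval $I_k$ containing only $\tau_k$, two moments, and Prony. The one substantive difference is the bookkeeping step you flag as the obstacle. The paper does not attempt to cluster spikes from the spacing pattern; it proceeds \emph{sequentially}: recover $\tau_1$ from the first two non-zero samples, then take the first two non-zero samples located after time $\tau_1+L$, recover $\tau_2$, and iterate. Since $y\equiv 0$ on $(\tau_1+L,\tau_2)$ while $\varphi>0$ on $(-L,0)$, every sample in the gap has value $g(t_n)=0$ and every sample in $(\tau_2,\tau_2+L)$ has non-zero value, so the first two non-zero samples after $\tau_1+L$ are automatically the first two crossings after $\tau_2$, which lie in $(\tau_2,\tau_2+\tfrac L2)$ by the same Bolzano count. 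This sidesteps your chicken-and-egg problem entirely and is more robust than your spacing criterion: when $\tau_{k+1}-\tau_k-L<T_s/2$ the gap may contain no zero-crossing of $g$ at all, so adjacent clusters need not be separated by any ``uniform'' run and your detection can fail; the sequential scheme uses the already-estimated $\tau_k+L$ as a marker and never needs to see a separator.
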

\begin{proof}
The input stream of Diracs can be sequentially estimated as follows. The first Dirac $x_1 \delta(t-\tau_1)$ can be uniquely estimated using the first two non-zero samples $y(t_1)$ and $y(t_2)$, as presented in Section \ref{subsection:Estimation of an Input Dirac}. Once we know $\tau_1$, we retrieve the first two non-zero samples $y(t_n)$ and $y(t_{n+1})$ located after $\tau_1+L$, and use these to estimate the second Dirac $x_2 \delta(t-\tau_2)$. We then sequentially retrieve the next Dirac using the first two non-zero samples located after $\tau_2+L$, as illustrated in Fig.~\ref{fig:comp_iter}. 
In what follows we show that once $x_1\delta(t-\tau_1)$ has been estimated, we can use $y(t_n)$ and $y(t_{n+1})$ to estimate the second Dirac in the stream. Since we assume that the separation between input Diracs is larger than the length $L$ of the kernel's support, then the location $\tau_2$ of the second Dirac satisfies $\tau_1+L<\tau_2<t_n$.
Moreover, provided the period of the comparator's signal satisfies $T_s<\frac{2L}{5}$, Bolzano's intermediate value theorem \cite{Bolzano} guarantees that $y(t_n),y(t_{n+1}) \in (\tau_2, \tau_2+\frac{L}{2})$, as previously outlined in Section \ref{subsection:Estimation of an Input Dirac}. Then, the interval $I=(t_{n+1}-\frac{L}{2}, t_n)$ contains no knots of either $\varphi(t-t_n)$ or $\varphi(t-t_{n+1})$, and perfect exponential reproduction can be achieved. Hence we can compute the signal moments using similar derivations as in Eq. (\ref{eq:moments one dirac}):
\small
\begin{equation*}
s_m = c_{m,n}^I y(t_n) +  c_{m,n+1}^I y(t_{n+1}) = x_2 e^{j\omega_m \tau_2}.
\end{equation*}
\normalsize

Finally, we can estimate $x_2$ and $\tau_2$ from $s_m$, using Prony's method.
Once the second Dirac has been estimated, we use subsequent non-uniform output samples after $\tau_2+L$ in order to sequentially retrieve the next Diracs.
\end{proof}
\vspace{-0.5em}
The time encoding  of the stream of Diracs is depicted in Fig.~\ref{fig:comp_iter}. Here, the filter is a second-order E-spline, of support length $L=2$, which can reproduce the exponentials $e^{\pm j \frac{\pi}{3}t}$. The frequency of the comparator's test signal is $f_s=1.26>\frac{5}{2L}$ and the separation between Diracs is at least $L=2$. The amplitudes and locations of the estimated Diracs are exact to numerical precision.

\begin{figure}[htb]
\vspace{-1em}
\centering
\includegraphics[width=0.32\textwidth]{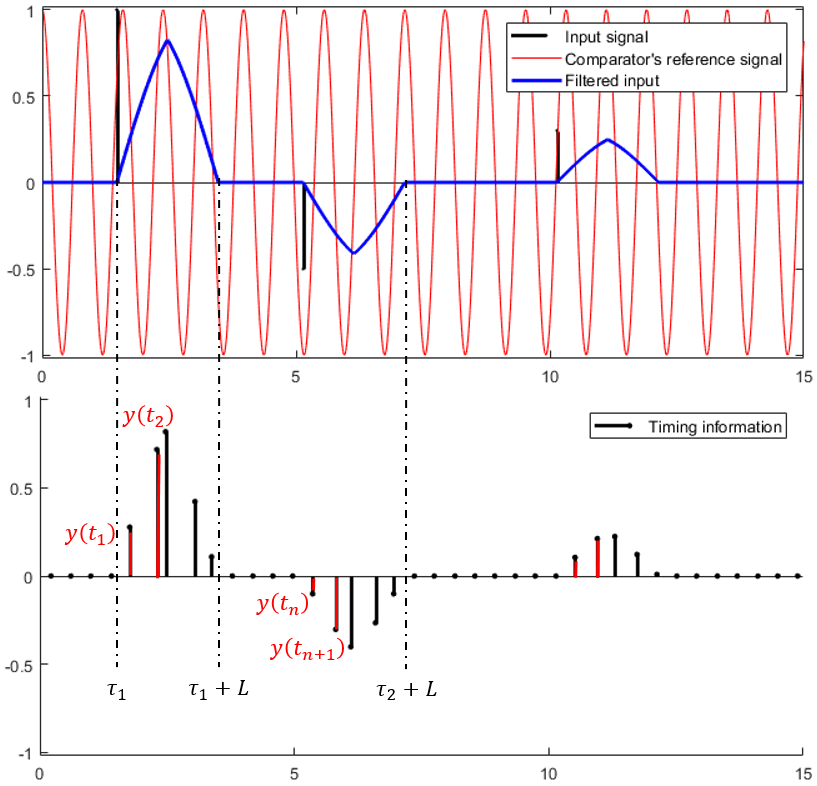}
\caption{Sampling of a stream of Diracs using the crossing TEM. The input signal, filtered input and sinusoidal reference signal are depicted in the top plot, and the output non-uniform samples in the bottom plot. The output samples used to retrieve each Dirac in the input stream are highlighted in red.}
\label{fig:comp_iter}
\vspace{-1.5em}
\end{figure}

\addtocounter{footnote}{-1}\let\thefootnote\svthefootnote
\subsection{Multi-channel Estimation of Bursts of Diracs}
Let us now consider a sequence of bursts of $K$ Diracs:
\small
\begin{equation}
\label{eq:input burst Diracs}
x(t) = \sum_b \sum_{k=1}^K x_{b,k}\delta(t-\tau_{b,k}),
\end{equation}
\normalsize
where the amplitudes $x_{b,k}$ in the same burst $b$ have the same sign and satisfy $|x_{b,k}|<A_{max}$.

\begin{prop}
\label{prop: Comparator Bursts Diracs}
Let us consider a system of $M \geq K$ TEM devices as in Fig.~\ref{fig:comparator}.The filter $\varphi(t)$ of the $m^{th}$ TEM is a second-order E-spline whose support has length $L$, and which can reproduce two different exponentials, $e^{j\omega_{m_0}t}$ and $e^{j\omega_{m_1}t}$, with $\omega_{m_0}=\omega_{0}+\lambda m$, $\lambda=\frac{-2\omega_0}{2M-1}$, $0<\omega_{0} \leq \frac{\pi}{L}$, $\omega_{m_1}=-\omega_{m_0}$, and $m=0,1,...,M-1$.
Furthermore, suppose the reference signal $g(t)=A\cos(w_st)$ has amplitude $A>KA_{max}$ and period $T_s<\frac{2L}{7}$.  In addition, let us assume the spacing between consecutive bursts is larger than $L$, and the maximum separation between the last and first Dirac in any burst $b$ satisfies $\tau_{b,K}-\tau_{b,1}< \frac{T_s}{2}$. 
Then, the timing information $t_{1,m},t_{2,m},...,t_{N,m}$ for $m=0,1,...,M-1$ provided by $M$ devices as in Fig.~\ref{fig:comparator} is a sufficient representation of bursts of $K$ Diracs as in Eq. (\ref{eq:input burst Diracs}).

\end{prop}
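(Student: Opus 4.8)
The plan is to run, for each of the $M$ channels separately, the same three-step scheme used in Propositions~\ref{prop: Comparator 1 Dirac} and~\ref{prop: Comparator Stream Diracs} — (i) locate a pair of output spikes sitting next to the burst, (ii) turn a linear combination of the two associated samples into signal moments via local exponential reproduction, and (iii) invert the moments with Prony's method — and then to exploit the fact that the $M$ channels jointly reproduce $2M$ distinct exponentials, which is exactly the number of moments needed to resolve $K$ Diracs once $M\ge K$. As in Proposition~\ref{prop: Comparator Stream Diracs}, one reduces to a single isolated burst: since consecutive bursts are more than $L$ apart and every kernel has support $L$, the filtered contribution of burst $b$ to each channel is isolated and is preceded there by zero samples $y(t_n)=g(t_n)=0$, so the bursts can be processed one at a time.

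Fix burst $b$ and, by the common-sign hypothesis, assume without loss of generality $x_{b,k}>0$ for all $k$. For the $m$-th channel the reproduced frequencies satisfy $0<\omega_{m_0}\le\omega_0\le\pi/L$, which, exactly as in the proof of Proposition~\ref{prop: Comparator 1 Dirac}, gives $0\le\varphi_m(t)<1$ on $[-L/2,0]$. For $t\in[\tau_{b,K},\tau_{b,1}+L/2]$ every shifted kernel $\varphi_m(\tau_{b,k}-\cdot)$ then lies in its $(-L/2,0)$ piece, so $0\le y_m(t)=\sum_k x_{b,k}\varphi_m(\tau_{b,k}-t)<\sum_k|x_{b,k}|<A=\max g$, where the last inequality uses $|x_{b,k}|<A_{max}$ and $A>KA_{max}$. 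Running the Bolzano argument of Proposition~\ref{prop: Comparator 1 Dirac} \cite{Bolzano} on this interval, but now starting the count at $\tau_{b,K}$ rather than at $\tau_{b,1}$, yields two consecutive non-zero output spikes $t_p,t_{p+1}\in(\tau_{b,K},\tau_{b,K}+\tfrac{5T_s}{4}]$. Here the hypothesis $T_s<\tfrac{2L}{7}$ does the bookkeeping: $t_{p+1}-t_p<\tfrac{5T_s}{4}<L/2$ makes $I_m=(t_{p+1}-L/2,\,t_p)$ nonempty and free of knots of $\varphi_m(\cdot-t_p)$ and $\varphi_m(\cdot-t_{p+1})$, while $t_{p+1}-L/2<\tau_{b,1}+\tfrac{7T_s}{4}-L/2<\tau_{b,1}$ (using $\tau_{b,K}-\tau_{b,1}<T_s/2$ and $\tfrac{7T_s}{4}<L/2$) together with $t_p>\tau_{b,K}$ places the entire burst inside $I_m$.

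On $I_m$ the local exponential-reproduction property~(\ref{eq:exp_spline non-uniform}) with $N=2$ supplies coefficients reproducing $e^{j\omega_{m_i}t}$ for $i\in\{0,1\}$, so repeating the computation in~(\ref{eq:moments one dirac}) gives, from each channel, two moments $s^{(m)}_i=\sum_{k=1}^{K}x_{b,k}e^{j\omega_{m_i}\tau_{b,k}}$. The algebraic point is that the $2M$ frequencies $\{\pm\omega_{m_0}:m=0,\dots,M-1\}$, with $\omega_{m_0}=\omega_0+\lambda m$ and $\lambda=-\tfrac{2\omega_0}{2M-1}$, are precisely the $2M$ numbers $\tfrac{\omega_0}{2M-1}\{\pm1,\pm3,\dots,\pm(2M-1)\}$, i.e.\ an arithmetic progression of common difference $|\lambda|$. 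Re-indexing the $2M$ moments along this progression turns them into consecutive power sums $\hat s_\ell=\sum_{k=1}^{K}\gamma_k w_k^{\ell}$, $\ell=0,\dots,2M-1$, with $\gamma_k=x_{b,k}e^{-j\omega_0\tau_{b,k}}$ and $w_k=e^{j|\lambda|\tau_{b,k}}$; the $w_k$ are pairwise distinct because $|\lambda|\,(\tau_{b,K}-\tau_{b,1})<\tfrac{2\omega_0}{2M-1}\cdot\tfrac{T_s}{2}<\tfrac{2\pi}{7(2M-1)}<2\pi$. Since $M\ge K$ there are at least $2K$ consecutive power sums, so Prony's method (Appendix~\ref{appendix:Prony's method}) uniquely recovers the $K$ pairs $(\gamma_k,w_k)$, hence every $\tau_{b,k}$ and $x_{b,k}$; iterating over $b$ completes the proof.

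I expect the spike-localisation step to be the main obstacle. Unlike in the single-Dirac case, the burst now has positive width (bounded by $T_s/2$), which both shortens the usable post-burst window and forces the selected knot-free interval to straddle all of $\tau_{b,1},\dots,\tau_{b,K}$; pinning down a spike pair that simultaneously lies in a common knot-free region of the two shifted kernels and encloses the whole burst is what makes every inequality tight, and it is exactly why the bound must sharpen from $T_s<\tfrac{2L}{5}$ to $T_s<\tfrac{2L}{7}$ and why the constant-sign and $A>KA_{max}$ hypotheses are needed (to keep $y_m$ non-negative and strictly below $\max g$ throughout that window). The moment computation and the arithmetic-progression/Prony argument are then routine.
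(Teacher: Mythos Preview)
Your proposal is correct and follows essentially the same route as the paper's proof: isolate one burst, locate two usable output spikes after $\tau_{b,K}$ but before $\tau_{b,1}+L/2$, reproduce two exponentials locally on a knot-free interval to extract two moments per channel, and then observe that the $2M$ frequencies $\{\pm\omega_{m_0}\}$ form an arithmetic progression of step $|\lambda|$ so that Prony on the stacked moments recovers all $K$ Diracs. The paper does exactly this (Appendix~\ref{Appendix: Comparator Bursts Diracs}), using the second and third non-zero samples $y(t_2),y(t_3)$ of each channel and the interval $I=(t_3-L/2,t_2)$.

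The only point where you diverge from the paper is the spike-localisation bookkeeping, and there your citation of ``the Bolzano argument of Proposition~\ref{prop: Comparator 1 Dirac} starting at $\tau_{b,K}$'' is slightly off. That argument uses $y(\tau_1)=0$ to pin down the sign of $h(\tau_1)=g(\tau_1)$ at the starting point; at $\tau_{b,K}$ one has $y_m(\tau_{b,K})>0$ in general, so the case analysis in Proposition~\ref{prop: Comparator 1 Dirac} does not transfer verbatim. Your claim that two crossings occur in $(\tau_{b,K},\tau_{b,K}+5T_s/4]$ is nonetheless true (any length-$5T_s/4$ window where $0\le y<A$ forces at least two sign changes of $g-y$, by looking at the two or three extrema of $g$ together with the endpoints), but it requires a fresh check rather than a direct appeal to Proposition~\ref{prop: Comparator 1 Dirac}. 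The paper sidesteps this by starting the count at $\tau_{b,1}$, where $y_m(\tau_{b,1})=0$ \emph{does} hold, obtaining three crossings in $(\tau_{b,1},\tau_{b,1}+7T_s/4)$, and then noting that at most one of them can fall in the sub-$T_s/2$ interval $(\tau_{b,1},\tau_{b,K})$; this has the added benefit of making the choice of spikes operational (always the 2nd and 3rd non-zero samples), whereas your $t_p$ is defined relative to the unknown $\tau_{b,K}$.
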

\begin{proof}
See Appendix \ref{Appendix: Comparator Bursts Diracs}.
\end{proof}

We summarize the results in this section by showing possible choices of the hyperparameters of the crossing TEM, and how they influence the density of output samples.
This relationship is presented in Table \ref{table:table4}, for the case of a sequence of bursts of $K$ Diracs. Here, $M$ is the minimum number of channels, $P$ is the order of the sampling kernel for each channel, $L$ is the length of the support of the sampling kernel and $f_s^{min}$ the minimum frequency of the comparator's sinusoidal reference. The table shows both the average sample density, as well as the ideal sample density\footnote{The ideal sample density is computed under the assumption that two samples are necessary to reconstruct one Dirac and that on average there are $K$ Diracs in an interval $L+\epsilon$ with $\epsilon>0$.} required for perfect estimation of each of the bursts of $K$ Diracs.

\small
\begin{table}[h]
\begin{center}
\caption{Choice of hyperparameters of the crossing TEM for reconstructing bursts of $K$ Diracs.
 \label{table:table4}}
\setlength\extrarowheight{5pt} 
\hspace{-0.5em}\begin{tabular}{|c|c|c|c|P{1.7cm}|P{1.45cm}|P{1.45cm}|} 
        \hline
      $K$ & $M$ & $P$ & $L$ & $f_s^{min}$ & Ideal sample density (samples/s)& Sample density (samples/s)\\
      \hline
      1 & 1 &  $2$ & $2$\scriptsize{ sec} & $\frac{5}{2L}=1.25$\scriptsize{ Hz} & $\frac{2}{L+\epsilon}$& $2f_s = 2.5$  \\
        \hline
      2 & 2& $2$ & $2$\scriptsize{ sec}  & $\frac{7}{2L}=1.75$\scriptsize{ Hz} &  $\frac{4}{L+\epsilon}$& $2Mf_s =7$  \\
        \hline
      4 & 4& $2$ & $2$\scriptsize{ sec}  & $\frac{7}{2L}=1.75$\scriptsize{ Hz} &  $\frac{8}{L+\epsilon}$& $2Mf_s =14$  \\
        \hline
\end{tabular}
\end{center}

\vspace{-1em}
\end{table}
\normalsize

We conclude this section by making the observation that the number of redundant samples of the crossing TEM is large, since samples are recorded even when the input is zero. In this case, output samples are recorded at the time instants when the sinusoidal reference signal crosses zero.
In what follows, we aim to use the same decoding framework, however with a more efficient acquisition device, the integrate-and-fire TEM.

\section{Perfect Recovery from Timing Information obtained with an Integrate-and-fire TEM}
\label{section:Perfect Recovery of Signals from Timing Information obtained with an Integrate-and-fire System}
We now shift our focus on the integrate-and-fire TEM in Fig.~\ref{fig:integrate_and_fire_model_2}.
In particular, we show how to perfectly estimate an input Dirac, and extend this method to streams and bursts of Diracs, streams of pulses as well as piecewise constant signals. The retrieval of these signals from their timing information is perfect, provided the threshold of the trigger comparator is small enough to ensure a sufficient density of output samples.
As it will become evident in Section \ref{section:Density of Non-uniform Samples}, an important feature of the integrate-and-fire model is that it can be more efficient than the comparator or a system based on uniform sampling, in the case of input signals with a small number of Diracs, because it leads to a smaller number of samples.
\vspace{-1.2em}
\subsection{Estimation of an Input Dirac}
\label{subsection:Estimation of an Input Dirac Integrator}
\begin{prop}
\label{prop:Estimation of an Input Dirac Integrator}
Let the sampling kernel $\varphi(t)$ be a second-order E-spline of support of length $L$, defined as in Eq. (\ref{eq: first_order_e_spline_definition_1}), with $\omega_1=-\omega_0$ and $0<\omega_0\leq\frac{\pi}{L}$. The filter $\varphi(t)$ and its non-uniform shifts can reproduce the exponentials $e^{j \omega_0 t}$ and $e^{j \omega_1 t}$ as in Eq. (\ref{eq:exp_spline non-uniform}). In addition, suppose that the trigger mark of the comparator satisfies:
\vspace{-0.5em}
\small
\begin{equation*}
\vspace{-0.5em}
0<C_T <\frac{A_{min}}{3}\int_{0}^{\frac{L}{2}}\varphi(-t) dt,
\end{equation*}
\normalsize
where $A_{min}$ is the absolute minimum amplitude of the Dirac.
Then, the timing information $\{t_1,t_2,...,t_N\}$ provided by the integrate-and-fire TEM in Fig.~\ref{fig:integrate_and_fire_model_2} is a sufficient representation of an input Dirac as in Eq. (\ref{eq:single input Dirac}).
\end{prop}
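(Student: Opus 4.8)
The plan is to follow the blueprint of the proof of Proposition~\ref{prop: Comparator 1 Dirac}, with two substitutions: the Bolzano argument that produced spikes for the crossing TEM is replaced by a \emph{mass-accumulation} argument for the integrator, and the shifted kernel $\varphi(t-t_n)$ is replaced by the equivalent shifted kernel $(\varphi*q_{\theta_n})(t-t_{n-1})$ of Eq.~(\ref{eq:modified samples integrator summary}). Without loss of generality take $x_1>0$ (if $x_1<0$ the argument is identical with the trigger level $-C_T$ in place of $+C_T$), so $|x_1|\geq A_{min}$. Since $\varphi(t)$ is the second-order E-spline of Eq.~(\ref{eq: first_order_e_spline_definition_1}) with $\omega_1=-\omega_0$, on its last knot-free piece $[-\frac{L}{2},0]$ it reduces to $\varphi(t)=\frac{-2}{\omega_0 L}\sin(\omega_0 t)$, which is non-negative there because $0<\omega_0\leq\frac{\pi}{L}$ forces $\omega_0 t\in[-\frac{\pi}{2},0]$. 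Consequently the filtered input $f(t)=x_1\varphi(\tau_1-t)$ is continuous, vanishes at $t=\tau_1$, and stays non-negative on $[\tau_1,\tau_1+\frac{L}{2}]$.

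First I would show that the hypothesis on $C_T$ places at least three spikes inside $(\tau_1,\tau_1+\frac{L}{2})$. Since $x(t)=0$ for $t<\tau_1$, the integrator enters $t=\tau_1$ at value $0$; and since $f\geq 0$ on $[\tau_1,\tau_1+\frac{L}{2}]$, on this interval the integrator output only increases and can fire only at $+C_T$, accumulating exactly $C_T$ between consecutive firings. The mass available on this interval is $\int_{\tau_1}^{\tau_1+\frac{L}{2}}f(\tau)\,d\tau=x_1\int_{0}^{\frac{L}{2}}\varphi(-u)\,du\geq A_{min}\int_{0}^{\frac{L}{2}}\varphi(-u)\,du>3C_T$, so the first three spikes $t_1<t_2<t_3$ all occur before $\tau_1+\frac{L}{2}$; combined with $t_1>\tau_1$ this gives $t_1,t_2,t_3\in(\tau_1,\tau_1+\frac{L}{2})$, hence $\theta_2=t_2-t_1<\frac{L}{2}$ and $\theta_3=t_3-t_2<\frac{L}{2}$. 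From the timing alone one can then form the two samples that do \emph{not} require knowledge of the unknown $\tau_1$, namely $y(t_2)=\langle x(t),(\varphi*q_{\theta_2})(t-t_1)\rangle$ and $y(t_3)=\langle x(t),(\varphi*q_{\theta_3})(t-t_2)\rangle$, via Eq.~(\ref{eq:modified samples integrator summary}); the sample $y(t_1)$ of Eq.~(\ref{eq: output_sample_1_equivalent_filter}) is discarded precisely because it involves $\tau_1$.

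Next I would invoke the local exponential reproduction of the modified kernels. Convolving the $[-\frac{L}{2},0]$ piece of $\varphi$ with the box $q_{\theta_n}$ of width $\theta_n\leq\frac{L}{2}$ is exactly the setting of Eqs.~(\ref{eq: convo_pulses_1})--(\ref{eq:pulses_small_epsilon}) with $P=2$, so $(\varphi*q_{\theta_n})(t-t_{n-1})$ is a linear combination of $e^{j\omega_0 t}$ and $e^{j\omega_1 t}$ on $(t_n-\frac{L}{2},\,t_{n-1})$. For $n=2,3$ the intersection of these regions is $I=(t_3-\frac{L}{2},\,t_1)$, which is non-empty because $t_3-t_1<\frac{L}{2}$; on $I$ both $(\varphi*q_{\theta_2})(t-t_1)$ and $(\varphi*q_{\theta_3})(t-t_2)$ are such combinations, so there exist coefficients $c_{m,n}^I$ with $\sum_{n=2}^{3}c_{m,n}^I(\varphi*q_{\theta_n})(t-t_{n-1})=e^{j\omega_m t}$ on $I$ for $m\in\{0,1\}$. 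Moreover $\tau_1\in I$, since $\tau_1<t_1$ and $t_3<\tau_1+\frac{L}{2}$.

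Finally, exactly as in Eq.~(\ref{eq:moments one dirac}), I would set $s_m=\sum_{n=2}^{3}c_{m,n}^I\,y(t_n)$; linearity of the inner product, Eq.~(\ref{eq:single input Dirac}), the reproduction identity on $I$, and $\tau_1\in I$ then yield $s_m=x_1 e^{j\omega_m\tau_1}=b_1 u_1^m$ with $b_1=x_1 e^{j\omega_0\tau_1}$, $u_1=e^{j\lambda\tau_1}$, $\lambda=-2\omega_0$. Hence $b_1=s_0$ and $u_1=s_1/s_0$ (or Prony's method \cite{Prony} in general) recover $b_1,u_1$, and therefore $x_1$ and $\tau_1$; the phase $u_1=e^{-2j\omega_0\tau_1}$ pins down $\tau_1$ without ambiguity because $\tau_1$ is already localized to $I$, whose length is below $\frac{L}{2}\leq\frac{\pi}{2\omega_0}$. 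The main obstacle is the first step: verifying that the specific bound on $C_T$ (in particular the factor $\frac13$) really does force three spikes within the \emph{single} knot-free half-support $(\tau_1,\tau_1+\frac{L}{2})$ — this rests on the sign analysis of $\varphi$ on $[-\frac{L}{2},0]$ enabled by $\omega_0\leq\pi/L$ and on the monotone-accumulation bookkeeping — while the exponential-reproduction and Prony steps are direct transcriptions of the crossing-TEM argument.
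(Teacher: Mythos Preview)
Your proposal is correct and follows essentially the same route as the paper: the mass-accumulation argument that $3C_T=\int_{\tau_1}^{t_3}f(\tau)\,d\tau<\int_{\tau_1}^{\tau_1+L/2}f(\tau)\,d\tau$ together with positivity of $\varphi(\tau_1-t)$ on $(\tau_1,\tau_1+\tfrac{L}{2})$ forces $t_3<\tau_1+\tfrac{L}{2}$, the sample $y(t_1)$ is discarded because its equivalent kernel depends on the unknown $\tau_1$, and exponential reproduction on $I=(t_3-\tfrac{L}{2},t_1)$ plus Prony recovers $(x_1,\tau_1)$. The only cosmetic difference is that the paper \emph{explicitly} writes out $(\varphi*q_{\theta_n})(t-t_{n-1})$ as a combination of $e^{j\omega_0 t}$ and $e^{j\omega_1 t}$ on $(t_n-\tfrac{L}{2},t_{n-1})$, whereas you appeal to the general convolution fact of Eqs.~(\ref{eq: convo_pulses_1})--(\ref{eq:pulses_small_epsilon}); since $q_{\theta_n}$ is supported on $[0,\theta_n]$ rather than $[-\epsilon,\epsilon]$, that citation needs a one-line shift adjustment to read off the correct validity interval, but the conclusion and the interval you state are exactly those of the paper.
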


\vspace{-0.5em}
\begin{proof}
We will prove that the upper bound on $C_T$ guarantees that the integrated filtered input $y(t)=x_1\varphi(\tau_1-t)$ reaches the trigger mark at least three times in the interval $(\tau_1, \tau_1+\frac{L}{2})$. We will then show how we can use the second and third output samples $y(t_2)$ and $y(t_3)$ to perfectly estimate the input Dirac, given that the integrated filtered input has no discontinuities in the interval $(\tau_1, \tau_1+\frac{L}{2})$.

First, we note that:
\small
\begin{equation}
\label{eq:rewriting the e-spline of order 1}
\int_{0}^{\frac{L}{2}}\varphi(-t) dt=\int_{\tau_1}^{\tau_1+\frac{L}{2}} \varphi(\tau_1-t)dt\aeq \frac{1}{\omega_0^2}[1-\cos(\omega_0\frac{L}{2})],
\end{equation}
\normalsize
where $(a)$ follows from Eq. (\ref{eq: first_order_e_spline_definition_1}), given $\alpha_0=-j\omega_0$, $\alpha_1=-j\omega_1$ and $\omega_1=-\omega_0$.

Then, we assume for simplicity that the Dirac's amplitude satisfies $x_1>0$ and re-write the upper bound on $C_T$ as:
\small
\begin{equation}
\label{eq:threshold_cond_1_dirac expanded}
3C_T <A_{\min}\int_{0}^{\frac{L}{2}}\varphi(-t) \bl \int_{\tau_1}^{\tau_1+\frac{L}{2}} x_1\varphi(\tau_1-t)dt,
\end{equation}
\normalsize
where $(b)$ follows from Eq. (\ref{eq:rewriting the e-spline of order 1}).

Furthermore, from Eq. (\ref{eq:non_uniform_samples integrator}) and (\ref{eq:non_uniform_sample_1 integrator}), we know that:
\small
\begin{equation}
\label{eq:threshold_cond_1_dirac expanded 2}
\begin{split}
3C_T  &= \int_{\tau_1}^{t_3} f(t)dt \ceq \int_{\tau_1}^{t_3}  x_1\varphi(\tau_1-t)dt,
\end{split}
\end{equation}
\normalsize
where $(c)$ follows from Eq. (\ref{eq:filtered input integrator}) and given the input signal is $x(t)=x_1\delta(t-\tau_1)$.

Then, from Eq. (\ref{eq:threshold_cond_1_dirac expanded}) and Eq. (\ref{eq:threshold_cond_1_dirac expanded 2}), we obtain the inequality:
\small
\begin{equation}
\label{eq:integrator 1 dirac first 3 samples}
\int_{\tau_1}^{t_3}  x_1\varphi(\tau_1-t)dt <\int_{\tau_1}^{\tau_1+\frac{L}{2}} x_1\varphi(\tau_1-t)dt.
\end{equation}
\normalsize

Using the hypothesis $\omega_1=-\omega_0$, together with Eq. (\ref{eq: first_order_e_spline_definition_1}), we obtain $\varphi(\tau_1-t) = \frac{\sin(\omega_0(t-\tau_1))}{\omega_0}$, for $t \in (\tau_1, \tau_1+\frac{L}{2})$. Given the assumption $0 < \omega_0\leq \frac{\pi}{L}$, we get $0 \leq \omega_0(t-\tau_1) \leq \frac{\pi}{2}$ and therefore, $\sin(\omega_0(t-\tau_1))>0$ for $t \in (\tau_1, \tau_1+\frac{L}{2})$. Hence, since $\varphi(\tau_1-t)$ is positive in the range $(\tau_1, \tau_1+\frac{L}{2})$ and using Eq. (\ref{eq:integrator 1 dirac first 3 samples}), we get that $t_3<\tau_1+\frac{L}{2}$.

As a result, the locations of the first non-uniform output samples satisfy $t_1, t_2,t_3 \in(\tau_1, \tau_1+\frac{L}{2})$, and can be computed using Eq. (\ref{eq: output_sample_1_equivalent_filter}) and Eq. (\ref{eq: output_samples_equivalent_filter}) as follows:
\small
\begin{equation*}
y(t_1) = \int_{\tau_1}^{t_1} f(t) dt=\langle x(t), (\varphi*q_{\theta_1})(t-\tau_1) \rangle,
\end{equation*}
\normalsize
\small
\begin{equation}
\label{eq:output_sample_2_for_1_dirac}
\vspace{-0.5em}
y(t_2) = \langle x(t), (\varphi*q_{\theta_2})(t-t_1) \rangle,
\end{equation}
\normalsize
\small
\begin{equation}
\label{eq:output_sample_3_for_1_dirac}
\vspace{-0.5em}
y(t_3) = \langle x(t), (\varphi*q_{\theta_3})(t-t_2) \rangle,
\end{equation}
\normalsize
for $\theta_1=t_1-\tau_1$, $\theta_2=t_2-t_1$ and $\theta_3=t_3-t_2$.

Furthermore, since $\varphi(t)$ is a second-order E-spline which can reproduce the exponentials $e^{j\omega_0t}$ and $e^{j\omega_1t}$ as in Eq. (\ref{eq: first_order_e_spline_definition_1}), and given the definition of $q_{\theta_n}(t)$ in Eq. (\ref{eq:box function integrator}), we have that:
\small
\begin{equation*}
\begin{split}
(\varphi*q_{\theta_1})(t-\tau_1) &= \frac{1}{\omega_0(\omega_0-\omega_1)}[(e^{-j\omega_0 t_1}-e^{-j\omega_0\tau_1})e^{j\omega_0 t}\\
&+(e^{-j\omega_1 t_1}-e^{-j\omega_1\tau_1})e^{j\omega_1 t}],
\end{split}
\end{equation*}
\normalsize
for $t \in (t_1-\frac{L}{2},t_1)$.

Similarly:
\vspace{-0.5em}
\small
\begin{equation*}
\begin{split}
(\varphi*q_{\theta_2})(t-t_1) &= \frac{1}{\omega_0(\omega_0-\omega_1)}[(e^{-j\omega_0 t_2}-e^{-j\omega_0t_1})e^{j\omega_0 t}\\
&+(e^{-j\omega_1 t_2}-e^{-j\omega_1t_1})e^{j\omega_1 t}],
\end{split}
\end{equation*}
\normalsize
for $t \in (t_2-\frac{L}{2},t_2)$, and
\small
\begin{equation*}
\begin{split}
(\varphi*q_{\theta_3})(t-t_2) &= \frac{1}{\omega_0(\omega_0-\omega_1)}[(e^{-j\omega_0 t_3}-e^{-j\omega_0 t_2})e^{j\omega_0 t}\\
&+(e^{-j\omega_1 t_3}-e^{-j\omega_1 t_2})e^{j\omega_1 t}],
\end{split}
\end{equation*}
\normalsize
for $t \in (t_3-\frac{L}{2},t_3)$.

The shifted kernel $(\varphi*q_{\theta_1})(t-\tau_1)$ depends on the Dirac's location $\tau_1$, and hence its shape cannot be determined a-priori. On the other hand, the shifted kernels $(\varphi*q_{\theta_2})(t-t_1)$ and $(\varphi*q_{\theta_3})(t-t_2)$ are independent of $\tau_1$ and can be written as a linear combination of the exponentials $e^{j\omega_0 t}$ and $e^{j\omega_1 t}$, for $t \in (t_3-\frac{L}{2},t_1)$. Therefore, in the interval $I=(t_3-\frac{L}{2},t_1)$, where there are no knots of either the shifted kernel $(\varphi*q_{\theta_2})(t-t_1)$ or $(\varphi*q_{\theta_3})(t-t_2)$, we can use the proof in Section \ref{subsubsection: Exponential reproducing kernels} to find the unique coefficients $c_{m,2}^I$ and $c_{m,3}^I$ such that:
\vspace{-0.5em}
\small
\begin{equation}
\label{eq:exp recon modified kernel integrator}
\sum_{n=2}^3c_{m,n}^I(\varphi*q_{\theta_n})(t-t_{n-1})= e^{j\omega_m t}, 
\end{equation}
\normalsize
for $m \in \{0,1\}$ and $t \in (t_3-\frac{L}{2},t_1)$.

Then, we can define the signal moments as:
\small
\begin{equation}
\label{eq:integrator_moments_single_dirac}
\begin{split}
s_m &= \sum_{n=2}^3 c_{m,n}^Iy(t_n) \deq x_1 \sum_{n=2}^3c_{m,n}^I(\varphi*q_{\theta_n})(\tau_1-t_{n-1})\\
&\eeq x_1 e^{j\omega_m\tau_1}, \text{ for } m\in\{0,1\}.
\end{split}
\end{equation}
\normalsize

In the derivations above, $(d)$ follows from Eq. (\ref{eq:single input Dirac}), (\ref{eq:output_sample_2_for_1_dirac}) and (\ref{eq:output_sample_3_for_1_dirac}), and $(e)$ follows from $\tau_1 \in (t_3-\frac{L}{2}, t_1)$ which is true given Eq. (\ref{eq:integrator 1 dirac first 3 samples}), and since the property in Eq. (\ref{eq:exp recon modified kernel integrator}) holds within $(t_3-\frac{L}{2}, t_1)$.
Finally, using Prony's method we can uniquely estimate parameters $x_1$ and $\tau_1$, from the two signal moments $s_m$ given by Eq. (\ref{eq:integrator_moments_single_dirac}), for $m\in \{0,1\}$ and $\omega_1=-\omega_0$.
\vspace{-0.5em}
\end{proof}
\vspace{-1em}
\subsection{Estimation of a Stream of Diracs}
\label{subsection:Estimation of a Stream of Diracs Integrator}
\begin{prop} \label{prop:Estimation of a Stream of Diracs Integrator}
Let the sampling kernel $\varphi(t)$ be a second-order E-spline of support of length $L$, defined as in Eq. (\ref{eq: first_order_e_spline_definition_1}), with $\omega_1=-\omega_0$ and $0<\omega_0\leq\frac{\pi}{L}$. The filter $\varphi(t)$ and its non-uniform shifts can reproduce the exponentials $e^{j \omega_0 t}$ and $e^{j \omega_1 t}$ as in Eq. (\ref{eq:exp_spline non-uniform}). In addition, assume that the minimum separation between consecutive Diracs is $L$ and the trigger mark of the comparator satisfies:
\small
\begin{equation}
\label{eq:threshold_cond_stream_diracs}
0<C_T <\frac{A_{\min}}{4\omega_0^2}[1-\cos(\omega_0\frac{L}{2})],
\end{equation}
\normalsize
where $A_{\min}$ is the absolute minimum amplitude of any Dirac in the input signal.

Then, the timing information $\{t_1, t_2, ..., t_N\}$ provided by the integrate-and-fire TEM in Fig.~\ref{fig:integrate_and_fire_model_2} is a sufficient representation of a stream of Diracs as in Eq. (\ref{eq:input stream Diracs}).
\end{prop}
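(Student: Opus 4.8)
The plan is to recover the Diracs one at a time, each step being reduced to a problem essentially identical to the single-Dirac case of Proposition~\ref{prop:Estimation of an Input Dirac Integrator}. The structural fact that enables this is that $\varphi$ is supported on $[-L,0]$ and consecutive Diracs are spaced by at least $L$, so the filtered input $f(t)=\sum_k x_k\varphi(\tau_k-t)$ restricted to $(\tau_k,\tau_k+L)$ equals $x_k\varphi(\tau_k-t)$ --- every other Dirac's contribution vanishes there --- and on $(\tau_k,\tau_k+\tfrac{L}{2})$ it equals $\tfrac{x_k}{\omega_0}\sin(\omega_0(t-\tau_k))$, exactly as in the proof of Proposition~\ref{prop:Estimation of an Input Dirac Integrator}. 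First I would estimate $(x_1,\tau_1)$: since $x(t)=0$ for $t<\tau_1$ the integrator is at rest, so the proof of Proposition~\ref{prop:Estimation of an Input Dirac Integrator} applies verbatim, and hypothesis~(\ref{eq:threshold_cond_stream_diracs}) --- which is stronger than the bound there, as $\tfrac14<\tfrac13$ --- guarantees three spikes in $(\tau_1,\tau_1+\tfrac{L}{2})$ and hence recovery of the first Dirac.

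For the inductive step, assume $\tau_k$ is known; I would then discard every spike occurring up to time $\tau_k+L$ and show that at least three more spikes occur afterwards, the first three of which, $t_n<t_{n+1}<t_{n+2}$, all lie in $(\tau_{k+1},\tau_{k+1}+\tfrac{L}{2})$. Because $f\equiv0$ on $(\tau_k+L,\tau_{k+1})$, the integrator output is frozen there at a value $v$ with $|v|<C_T$, so no spike is fired before $\tau_{k+1}$ and $t_n$ is the first spike after it. The single new ingredient compared with Proposition~\ref{prop:Estimation of an Input Dirac Integrator} is this carry-over $v$: its sign (that of $x_k$) need not match that of $x_{k+1}$, so in the worst case the integrator must traverse a distance $2C_T$ (from $\pm C_T$ to $\mp C_T$) before firing $t_n$, and then $C_T$ more for each of $t_{n+1}$ and $t_{n+2}$. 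Hence it suffices that $\int_{\tau_{k+1}}^{\tau_{k+1}+L/2}\lvert x_{k+1}\rvert\,\varphi(\tau_{k+1}-t)\,dt=\tfrac{\lvert x_{k+1}\rvert}{\omega_0^2}\bigl[1-\cos(\omega_0\tfrac{L}{2})\bigr]\ge\tfrac{A_{\min}}{\omega_0^2}\bigl[1-\cos(\omega_0\tfrac{L}{2})\bigr]>4C_T$, which is precisely~(\ref{eq:threshold_cond_stream_diracs}); this forces $t_n,t_{n+1},t_{n+2}\in(\tau_{k+1},\tau_{k+1}+\tfrac{L}{2})$, and in particular makes $\theta_{n+1}=t_{n+1}-t_n$ and $\theta_{n+2}=t_{n+2}-t_{n+1}$ both smaller than $\tfrac{L}{2}$.

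From here the argument mirrors Proposition~\ref{prop:Estimation of an Input Dirac Integrator}. On the knot-free interval $I=(t_{n+2}-\tfrac{L}{2},\,t_n)$ the modified kernels $(\varphi*q_{\theta_{n+1}})(t-t_n)$ and $(\varphi*q_{\theta_{n+2}})(t-t_{n+1})$ are linear combinations of $e^{j\omega_0t}$ and $e^{j\omega_1t}$ and are linearly independent, hence span that two-dimensional space, so there are unique coefficients with $\sum_{i\in\{n+1,n+2\}}c^I_{m,i}(\varphi*q_{\theta_i})(t-t_{i-1})=e^{j\omega_mt}$ on $I$ for $m\in\{0,1\}$, as in~(\ref{eq:exp recon modified kernel integrator}). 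Since $t_{n+2}-\tfrac{L}{2}<\tau_{k+1}<t_n$ we have $\tau_{k+1}\in I$, and the separation hypothesis guarantees that $y(t_{n+1})$ and $y(t_{n+2})$ involve only the $(k+1)$-th Dirac --- the supports of $(\varphi*q_{\theta_i})(\cdot-t_{i-1})$ lie in $(\tau_{k+1}-L,\tau_{k+1}+\tfrac{L}{2})$, which contains no $\tau_j$ with $j\ne k+1$. Forming the moments $s_m=c^I_{m,n+1}y(t_{n+1})+c^I_{m,n+2}y(t_{n+2})$ as in~(\ref{eq:integrator_moments_single_dirac}) then yields $s_m=x_{k+1}e^{j\omega_m\tau_{k+1}}$, from which Prony's method recovers $x_{k+1}$ and $\tau_{k+1}$ (uniquely, since $\tau_{k+1}$ has been localized to an interval shorter than $\tfrac{L}{2}$). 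Induction on $k$ completes the argument.

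The hard part will be the bookkeeping around the carry-over $v$ at the onset of each new Dirac: one must argue cleanly that no spike is fired in the silent window $(\tau_k+L,\tau_{k+1})$, that $|v|<C_T$, and that the first spike after $\tau_{k+1}$ may consume up to $2C_T$ of accumulated mass --- this is exactly what upgrades the constant $\tfrac13$ of the single-Dirac bound to $\tfrac14$ here. Everything else --- the local exponential reproduction on $I$, the moment identity, and the Prony step --- is inherited unchanged from Section~\ref{subsubsection: Exponential reproducing kernels} and Proposition~\ref{prop:Estimation of an Input Dirac Integrator}.
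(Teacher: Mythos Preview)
Your proposal is correct and follows essentially the same route as the paper: reduce to Proposition~\ref{prop:Estimation of an Input Dirac Integrator} for the first Dirac, then for each subsequent Dirac use the first three spikes after $\tau_k+L$, discard the first of these (it carries residue from the previous Dirac), and apply the exponential-reproduction/moment argument to the remaining two. Your explanation of why the constant tightens from $\tfrac13$ to $\tfrac14$ --- the carry-over $v$ with $|v|<C_T$ may oppose the sign of $x_{k+1}$, costing up to $2C_T$ for the first spike --- is exactly the mechanism behind the paper's bound $\int_{\tau_2}^{t_{n+2}}f=3C_T-\int_{t_{n-1}}^{\tau_2}f<4C_T$, just phrased more explicitly.
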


\begin{proof}
The first Dirac $\delta_1=x_1 \delta(t-\tau_1)$ can be correctly estimated using the method in Section \ref{subsection:Estimation of an Input Dirac Integrator}, since Eq. (\ref{eq:threshold_cond_stream_diracs}) satisfies the requirements of Proposition \ref{prop:Estimation of an Input Dirac Integrator}. Then, suppose we aim to estimate the second Dirac in the input signal, and let us assume for simplicity that its amplitude satisfies $x_2>0$. Moreover, let us denote the output spike locations in the interval $(\tau_1, \tau_1+L)$ with $t_1,t_2,...,t_{n-1}$, and the time information after $\tau_1+L$ with $t_n, t_{n+1},...,t_N$. Then, given the hypothesis that the minimum separation between consecutive Diracs is $L$, the location of the second Dirac must satisfy $\tau_2 \in (\tau_1+L, t_n)$. We also have that:
\small
\begin{equation*}
\int_{\tau_2}^{\tau_2+\frac{L}{2}}f(\tau) d\tau= \int_{\tau_2}^{\tau_2+\frac{L}{2}} x_2\varphi(\tau_2-\tau)d\tau \aeq \frac{x_2}{\omega_0^2}[1-\cos(\omega_0\frac{L}{2})],
\end{equation*}
\normalsize
where $(a)$ follows from Eq. (\ref{eq: first_order_e_spline_definition_1}), for $\omega_1=-\omega_0$.

This shows the upper bound in Eq. (\ref{eq:threshold_cond_stream_diracs}) is equivalent to:
\small
\begin{equation}
\label{eq:threshold_cond_stream_diracs upper}
4C_T <\int_{\tau_2}^{\tau_2+\frac{L}{2}}f(\tau) d\tau.
\end{equation}
\normalsize

Furthermore, we have that:
\small
\begin{equation}
\label{eq:stream_diracs_output_spikes}
\begin{split}
\int_{\tau_2}^{t_{n+2}}f(\tau) d\tau &=\int_{t_{n-1}}^{t_{n+2}}f(\tau)d\tau - \int_{t_{n-1}}^{\tau_2}f(\tau)d\tau \\
&\beq 3C_T - \int_{t_{n-1}}^{\tau_2}f(\tau)d\tau \cl 4C_T,
\end{split}
\end{equation}
\normalsize
where $(b)$ follows from Eq. (\ref{eq:non_uniform_samples integrator}), and $(c)$ holds since $t_{n-1}$ and $t_n$ are consecutive output spikes, and $t_n>\tau_2>t_{n-1}$.

As a result, Eq. (\ref{eq:threshold_cond_stream_diracs upper}) and (\ref{eq:stream_diracs_output_spikes}) give the following inequality:
\small
\begin{equation}
\label{eq:stream_diracs_output_spikes2}
\int_{\tau_2}^{t_{n+2}}f(\tau) d\tau<\int_{\tau_2}^{\tau_2+\frac{L}{2}}f(\tau) d\tau.
\end{equation}
\normalsize

As shown in Section \ref{subsection:Estimation of an Input Dirac Integrator}, the sampling kernel satisfies $\varphi(t)>0$ for $x_2>0$, within the interval $(\tau_2,\tau_2+\frac{L}{2})$. This means that the inequality in Eq. (\ref{eq:stream_diracs_output_spikes2}) is equivalent to $t_{n+2}<\tau_2+\frac{L}{2}$, which guarantees that the output samples  $y_{n}$, $y_{n+1}$ and $y_{n+2}$ occur in the time interval $(\tau_2, \tau_2+\frac{L}{2})$.
Using the model of Fig.~\ref{fig:integrate_and_fire_model_2}, we compute these non-uniform output samples as:
\small
\begin{equation*}
y(t_n)=y_n=\int_{t_{n-1}}^{\tau_1+L} x_1 \varphi(\tau_1-\tau) d\tau+\int_{\tau_2}^{t_n} x_2 \varphi(\tau_2-\tau)d\tau,
\end{equation*}
\normalsize
\small
\begin{equation*}
y(t_{n+1})=y_{n+1}=\int_{t_{n}}^{t_{n+1}} x_2 \varphi(\tau_2-\tau)d\tau,
\end{equation*}
\normalsize
\small
\begin{equation*}
y(t_{n+2})=y_{n+2}=\int_{t_{n+1}}^{t_{n+2}} x_2 \varphi(\tau_2-\tau)d\tau.
\vspace{-0.5em}
\end{equation*}
\normalsize

The sample $y(t_n)$ contains information of both $\delta_1$ and $\delta_2$, and hence cannot be used for estimation of the latter Dirac. On the other hand, since $t_{n+1}, t_{n+2} \in(\tau_2, \tau_2+\frac{L}{2})$, we can use the samples $y_{n+1}$ sand $y_{n+2}$ to compute the signal moments as in Section \ref{subsection:Estimation of an Input Dirac Integrator}:
\small
\begin{equation*}
s_m = c_{m,1} y_{n+1} + c_{m,2} y_{n+2} = x_2 e^{j\omega_m \tau_2}, \text{ for } m \in \{0,1\}.
\end{equation*}
\normalsize

Once $\delta_2$ is estimated from $s_m$ using Prony's method, we use the non-uniform output samples after $\tau_2+L$, in order to sequentially retrieve the next Diracs in the input signal.
\end{proof}
\vspace{-0.5em}
The sampling and reconstruction of a stream of $K=3$ Diracs of minimum absolute amplitude $A_{\min}=1$ are depicted in Fig.~\ref{fig:int_stream_diracs}. Here, the filter is a second-order E-spline, of support of length $L=2$, which can reproduce the exponentials $e^{\pm j \frac{\pi}{3}t}$, and the comparator's trigger mark is $C_T=0.11$, which satisfies Eq. (\ref{eq:threshold_cond_stream_diracs}). 
Fig.~\ref{fig:int_stream_diracs}(b) shows the filtered input and the output of the integrator. The amplitudes and locations of the estimated Diracs are exact to numerical precision. Finally, in Fig.~\ref{fig:int_stream_diracs}(c) we observe that there are no output spikes in a region where the input signal is constant (zero), which leads to small average density of samples. 
\vspace{-0.5em}
\begin{figure}[htb]
\vspace{-0.5em}
\centering
\includegraphics[width=0.4\textwidth]{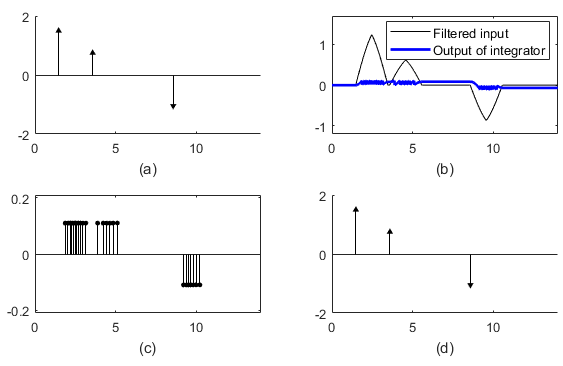}
\caption{Sampling of a stream of Diracs using the integrate-and-fire TEM. The input is shown in (a), the filtered input in (b), the output non-uniform samples in (c), and the reconstructed signal in (d).}
\label{fig:int_stream_diracs}
\vspace{-1.8em}
\end{figure}

\vspace{-0.5em}
\subsection{Estimation of a Stream of Pulses}
Let us now consider a stream of pulses of the form $x(t)*g(t)$, where $x(t)$ is defined in Eq. (\ref{eq:input stream Diracs}) and the support of $g(t)$ is $[-\epsilon, \epsilon]$. Filtering this signal with the second-order E-spline $\varphi(t)$ is equivalent to filtering the stream of Diracs $x(t)$ with the kernel $(\varphi*g)(t)$. 
As a case in point, let us consider the cosine-squared pulse $g(t)=\cos^2(t)$, and assume that $2\epsilon < \frac{L}{2}$, where $L$ is the length of the filter's support.  In addition, suppose we want to estimate the first pulse $(x_1*g)(t)$ in the stream $x(t)*g(t)$ and denote its timing information with $t_1, t_2,...,t_N$. The first three output samples can be computed as follows:
\vspace{-0.5em}
\small
\begin{equation*}
y(t_1) = \int_{\tau_1}^{t_1} f(t) dt=\langle x_1(t), (\varphi*g*q_{\theta_1})(t-\tau_1) \rangle,
\end{equation*}
\normalsize
\small
\begin{equation}
\label{eq:output_sample_2_for_1_pulse}
\vspace{-0.5em}
y(t_2) = \langle x_1(t), (\varphi*g*q_{\theta_2})(t-t_1) \rangle,
\end{equation}
\normalsize
\small
\begin{equation}
\label{eq:output_sample_3_for_1_pulse}
\vspace{-0.5em}
y(t_3) = \langle x_1(t), (\varphi*g*q_{\theta_3})(t-t_2) \rangle,
\end{equation}
\normalsize
where $x_1(t) = x_1 \delta(t-\tau_1)$ is the first Dirac in the stream $x(t)$ with $x_1>0$, $\theta_1=t_1-\tau_1$, $\theta_2=t_2-t_1$ and $\theta_3=t_3-t_2$.

Assuming $2\epsilon<\frac{L}{2}$ we can leverage the results in Eq. (\ref{eq: convo_pulses_1}) to show that in the interval $(\epsilon - \frac{L}{2}, - \epsilon)$, we get $(\varphi*g)(t) = A e^{\alpha_0 t} + Be^{\alpha_1 t}$, for some constants $A$ and $B$. Then, in the interval $(t_2-\frac{L}{2}+\epsilon, t_1-\epsilon)$, the function $(\varphi*g*q_{\theta_2})(t-t_1)$ can also be expressed as a linear combination of the exponentials $e^{\alpha_0 t}$ and $e^{\alpha_1 t}$. Similarly, $(\varphi*g*q_{\theta_3})(t-t_2)$ is a linear combination of the same exponentials in the interval $(t_3-\frac{L}{2}+\epsilon, t_2-\epsilon)$. 
As a result, in the knot-free interval $I=(t_3-\frac{L}{2}+\epsilon, t_1-\epsilon)$, we can perfectly reproduce two exponentials as in Eq. (\ref{eq:pulses_small_epsilon}), using the shifted kernels $(\varphi*g*q_{\theta_{n+1}})(t-t_n)$, for $n=1,2$. We can then compute two signal moments as in Eq. (\ref{eq:integrator_moments_single_dirac}), and retrieve the amplitude and location of the first Dirac $x_1 \delta(t-\tau_1)$ in the stream $x(t)$ using Prony's method. 

In order for these derivations to hold we need to ensure that $\tau_1 \in I$, or in other words that $t_1>\tau_1+ \epsilon$ and $t_3<\tau_1+\frac{L}{2}-\epsilon$. 
Since the filtered input corresponding to the first pulse satisfies $x_1 (\varphi*g)(\tau_1-t)>0$, for $t \in (\tau_1-\epsilon, \tau_1+L+\epsilon)$ and $x_1 (\varphi*g)(-t+\tau_1)=0$ otherwise, the condition $t_1>\tau_1+ \epsilon$ holds provided the trigger mark of the comparator satisfies:
\small
\begin{equation} \label{eq:CT_pulses_1}
C_T > \int_{\tau_1-\epsilon}^{\tau_1+\epsilon}  (\varphi*g)(\tau_1-t) dt =  \int_{-\epsilon}^{\epsilon}  (\varphi*g)(-t) dt.
\end{equation}
\normalsize

Using the same reasoning as in Section \ref{subsection:Estimation of a Stream of Diracs Integrator}, the condition $t_3<\tau_1+\frac{L}{2}-\epsilon$ holds provided:
\small
\begin{equation} \label{eq:CT_pulses_2}
C_T <  \frac{A_{min}}{4}  \int_{-\epsilon}^{\frac{L}{2}-\epsilon}  (\varphi*g)(-t) dt,
\end{equation}
\normalsize
where $A_{min}$ is the minimum amplitude of the Diracs in $x(t)$.

We also note that in order for Eq. (\ref{eq:CT_pulses_1}) and (\ref{eq:CT_pulses_2}) to be simultaneously satisfied, we need to impose additional constraints on $\epsilon$, such that:
\vspace{-0.5em}
\small
\begin{equation*}
\int_{-\epsilon}^{\epsilon}  (\varphi*g)(-t) dt <  \frac{1}{4} \int_{-\epsilon}^{\frac{L}{2}-\epsilon}  (\varphi*g)(-t) dt.
\end{equation*}
\normalsize
Finally, once the first pulse centered around $\tau_1$ has been estimated, and assuming a minimum separation between consecutive pulses of at least $L+2\epsilon$ (which is the length of the support of $(\varphi*g)(t)$), we can use subsequent samples after $\tau_1+L+2\epsilon$ to retrieve the next pulse $(x_2 *g)(t)$ in the input. 

The sampling and perfect retrieval of a stream of cosine-squared pulses are depicted in Fig.~\ref{fig:pulse}, for $C_T=0.8$. 
\begin{figure}[H]
\vspace{-0.5em}
\centering
\includegraphics[width=0.43\textwidth]{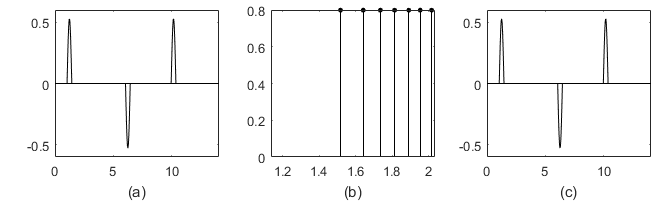}
\caption{Sampling of a stream of pulses using the integrate-and-fire TEM. The input is shown in (a), the non-uniform samples used for retrieval of the first pulse in (b), and the reconstructed signal in (c).}
\label{fig:pulse}
\end{figure}

\vspace{-1.5em}
\subsection{Multi-channel Estimation of Bursts of Diracs}
\label{subsection:Estimation of Bursts of Diracs with a Multi-channel Approach Integrator}
Let us now consider the estimation of a sequence of bursts of Diracs as in Eq. (\ref{eq:input burst Diracs}). This problem is equivalent to the estimation of a stream of Diracs, however, this time, the $K$ Diracs can be arbitrarily close to each other. Therefore, the estimation of a burst of $K$ Diracs involves retrieving a larger number of moments (at least $2K$) to accurately retrieve the Diracs. We employ a multi-channel scheme of $M \geq K$ different acquisition devices, each of which will help us compute 2 different signal moments. We will show that it is sufficient to record 2 output samples for each channel in order to perfectly reconstruct each burst in the input signal, and that the trigger mark of the threshold detector can be adjusted to ensure these output samples are located \textit{suitably} close to the input burst of Diracs. Specifically, we need to ensure that the 2 samples we use for estimation have contributions from all the $K$ Diracs, and hence, occur after the last Dirac in the burst.

\begin{prop}
\label{prop:Multi-channel Estimation of Bursts of Diracs Using the Integrate-and-fire TEM}
Let us consider a system of $M \geq K$ TEM devices as in Fig.~\ref{fig:integrate_and_fire_model_2}. The filter $\varphi(t)$ of the $m^{th}$ TEM is a second-order E-spline whose support has length $L$, and which can reproduce two different exponentials, $e^{j\omega_{m_0}t}$ and $e^{j\omega_{m_1}t}$, with $\omega_{m_0}=\omega_{0}+\lambda m$, $\lambda=\frac{-2\omega_0}{2M-1}$, $0<\omega_{0} \leq \frac{\pi}{L}$, $\omega_{m_1}=-\omega_{m_0}$, and $m=0,1,...,M-1$.
Moreover, let us assume the spacing between consecutive bursts is larger than $L$, and the maximum separation between the last and first Dirac in any burst $b$ satisfies $\tau_{b,K}-\tau_{b,1}< \frac{L}{2}$. In addition, suppose that the comparator's trigger mark $C_T$ satisfies the following conditions for each device $m$ and burst $b$:
\small
\begin{equation}
\vspace{-0.5em}
\label{eq:threshold_cond_summary_1}
C_T >\frac{(K-1)A_{\max}}{\omega_{m_0}^2} [1-\cos(\omega_{m_0}(\tau_{b,K}-\tau_{b,1}))] ,
\end{equation}
\begin{equation}
\vspace{-0.5em}
\label{eq:threshold_cond_summary_2}
C_T <   \frac{KA_{\min}}{5\omega_{m_0}^2} [1-\cos(\omega_{m_0}(\frac{L}{2}-(\tau_{b,K}-\tau_{b,1})))],
\end{equation}
\normalsize
where $A_{\max}$ and $A_{\min}$ are the absolute maximum and minimum amplitudes of the input, respectively.

Then, the timing information $t_{1,m},t_{2,m},...,t_{N,m}$ for $m=0,1,...,M-1$ provided by $M$ devices as in Fig.~\ref{fig:integrate_and_fire_model_2} is a sufficient representation of bursts of $K$ Diracs as in Eq. (\ref{eq:input burst Diracs}).

\end{prop}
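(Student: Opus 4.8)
The plan is to follow the template of Propositions~\ref{prop:Estimation of an Input Dirac Integrator} and~\ref{prop:Estimation of a Stream of Diracs Integrator} and process each burst $b$ of~(\ref{eq:input burst Diracs}) separately: since consecutive bursts are more than $L$ apart and each filter has support of length $L$, the filtered input $f$ vanishes between bursts and only a residual integrator charge of magnitude strictly below $C_T$ is carried forward, so it suffices to reconstruct one burst from the spikes it generates and then continue with the samples occurring after $\tau_{b,K}+L$. Fix a burst, assume w.l.o.g.\ its amplitudes are positive, and set $\delta:=\tau_{b,K}-\tau_{b,1}<L/2$. On channel $m$, for $\tau\in(\tau_{b,1},\tau_{b,1}+L/2)$ the filtered input is $f(\tau)=\sum_{k=1}^{K}x_{b,k}\varphi(\tau_{b,k}-\tau)$; using $\omega_{m_1}=-\omega_{m_0}$, $0<\omega_{m_0}\le\pi/L$ and $\delta<L/2$, each active term equals $x_{b,k}\sin(\omega_{m_0}(\tau-\tau_{b,k}))/\omega_{m_0}\ge0$ (all arguments lie in $[0,\pi/2]$), and $f$ is strictly positive on $(\tau_{b,K},\tau_{b,1}+L/2)$.

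The core of the argument is a two-sided control of the accumulated charge on each channel, furnished by~(\ref{eq:threshold_cond_summary_1})--(\ref{eq:threshold_cond_summary_2}). First I would bound the charge accumulated before all $K$ Diracs have been ``seen'': $\int_{\tau_{b,1}}^{\tau_{b,K}}f(\tau)\,d\tau=\sum_{k=1}^{K-1}\frac{x_{b,k}}{\omega_{m_0}^{2}}\bigl[1-\cos(\omega_{m_0}(\tau_{b,K}-\tau_{b,k}))\bigr]\le\frac{(K-1)A_{\max}}{\omega_{m_0}^{2}}\bigl[1-\cos(\omega_{m_0}\delta)\bigr]<C_T$ by~(\ref{eq:threshold_cond_summary_1}); since $f\ge0$ there, at most one spike lies in $(\tau_{b,1},\tau_{b,K}]$, so the first spike $t_n$ past $\tau_{b,K}$ has a sample $y(t_n)$ whose integration window straddles $\tau_{b,K}$ and cannot be expressed through a single knot-free modified kernel evaluated at the $\tau_{b,k}$'s; this sample (and the possible spike inside the burst) is discarded. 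Second I would lower bound the charge available past $\tau_{b,K}$ and before the earliest knot at $\tau_{b,1}+L/2$: expanding as above and invoking $\cos a-\cos b\ge1-\cos(b-a)$ for $0\le a\le b\le\pi/2$ with $a=\omega_{m_0}(\tau_{b,K}-\tau_{b,k})$, $b=\omega_{m_0}(\tau_{b,1}+L/2-\tau_{b,k})$, I obtain $\int_{\tau_{b,K}}^{\tau_{b,1}+L/2}f\ge\frac{KA_{\min}}{\omega_{m_0}^{2}}\bigl[1-\cos(\omega_{m_0}(L/2-\delta))\bigr]>5C_T$ by~(\ref{eq:threshold_cond_summary_2}). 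Since the integrator value at $\tau_{b,K}$ lies in $(-C_T,C_T)$, at most $2C_T$ of this charge is consumed reaching $t_n$ and a further $2C_T$ suffices for the next two spikes $t_{n+1},t_{n+2}$; hence $t_{n+2}<\tau_{b,1}+L/2$, so $t_n,t_{n+1},t_{n+2}$ all precede every knot of the kernels $\varphi(\tau_{b,k}-\cdot)$ and $\tau_{b,1},\dots,\tau_{b,K}\in I:=(t_{n+2}-L/2,\,t_n)$.

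On $I$ the modified kernels $(\varphi*q_{\theta_{n+1}})(t-t_n)$ and $(\varphi*q_{\theta_{n+2}})(t-t_{n+1})$ are knot-free linear combinations of $e^{j\omega_{m_0}t}$ and $e^{j\omega_{m_1}t}$, so the local exponential-reproduction argument of Section~\ref{subsubsection: Exponential reproducing kernels} yields coefficients with $\sum_{i=1}^{2}c^{I}_{m',n+i}(\varphi*q_{\theta_{n+i}})(t-t_{n+i-1})=e^{j\omega_{m_{m'}}t}$ for $m'\in\{0,1\}$, and, exactly as in~(\ref{eq:integrator_moments_single_dirac}), $s_{m,m'}:=\sum_{i=1}^{2}c^{I}_{m',n+i}\,y(t_{n+i})=\sum_{k=1}^{K}x_{b,k}e^{j\omega_{m_{m'}}\tau_{b,k}}$, using $\tau_{b,k}\in I$. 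Channel $m$ thus supplies the two power sums at frequencies $\pm\omega_{m_0}=\pm(\omega_0+\lambda m)$; over $m=0,\dots,M-1$ with $\lambda=-2\omega_0/(2M-1)$ the frequency set $\{\pm(\omega_0+\lambda m)\}_{m=0}^{M-1}$ equals $\{\omega_0+\lambda j\}_{j=0}^{2M-1}$, i.e.\ $2M\ge2K$ consecutive samples of $\sum_{k=1}^{K}a_kr_k^{\,j}$ with $a_k=x_{b,k}e^{j\omega_0\tau_{b,k}}$ and $r_k=e^{j\lambda\tau_{b,k}}$. The $r_k$ are pairwise distinct because $|\lambda(\tau_{b,k}-\tau_{b,k'})|<\pi/(2M-1)\le\pi$, so Prony's method recovers all $(a_k,r_k)$ and hence every $(x_{b,k},\tau_{b,k})$; the procedure then repeats on the next burst.

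I expect the main obstacle to be the spike-counting bookkeeping inside this two-sided estimate: one must carry the sign-indeterminate residual integrator state into the burst, show that it together with the sub-$C_T$ ``incomplete'' charge leaves at most one spike before $\tau_{b,K}$, and then certify that enough ``complete'' charge survives before the first knot to produce the single buffer spike together with the two usable ones --- the trigonometric inequalities and the constants $(K-1)$, $1/5$, $L/2$ appearing in~(\ref{eq:threshold_cond_summary_1})--(\ref{eq:threshold_cond_summary_2}) are precisely what make this counting go through. Once the two clean samples per channel are isolated, the exponential-reproduction step and the $2M\ge2K$ Prony inversion are routine given Sections~\ref{subsubsection: Exponential reproducing kernels} and~\ref{subsection:Estimation of an Input Dirac Integrator}.
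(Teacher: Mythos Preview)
Your approach is essentially the same as the paper's: isolate one burst, use the two $C_T$ conditions to pin down which integrate-and-fire spikes fall in the knot-free window $(\tau_{b,1},\tau_{b,1}+L/2)$ after the last Dirac, reproduce $e^{\pm j\omega_{m_0}t}$ locally from two modified kernels per channel, and feed the resulting $2M$ power sums into Prony. The only noteworthy difference is bookkeeping. You index $t_n$ as the \emph{first spike past $\tau_{b,K}$} and use $y(t_{n+1}),y(t_{n+2})$; but $\tau_{b,K}$ is unknown at decoding time, so this selection is not algorithmic as written. The paper avoids this by indexing $t_n$ as the first spike after $\tau_{b-1,K}+L$ (which \emph{is} known once burst $b-1$ has been decoded), proving $t_{n+1}>\tau_{b,K}$ and $t_{n+3}<\tau_{b,1}+L/2$ directly from $\int_{\tau_{b,1}}^{\tau_{b,1}+L/2}f>5C_T$ and $\int_{\tau_{b,1}}^{t_{n+3}}f=4C_T-\int_{t_{n-1}}^{\tau_{b,1}}f<5C_T$, and always using the third and fourth spikes $y(t_{n+2}),y(t_{n+3})$. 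Your ``at most one spike in $(\tau_{b,1},\tau_{b,K}]$'' observation is exactly equivalent to the paper's $t_{n+1}>\tau_{b,K}$, and your $5C_T$ budget has enough slack to accommodate the paper's fourth spike as well, so the fix is immediate: just count from $\tau_{b-1,K}+L$ instead of from $\tau_{b,K}$. A second cosmetic difference is that you lower-bound $\int_{\tau_{b,K}}^{\tau_{b,1}+L/2}f$ via the inequality $\cos a-\cos b\ge 1-\cos(b-a)$, whereas the paper bounds the larger integral $\int_{\tau_{b,1}}^{\tau_{b,1}+L/2}f$ directly; both yield the same $KA_{\min}\omega_{m_0}^{-2}[1-\cos(\omega_{m_0}(L/2-\delta))]>5C_T$.
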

\begin{proof}
See Appendix \ref{appendix:Multi-channel Estimation of Bursts of Diracs Using the Integrate-and-fire TEM}.
\end{proof}
\vspace{-1em}
Even though we considered the sampling of bursts of Diracs using a multi-channel system, it is possible under slightly more restrictive conditions, to achieve the same using a single TEM device. Therefore, for the sake of completeness, we state the following result without proof:

\begin{prop}
\label{prop:Single-channel Estimation of Bursts of Diracs}
Let us consider the integrate-and-fire TEM in Fig.~\ref{fig:integrate_and_fire_model_2}.
Let the sampling kernel $\varphi_P(t)$ be an E-spline of order $P\geq2K$ and  support of length $L$, which can reproduce $P$ different exponentials $e^{j\omega_mt}$, with $\omega_m=\omega_0+m\lambda$, $m=0,1,...,P-1$, and $0<\omega_0\leq \frac{\pi}{L}$. In addition, setting $P$ even and $\lambda=\frac{-\pi}{P}$ ensures $\varphi(t)$ is a real-valued function. In this setting, let us assume the spacing between bursts is larger than $L$, and the separation between the last and first Diracs within any burst $b$ satisfies $\tau_{b,K}-\tau_{b,1}<\frac{L}{P}$. In addition, suppose the trigger mark of the comparator $C_T$ satisfies:
\small
\begin{equation}
\label{eq:threshold_cond_single_ch_summary_1}
C_T > (K-1)A_{\max}\int_{0}^{\Delta_b} \varphi(-\tau) d\tau,
\end{equation}
\begin{equation}
\label{eq:threshold_cond_single_ch_summary_2}
C_T <   \frac{ K A_{\min}}{P+3} \int_{0}^{\frac{L}{P}} \varphi(-\tau) d\tau,
\vspace{-0.3em}
\end{equation}
\normalsize
where $\Delta_b=\max(\tau_{b,K}-\tau_{b,1})$.

Then, the timing information $\{t_1, t_2, ..., t_N\}$ provided by the integrate-and-fire TEM in Fig.~\ref{fig:integrate_and_fire_model_2} is a sufficient representation of a sequence of bursts of $K$ Diracs as in Eq. (\ref{eq:input burst Diracs}).
\vspace{-0.5em}
\end{prop}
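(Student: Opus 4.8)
The plan is to lift the single-Dirac argument of Proposition~\ref{prop:Estimation of an Input Dirac Integrator} to order $P$ and graft onto it the sequential, burst-by-burst reconstruction used for streams in Proposition~\ref{prop:Estimation of a Stream of Diracs Integrator}. First I would use the hypothesis that consecutive bursts are separated by more than $L$: the filtered contribution of burst $b$, namely $\sum_{k=1}^{K}x_{b,k}\varphi_P(\tau_{b,k}-t)$, is supported on $[\tau_{b,1},\tau_{b,K}+L]$, hence disjoint from that of every other burst, so the integrator is driven by one burst at a time, up to a carryover value of magnitude $<C_T$ left from the previous burst's last reset. It therefore suffices to reconstruct one burst from a fixed, location-independent block of consecutive output samples; the remaining bursts follow inductively, using the samples that fall after $\tau_{b,K}+L$.

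For a single burst, the core fact is the local exponential-reproduction property of Section~\ref{subsubsection: Exponential reproducing kernels}: on the last knot-free cell $(-\tfrac{L}{P},0)$ of $\varphi_P$ one has $\varphi_P(t)=\sum_{m=0}^{P-1}a_m e^{j\omega_m t}$, so by the computation that produced Eq.~(\ref{eq:exp recon modified kernel integrator}) each modified kernel $(\varphi_P*q_{\theta_j})(\,\cdot\,-t_{j-1})$ is a linear combination of the $P$ exponentials $e^{j\omega_m t}$ on $(t_j-\tfrac{L}{P},\,t_{j-1})$. Consequently, given $P$ consecutive samples $y(t_n),\dots,y(t_{n+P-1})$ whose windows satisfy $t_{n-1}>\tau_{b,K}$ and $t_{n+P-1}<\tau_{b,1}+\tfrac{L}{P}$, the common knot-free interval $I=(t_{n+P-1}-\tfrac{L}{P},\,t_{n-1})$ contains every $\tau_{b,k}$ — here the hypothesis $\tau_{b,K}-\tau_{b,1}<\tfrac{L}{P}$ enters, making $I$ nonempty — and one can solve for coefficients $c_{m,l}^{I}$ with $\sum_{l=0}^{P-1}c_{m,l}^{I}(\varphi_P*q_{\theta_{n+l}})(t-t_{n+l-1})=e^{j\omega_m t}$ on $I$. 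Forming $s_m=\sum_{l=0}^{P-1}c_{m,l}^{I}y(t_{n+l})$ and substituting $y(t_{n+l})=\sum_k x_{b,k}(\varphi_P*q_{\theta_{n+l}})(\tau_{b,k}-t_{n+l-1})$ exactly as in Eq.~(\ref{eq:integrator_moments_single_dirac}) gives $s_m=\sum_{k=1}^{K}x_{b,k}e^{j\omega_m\tau_{b,k}}=\sum_{k=1}^{K}b_k u_k^{m}$ with $b_k=x_{b,k}e^{j\omega_0\tau_{b,k}}$ and $u_k=e^{j\lambda\tau_{b,k}}$. Since $P\ge 2K$, the $P$ moments $s_0,\dots,s_{P-1}$ are enough for Prony's method (Appendix~\ref{appendix:Prony's method}) to recover the $K$ pairs $(b_k,u_k)$, hence all $(x_{b,k},\tau_{b,k})$; the $u_k$ are distinct because the $\tau_{b,k}$ are distinct and confined to an interval of length $<\tfrac{L}{P}$, and $\varphi_P$ is real-valued by the prescribed even $P$ and frequency layout.

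What remains is to show that a block of $P$ \emph{usable} samples — consecutive, with all their integration windows inside the clean region — occurs at an a~priori known position relative to the burst, which is the purpose of the two bounds on $C_T$. For the lower bound, on $[\tau_{b,1},\tau_{b,K}]$ one has $f(t)=\sum_{k:\tau_{b,k}\le t}x_{b,k}\varphi_P(\tau_{b,k}-t)$, whose active terms are all positive once one establishes $\varphi_P(-\tau)>0$ on $(0,\tfrac{L}{P})$ — the order-$P$ analogue of $\varphi_2(-\tau)=\sin(\omega_0\tau)/\omega_0>0$ from Proposition~\ref{prop:Estimation of an Input Dirac Integrator}. Hence $\int_{\tau_{b,1}}^{\tau_{b,K}}f=\sum_{k=1}^{K-1}x_{b,k}\int_0^{\tau_{b,K}-\tau_{b,k}}\varphi_P(-\tau)\,d\tau\le(K-1)A_{\max}\int_0^{\Delta_b}\varphi_P(-\tau)\,d\tau<C_T$ by Eq.~(\ref{eq:threshold_cond_single_ch_summary_1}), so, together with the carryover of magnitude $<C_T$, only a bounded (at most two) number of firings can have windows meeting $[\tau_{b,1},\tau_{b,K}]$: discarding those \emph{straddle} samples leaves a usable block beginning at a fixed index. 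For the upper bound, I would lower-bound the integral of $f$ over $(\tau_{b,K},\tau_{b,1}+\tfrac{L}{P})$ by $KA_{\min}\int_0^{L/P}\varphi_P(-\tau)\,d\tau$ minus two boundary corrections — near $\tau_{b,1}$ and near the knot at $\tau_{b,1}+\tfrac{L}{P}$, where $\varphi_P$ vanishes — each bounded by $\int_0^{\Delta_b}\varphi_P(-\tau)\,d\tau$, and invoke Eq.~(\ref{eq:threshold_cond_single_ch_summary_2}), whose denominator $P+3$ supplies the slack for those corrections, the carryover, and the straddle, to conclude that at least $P+1$ firings fall in that interval; this yields $t_{n-1}>\tau_{b,K}$ and $t_{n+P-1}<\tau_{b,1}+\tfrac{L}{P}$ for a fixed $n$, closing the proof.

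The main obstacle is precisely this last, quantitative step: proving $\varphi_P(-\tau)>0$ on $(0,\tfrac{L}{P})$ for general even $P$ with the symmetric frequency layout, and then bookkeeping the integrator's pre-burst carryover, its sign flips and resets, and the two vanishing-endpoint corrections tightly enough that the exact constants — $K-1$ in Eq.~(\ref{eq:threshold_cond_single_ch_summary_1}) and $P+3$ in Eq.~(\ref{eq:threshold_cond_single_ch_summary_2}) — indeed suffice. By contrast, the exponential-reproduction and Prony portions are essentially verbatim repetitions of Propositions~\ref{prop:Estimation of an Input Dirac Integrator} and~\ref{prop:Estimation of a Stream of Diracs Integrator}.
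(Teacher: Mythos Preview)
The paper explicitly states this proposition \emph{without proof} (``for the sake of completeness, we state the following result without proof''), so there is no argument to compare against directly. Your proposal is precisely the route the paper's surrounding material implies: it is the order-$P$, single-channel analogue of the proof of Proposition~\ref{prop:Multi-channel Estimation of Bursts of Diracs Using the Integrate-and-fire TEM} in Appendix~\ref{appendix:Multi-channel Estimation of Bursts of Diracs Using the Integrate-and-fire TEM}, combined with the sequential scheme of Proposition~\ref{prop:Estimation of a Stream of Diracs Integrator} and the $P$-order exponential reproduction already invoked in Section~\ref{subsection:Estimation of Piecewise Constant Signals} (where the closely related bound $C_T<\tfrac{A_{\min}}{P+2}\int_0^{L/P}\varphi_P(-\tau)\,d\tau$ appears for streams). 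The decomposition into (i) burst isolation via the $L$-separation hypothesis, (ii) local reproduction of $P$ exponentials by the modified kernels $(\varphi_P*q_{\theta})$ on the common knot-free window, (iii) moment formation and Prony, and (iv) the two $C_T$ bounds pinning down a fixed block of $P$ usable samples, mirrors Appendix~\ref{appendix:Multi-channel Estimation of Bursts of Diracs Using the Integrate-and-fire TEM} step for step with $2$ replaced by $P$ and $\tfrac{L}{2}$ by $\tfrac{L}{P}$.

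One small bookkeeping point: following the paper's indexing in Appendix~\ref{appendix:Multi-channel Estimation of Bursts of Diracs Using the Integrate-and-fire TEM}, if $t_n$ denotes the first spike after the previous burst's support ends, the lower bound (\ref{eq:threshold_cond_single_ch_summary_1}) gives $t_{n+1}>\tau_{b,K}$ and the upper bound (\ref{eq:threshold_cond_single_ch_summary_2}) then needs to force $t_{n+P+1}<\tau_{b,1}+\tfrac{L}{P}$, so that the $P$ samples $y(t_{n+2}),\dots,y(t_{n+P+1})$ are the usable block; this is exactly what $(P+3)C_T$ buys, since $\int_{\tau_{b,1}}^{t_{n+P+1}}f=(P+2)C_T-\int_{t_{n-1}}^{\tau_{b,1}}f<(P+3)C_T$. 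Your identification of the genuine technical residue --- positivity of $\varphi_P(-\tau)$ on $(0,\tfrac{L}{P})$ for the symmetric frequency layout, and the lower estimate $\int_{\tau_{b,1}}^{\tau_{b,1}+L/P}f\ge KA_{\min}\int_0^{L/P}\varphi_P(-\tau)\,d\tau$ despite the Diracs being spread over $[\tau_{b,1},\tau_{b,K}]$ --- is accurate; the paper handles the $P=2$ case via the explicit cosine identity and leaves the general case unwritten.
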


\vspace{-0.8em}
\subsection{Estimation of Piecewise Constant Signals}
\label{subsection:Estimation of Piecewise Constant Signals}
Let us now consider a piecewise constant signal $x(t)$, and assume that we filter this with the derivative of an E-spline $\varphi(t)$ of order $P\geq2$, obtained using Eq. (\ref{eq:higher_order_e_spline}).
Filtering $x(t)$ with $\frac{d\varphi(t)}{dt}$ ensures that in a region where the input is constant, there are no output spikes, since $\frac{d\varphi(t)}{dt}$ has average value equal to zero. This leads to energy-efficient sampling of the piecewise constant signal, resulting in a small average number of output spikes.
In this setting, the filtered input is given by:
\small
\begin{equation*}
\begin{split}
f(t)= x(t)* \frac{d\varphi(t)}{dt} = \frac{dx(t)}{dt}* \varphi(t).
\end{split}
\end{equation*}
\normalsize

This shows that filtering a piecewise constant signal $x(t)$ with $\frac{d\varphi(t)}{dt}$ is equivalent to filtering the stream of Diracs corresponding to the discontinuities of the piecewise constant signal with the E-spline $\varphi(t)$. 
The discontinuities of $\frac{dx(t)}{dt}$ can be estimated from the output spikes, by extending the results of Proposition \ref{prop:Estimation of a Stream of Diracs Integrator} to the case of a $P$-order E-spline $\varphi_P(t)$, with $P\geq2$.
In this case, the E-spline $\varphi_P(t)$ of support of length $L$ can reproduce $P\geq2$ different complex exponentials $e^{j\omega_mt}$, with $\omega_m=\omega_0+\lambda m$. and $m=0,1,...,P-1$. Moreover, choosing $\lambda=\frac{-2\omega_0}{P-1}$ and $P$ even ensures the kernel $\varphi_P(t)$ is a real-valued function.
As before, the separation between consecutive Diracs must be larger than $L$ and the trigger mark of the comparator satisfies:
\small
\begin{equation}
\label{eq:stream_diracs_higherorderspline_threshold}
0<C_T <\frac{A_{\min}}{P+2}\int_{0}^{\frac{L}{P}} \varphi_P(-\tau) d\tau.
\end{equation}
\normalsize

Suppose we want to estimate the $k^{th}$ discontinuity in $ \frac{dx(t)}{dt}$, of amplitude $z_k$ and located at $\tau_k$, and let us denote the locations of the first output spikes after $\tau_k$ with $t_n, t_{n+1},...t_N$.
Then, using a similar proof as in Section \ref{subsection:Estimation of a Stream of Diracs Integrator}, we can show that the constraint in Eq. (\ref{eq:stream_diracs_higherorderspline_threshold}) guarantees that $\tau_k \in I=(t_{n+P}-\frac{L}{P}, t_n)$. 
Then, we can compute the following signal moments:
\small
\begin{equation*}
\begin{split}
s_m &= \sum_{i=1}^{P} c_{m,n}^Iy(t_{n+i}) \aeq z_k \sum_{i=1}^{P}c_{m,n}^I(\varphi_P*q_{\theta_{n+i}})(\tau_k-t_{n+i-1})\\
&\beq z_k e^{j\omega_m\tau_k}, \text{ for } m=0,1,...,P-1.
\end{split}
\end{equation*}
\normalsize
In these derivations, $(a)$ follows from Eq. (\ref{eq: output_samples_equivalent_filter}), and $(b)$ holds given $\tau_k \in (t_{n+P}-\frac{L}{P}, t_n)$, and the fact that none of the kernels $(\varphi_P*q_{\theta_{n+i}})(\tau_k-t_{n+i-1})$ have any discontinuities in $(t_{n+P}-\frac{L}{P}, t_n)$, for $i=1,2,...,P$.
As before, we can use Prony's method to estimate $z_k$ and $\tau_k$ from the signal moments $s_m$.
Finally, we can retrieve the piecewise constant signal $x(t)$ once we have estimated its discontinuities $\frac{dx(t)}{dt}$.

The sampling and reconstruction of a piecewise constant signal are depicted in Fig.~\ref{fig:int_piece_ct}. The filter is the derivative of the fourth-order E-spline, with support length $L=4$, as seen in Fig.~\ref{fig:int_piece_ct}(b), the separation between input discontinuities is larger than the length of the kernel's support as depicted in Fig.~\ref{fig:int_piece_ct}(a), and the comparator's trigger mark is $C_T=0.001$. The estimation of the input is exact to numerical precision.
\begin{figure}[htb]
\vspace{-0.5em}
\centering
\includegraphics[width=0.4\textwidth]{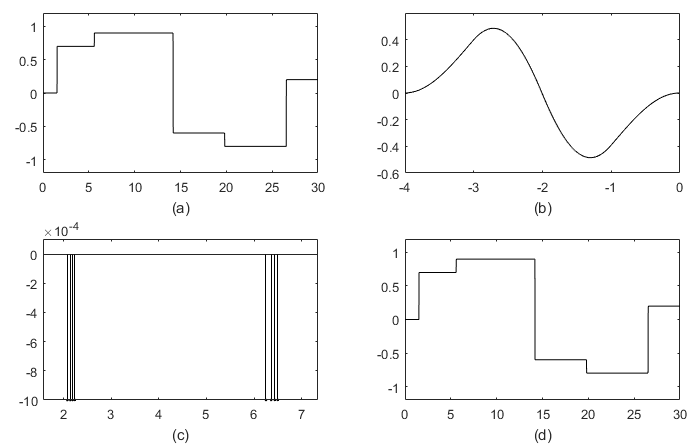}
\caption{Sampling of a piecewise constant signal with sufficiently separated discontinuities, using the integrate-and-fire TEM. The input is shown in (a), the sampling kernel in (b), the non-uniform samples used for estimation of the first two input discontinuities in (c), and the reconstructed signal in (d).}
\label{fig:int_piece_ct}
\vspace{-0.5em}
\end{figure}

Similarly, the results of Propositions \ref{prop:Multi-channel Estimation of Bursts of Diracs Using the Integrate-and-fire TEM} and \ref{prop:Single-channel Estimation of Bursts of Diracs} can be extended to the case of a piecewise constant signal $x(t)$, where the discontinuities $\frac{dx(t)}{dt}$ are bursts of arbitrarily close Diracs, as in Eq. (\ref{eq:input burst Diracs}). For example, in Fig. \ref{fig:piecewise_bursts}, we show the time encoding and perfect decoding of a piecewise constant signal, with two arbitrarily close discontinuities. 
\begin{figure}[htb]
\vspace{-0.5em}
\centering
\includegraphics[width=0.45\textwidth]{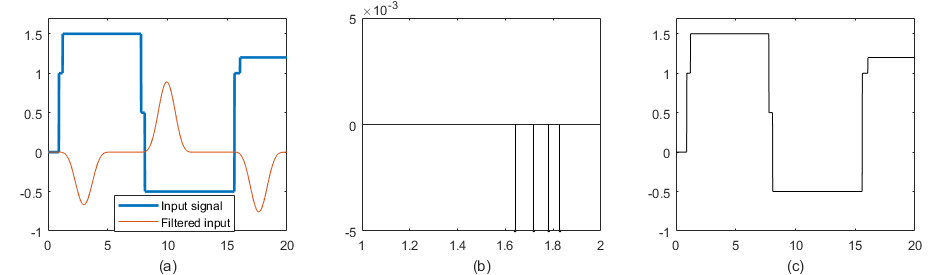}
\caption{Sampling of a piecewise constant signal, with arbitrarily close discontinuities, using the integrate-and-fire TEM. The input is shown in (a), the non-uniform samples used for estimation of the first burst of two discontinuities in (b), and the reconstructed signal in (c).}
\label{fig:piecewise_bursts}
\vspace{-0.5em}
\end{figure}

We conclude this section by summarizing possible choices of hyperparameters in our sampling framework based on an integrate-and-fire system. Specifically, let us consider the case of streams of bursts of $K$ Diracs, and discuss the relationship between the sampling kernel and the trigger mark of the comparator, and how these parameters determine the density of output samples. The sampling kernel is assumed to be the E-spline given in Eq. (\ref{eq:higher_order_e_spline}) of order $P$, and support length $L=P$.
Furthermore, the conditions of the trigger mark $C_T$ ensure that the output samples used for reconstruction are located \textit{sufficiently} close to the burst of Diracs, and in a region where the filtered input is continuous. As a result, these conditions depend on the separation $\Delta_b$ between the Diracs, as well as on the location of the knots of the sampling kernel, which in turn depends on the length of the support of this kernel. Setting $C_T$ to its maximum theoretical value ensures that the number of samples is minimised.

The choice of the hyperparameters of the integrate-and-fire TEM is summarised in Table \ref{table:table3}. Here, a burst of $K=2$ Diracs was time encoded using an $M$-channel system. The amplitudes of each of the Diracs was chosen uniformly at random in the interval $[1,2]$ and the trigger mark $C_T$ computed using Eq. (\ref{eq:threshold_cond_summary_2}). The results were averaged over 100 different experiments.
\vspace{-0.5em}
\small
\begin{table}[H]
\centering
\caption{Choice of hyperparameters for estimating a burst of $2$ Diracs, using an $M$-channel integrate-and-fire system.
 \label{table:table3}}
\setlength\extrarowheight{5pt} 
\begin{tabular}{|c|c|c|c|c|c|P{2.8cm}|} 
        \hline
      $M$ & $P$ & $L$& $C_T^{max}$ & $\omega_0$ & $\Delta_b$ &Average number of samples per burst\\
      \hline
       $2$ & $2$ & $2$& $0.1218$ & $-\frac{\pi}{3}$ & $0.2$ & 44.8\\
        \hline
       $2$ & $2$ & $2$& $0.0947$ & $-\frac{\pi}{3}$ & $0.3$ & 56.21\\
        \hline
       $1$ & $4$ & $4$& $0.0114$ & $-\frac{\pi}{3}$ & $0.2$ &202.24 \\
        \hline
       $2$ & $2$ & $2$& $0.1286$ & $-\frac{\pi}{2}$ & $0.2$ & 43.25\\
        \hline
       $2$ & $4$ & $4$& $0.0133$ & $-\frac{\pi}{3}$ & $0.2$ & 413.3 \\
        \hline
\end{tabular}

\end{table}
\normalsize
\vspace{-0.5em}
Finally, we make the remark that only some of the output samples are used for input reconstruction. For online reconstruction applications, these are the only samples that need to be stored. This is depicted in Fig.~\ref{fig:sequential_bursts}, where we highlight in red the samples used for reconstruction of each burst of Diracs, of one of the two channels. Only the second and third output samples, located at $t_2^{(1)}$ and $t_3^{(1)}$ need to be recorded and used to retrieve the first burst of 2 Diracs. Once the first burst has been estimated, we record the second and third output samples after $\tau_2+L$, located at $t_2^{(2)}$ and $t_3^{(2)}$ in order to retrieve the next burst. 

\begin{figure}[htb]
\vspace{-1em}
\centering
\includegraphics[width=0.37\textwidth]{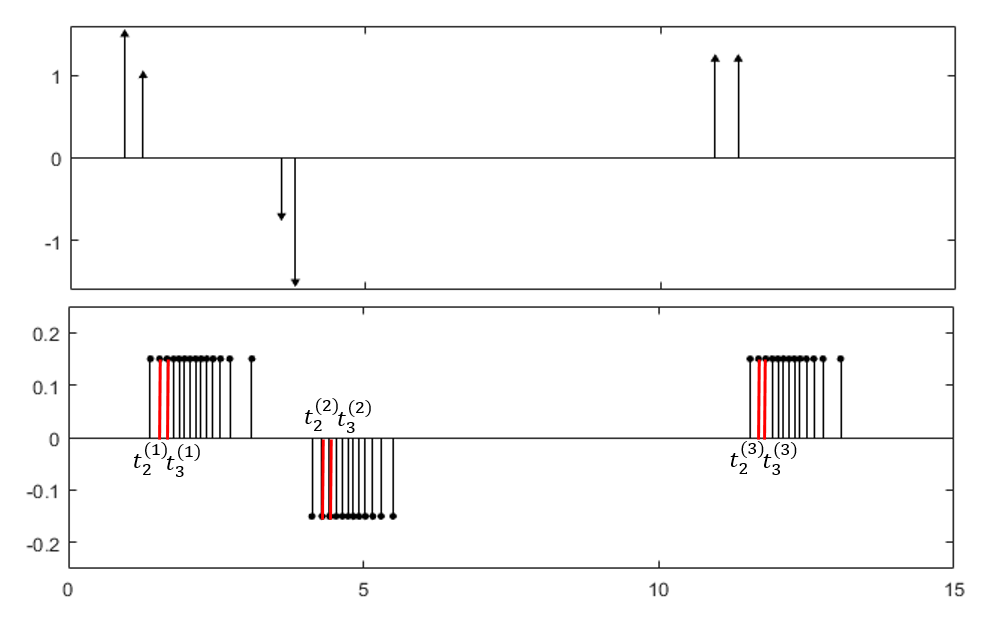}
\caption{Time encoding of a sequence of bursts of 2 Diracs, using an integrate-and-fire system. The input is shown in the top plot, and the output samples in the bottom plot.}
\label{fig:sequential_bursts}
\vspace{-1em}

\end{figure}
\vspace{-1em}

\section{Generalized Time-based Sampling}
\label{section:Generalized Time-based Sampling}
To highlight the potential practical implications of the methods developed in the previous sections, we present here extensions of our framework to deal with arbitrary kernels and the noisy scenario, and show that reliable input reconstruction can be achieved also in these scenarios.
\vspace{-0.6em}
\subsection{Sampling with Arbitrary Kernels}
\label{section:Universal Sampling Based on Timing}
In the previous sections we have presented methods for perfect retrieval of certain classes of non-bandlimited signals from timing information. We have seen that these methods require the sampling kernel $\varphi(t)$ to locally reproduce exponentials, in order to be able to map this problem to Prony's method.
In reality, however, the sampling kernel may not have the exponential reproducing property as in Eq. (\ref{eq:exp_spline non-uniform}).
Let us now consider an arbitrary kernel $\tilde{\varphi}(t)$, and find a linear combination of its non-uniform shifted versions that gives the best approximation of $P$ exponentials $f(t)=e^{j\omega_mt}$ within an interval $I$, for $\omega_m=\omega_0+\lambda m$, $m=0,1,...,P-1$, and $\lambda = \frac{-2\omega_0}{P-1}$. In other words, we want to find the optimal coefficients $c_{m,n}^I$ such that:
\small
\begin{equation}
\label{eq:approx sf recon exp}
\sum_{n=1}^N c_{m,n}^I \tilde{\varphi}(t-t_n) \approx e^{j\omega_mt},
\end{equation}
\normalsize
for $t \in I$ and $n=1,2,...,N$, with $N$ being the number of kernels $\tilde{\varphi}(t-t_n)$ overlapping $I$.

We find the coefficients $c_{m,n}$ using the \textit{least-squares approximation} method described in \cite{6578165}. The coefficients are computed using the orthogonal projection of $f(t)$ onto the space spanned by the non-uniform shifts $\tilde{\varphi}(t-t_n)$, such that:
\small
\begin{equation}
\label{eq:first eq approx sf}
\langle f(t)-\sum_{k=1}^N c_{m,k}^I \tilde{\varphi}(t-t_k),  \tilde{\varphi}(t-t_n)\rangle = 0,
\end{equation}
\normalsize
for $t \in I$ and $n=1,2,..N$.

Furthermore, Eq. (\ref{eq:first eq approx sf}) is equivalent to:
\small
\begin{equation*}
\langle f(t),  \tilde{\varphi}(t-t_n)\rangle = \sum_{k=1}^N c_{m,k}^I  \langle \tilde{\varphi}(t-t_k),\tilde{\varphi}(t-t_n)\rangle,
\end{equation*}
\normalsize
which represents a system of $N$ equations from which we can determine the $N$ coefficients $c_{m,k}^I $, for each $m=0,1,...,P-1$.

We then use the calculated coefficients $c_{m,k}^I$ to compute the signal moments as in Section \ref{section:Perfect Recovery of Signals from Timing Information obtained with an Integrate-and-fire System}. 
Finally, the estimation of the input can be further refined using the Cadzow iterative algorithm in order to increase the accuracy of the signal moments, before applying Prony's method \cite{4472241, 1488}.

The sampling and reconstruction of bursts of 2 Diracs are depicted in Fig.~\ref{fig:p3_signals_exp4}. We use the multi-channel estimation method presented in Section \ref{subsection:Estimation of Bursts of Diracs with a Multi-channel Approach Integrator}, where the filter of each channel is a third order B-spline $\beta_3(t)$, such that the modified kernel $(\beta_3*q_{\theta_n})(t)$ in Eq. (\ref{eq:modified samples integrator summary}) cannot reproduce exponentials. 
Moreover, we aim to approximately reproduce 4 different exponentials for each channel, and hence we require a number of 4 non-uniform samples, as discussed in Section \ref{subsection:Sampling Kernels}. In Fig.~\ref{fig:p3_recon_exp}, we depict the approximate exponential reproduction in Eq.~(\ref{eq:approx sf recon exp}), within the interval $I=(0.82, 1.4)$ overlapping the first burst of Diracs. The mean-squared error of the exponential reproduction within this interval is $9.47 \times 10^{-13}$.
Finally, the estimation of the input is close to exact, as depicted in Fig.~\ref{fig:p3_signals_exp4}(c). In particular, the mean-squared error in the time locations of the Diracs is $2.44\times 10^{-4}$, and the mean-squared error in the amplitudes of the Diracs is $2.04\times 10^{-10}$.
\vspace{-0.5em}
\begin{figure}[htb]
\vspace{-0.5em}
\centering
\includegraphics[width=0.23\textwidth]{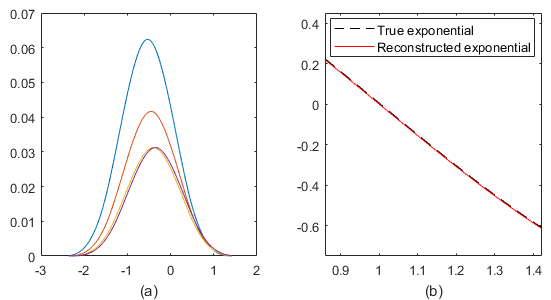}
\caption{Approximate exponential reproduction using non-uniform shifts of the kernel $(\beta_3*q_{\theta_n})(t)$. The kernels are shown in (a), and the exponential reproduction using these shifted kernels in (b).}
\label{fig:p3_recon_exp}
\vspace{-1em}
\end{figure}

\begin{figure}[htb]
\centering
\vspace{-0.5em}
\includegraphics[width=0.45\textwidth]{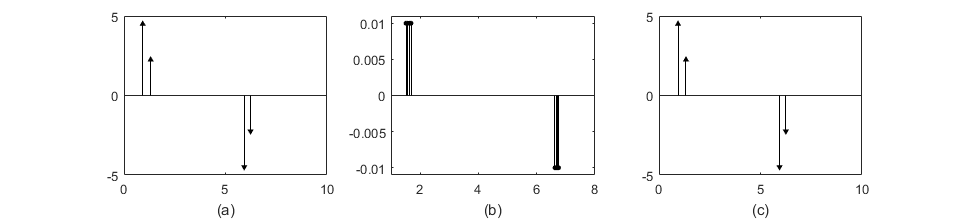}
\caption{Universal sampling of a sequence of bursts of Diracs using the integrate-and-fire TEM. The input signal is shown in (a), the output non-uniform samples of one channel used for estimation in (b), and the reconstructed signal in (c).}
\label{fig:p3_signals_exp4}
\vspace{-1em}
\end{figure}

\vspace{-0.5em}
\subsection{Robustness of the Integrate-and-fire TEM to Noise}
\label{section:Robustness to Noise}
In many practical circumstances, the input signal is corrupted by noise, which is typically assumed to be white, additive Gaussian noise.
When this happens, the non-uniform times $\{t_n\}$ change which means that the sequence of moments $s_m$ is also corrupted, and perfect reconstruction may no longer be possible.
Nevertheless, if the noise has average value equal to 0, it is in part removed by the integrator in the TEM, as a result of the averaging effect of the integral.

In Fig.~\ref{fig:noise} we show the reconstruction of a piecewise constant signal corrupted by white, additive Gaussian noise, using the method in Section \ref{subsection:Estimation of Piecewise Constant Signals}. The filter is the derivative of a fourth-order E-spline with support length $L=4$ which can reproduce the exponentials $e^{\pm j \frac{\pi}{3}t}$ and $e^{\pm j \frac{\pi}{6}t}$, the trigger mark of the comparator is $C_T=0.001$, the standard deviation of the noise is $\sigma=0.1$ (SNR$=21.56$dB), and the separation between consecutive discontinuities of the input is larger than $L$. The reconstruction of the input from noisy samples is very accurate.
A quantitative analysis of the effect of noise on the retrieval of this piecewise constant signal is presented in Table \ref{table:table1}. The table shows the error of the estimated locations and the relative error of the estimated amplitudes of the discontinuities in the input signal, averaged over 10000 experiments.
\vspace{-0.8em}
\begin{figure}[htb]
\centering
\includegraphics[width=0.4\textwidth]{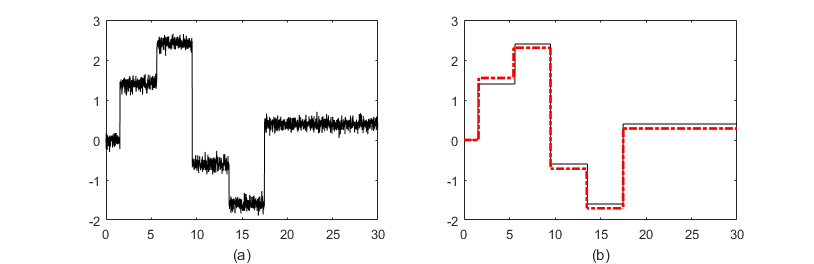}
\caption{Estimation of a piecewise constant signal from noisy samples, obtained using the integrate-and-fire TEM. The noisy input is shown in $(a)$, and the reconstruction in $(b)$.}
\label{fig:noise}
\vspace{-1em}
\end{figure}
\vspace{-0.2em}
\small
\begin{table}[ht]
\begin{center}
\caption{Effect of noise on the estimation of a piecewise constant signal, from spikes obtained using the integrate-and-fire TEM. The error $\epsilon_t$ is the average absolute difference between the true and estimated locations, $\epsilon_A$ is the relative error of the estimated amplitudes of the input discontinuities and SER is the signal-to-error-ratio, for amplitude estimation.
 \label{table:table1}}
\begin{tabular}{|c|c|c|c|c|} 
        \hline
     \text{SNR(dB)} & \textbf{$\sigma$} & \textbf{$\epsilon_t$} & \textbf{$\epsilon_A$}& \text{SER(dB)}\\
      \hline
      $43.33$ &0.01 & $2.61 \times 10^{-4}$ &$6.21 \times 10^{-5}$ & $43.11$ \\
        \hline
      $29.63$&0.05 &$0.0015$ &$2.1509 \times 10^{-4}$ & $24.59$ \\
        \hline
     $23.38$ & 0.1 &$0.0042$ &$0.0026$ & $22.83$ \\
        \hline
\end{tabular}
\end{center}
\vspace{-1em}
\end{table}
\normalsize

In Fig.~\ref{fig:integrator_snr} we show the reconstruction errors, for the case of a stream of Diracs, for different SNR values, averaged over 1000 experiments. Here, the input signal is corrupted by white, additive Gaussian noise, and the sampling kernel is a second-order E-spline whose support has length $L=2$, defined as in Eq. (\ref{eq: first_order_e_spline_definition_1}), for $\alpha_0 = -\alpha_1 = j \frac{\pi}{3}$. When SNR$=20dB$, the amplitude mean-squared error is $1.17 \times 10^{-3}$ and the mean-squared error in time locations is $2.5\times 10^{-3}$.
Finally, in Fig.~\ref{fig:integrator_bursts_noisy} we depict the estimation of an input stream of Diracs corrupted by noise, from its timing information, when SNR$ = 10dB$.

\vspace{-0.8em}
\begin{figure}[htb]
\centering
\includegraphics[width=0.45\textwidth]{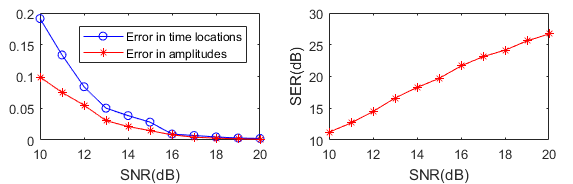}
\caption{Left: Average mean-squared errors in estimated time locations and amplitudes of a stream of Diracs corrupted by white, additive Gaussian noise; Right: Average signal-to-error ratio (SER) along signal-to-noise ratio, for amplitude estimation.}
\label{fig:integrator_snr}
\vspace{-1em}
\end{figure}
\vspace{-0.2em}

\vspace{-0.8em}
\begin{figure}[htb]
\centering
\includegraphics[width=0.45\textwidth]{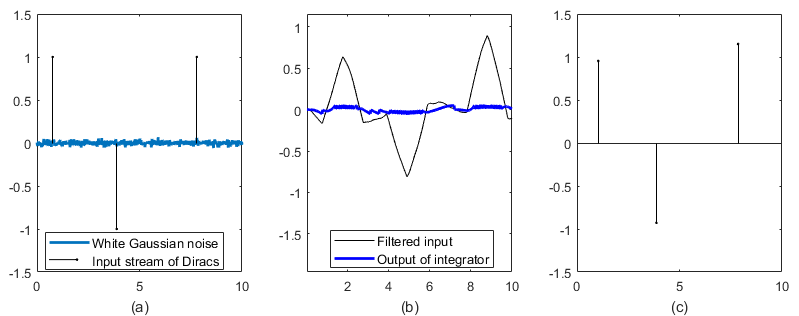}
\caption{Estimation of a stream of Diracs from noisy samples, obtained using the integrate-and-fire TEM. For SNR$=10dB$, the noisy input is shown in $(a)$, the filtered input and output of integrator in $(b)$, and the reconstruction in $(c)$.}
\label{fig:integrator_bursts_noisy}
\vspace{-1em}
\end{figure}

\vspace{-1em}
\subsection{Time Encoding and Decoding of Bursts of Diracs of Arbitrary Signs}
{In Section \ref{subsection:Estimation of Bursts of Diracs with a Multi-channel Approach Integrator} we presented sufficient conditions for perfect retrieval of bursts of Diracs defined as in Eq. (\ref{eq:input burst Diracs}). These conditions rely on various assumptions, including that the amplitudes $x_{b,k}$ of the Diracs $x_{b,k} \delta(t-\tau_{b,k})$ in the same burst $b$, have the same sign. In reality, this assumption may not always hold, and in this section we empirically show that the reconstruction framework presented in this paper usually performs well also when the amplitudes of the Diracs in the same burst have opposite signs.
We consider the estimation of a burst of 2 Diracs from its time encoded information using a 2-channel approach. We assume that the filter of each channel is a second-order E-spline defined as in Eq. (\ref{eq: first_order_e_spline_definition_1}) with support length $L=2$. We denote the output information of channel 1 with $t_{1,1}, t_{2,1},...,t_{N,1}$, and that of channel 2 with $t_{1,2}, t_{2,2},...,t_{N,2}$.
We assume that the amplitudes of these Diracs are distributed as Gaussian variables, of mean $\mu=0$ and variance $\sigma=1$.

The decoding scheme presented in Section \ref{subsection:Estimation of Bursts of Diracs with a Multi-channel Approach Integrator} showed that we can reliably use the samples $y(t_{3,1}), y(t_{4,1})$ of the first channel and $y(t_{3,2}), y(t_{4,2})$ of the second channel, in order to perfectly retrieve an input burst of 2 Diracs of same sign. The sufficient conditions on the trigger mark of the integrator given in  Eq. (\ref{eq:threshold_cond_summary_1}) and  (\ref{eq:threshold_cond_summary_2}) ensure that these samples are located after the second Dirac at $\tau_2$.
Here, we choose $C_T$ below its minimum theoretical value given in Eq. (\ref{eq:threshold_cond_summary_1}), in order to ensure a sufficient number of output samples is obtained, even when the two Diracs in the input have opposite signs and are located closely to each other. However, when lowering $C_T$, the samples $y(t_{3,1}), y(t_{4,1})$ and $y(t_{3,2}), y(t_{4,2})$ are not guaranteed to occur after $\tau_2$, and hence, may not be reliably used for reconstruction of both Diracs.
Therefore, we adjust the reconstruction scheme as follows.
Using $y(t_{2,1}), y(t_{3,1})$ of the first channel and $y(t_{2,2}), y(t_{3,2})$ of the second channel we compute the signal moments $s_m$ as described in Appendix \ref{appendix:Multi-channel Estimation of Bursts of Diracs Using the Integrate-and-fire TEM} and then build matrix $S$ as in Appendix \ref{appendix:Prony's method}. 
If the rank of the matrix $S$ is 1, then $t_{3,1}<\tau_2$ and $t_{3,2}<\tau_2$. Hence, we can use $y(t_{2,1}), y(t_{3,1})$ to estimate the first Dirac $x_1\delta(t-\tau_1)$. Once the first Dirac has been estimated, we remove its contribution from the output spikes, and use the next non-zero samples in order to estimate the second Dirac $x_2\delta(t-\tau_2)$.
Otherwise, if the rank of matrix $S$ is 2, then at least for one of the channels $i$, we get $t_{i,3}>\tau_2$. As a result of the similarity between the sampling kernels of the two channels, it is likely that $t_{i,3}>\tau_2$ for both $i=1$ and $i=2$. In other words, the samples $y(t_{4,1}), y(t_{5,1})$ and $y(t_{4,2}), y(t_{5,2})$ are likely to have contributions from both Diracs and hence, we can use the method in Section \ref{subsection:Estimation of Bursts of Diracs with a Multi-channel Approach Integrator} to estimate the burst.
In Table \ref{table:table2} we show the probability of correct estimation of the 2 Diracs, against different values of $\Delta_b$ and trigger mark $C_T$, averaged over 1000 experiments. The results show that we still achieve perfect reconstruction in most cases. The reconstruction typically fails when the number of samples required for estimation is not achieved (for example, when the amplitudes of the Diracs  are very small or when they have similar magnitudes, but opposite signs).

\vspace{-1em}
\small
\begin{table}[ht]
\begin{center}
\caption{Probability of perfect reconstruction of a burst of 2 Diracs, with random Gaussian amplitudes.
 \label{table:table2}}
\begin{tabular}{|c|c|c|} 
        \hline
      \textbf{$\Delta_b=\tau_2-\tau_1$} & \textbf{$C_T$} & P\{perfect estimation\}\\
      \hline
      $0.05$ & $0.01$ &$0.950$  \\
        \hline
      $0.1$ & $0.01$ & $0.959$  \\
        \hline
      $0.5$ & $0.001$ & $0.959$  \\
        \hline
      $1.5$ & $0.001$ & $0.978$  \\
        \hline
\end{tabular}
\end{center}
\vspace{-1em}
\end{table}
\normalsize
\vspace{-1.5em}
\renewcommand{\thesection}{\Roman{section}}
\section{Density of Non-uniform Samples Obtained with an Integrate-and-fire TEM}
\label{section:Density of Non-uniform Samples}
In the previous sections, we have presented techniques for estimation of non-bandlimited signals from timing information. We have seen that perfect estimation can be achieved using simple algorithms, and physically realisable kernels.
In this section we outline the fact that in many settings sampling based on timing using our integrate-and-fire system is an efficient way to acquire signals, resulting in a smaller density of samples, compared to classical sampling.

As a case in point we consider the retrieval of bursts of $K$ Diracs, described in Section \ref{subsection:Estimation of Bursts of Diracs with a Multi-channel Approach Integrator}. We have seen that perfect reconstruction from timing information can be achieved, provided the separation between consecutive bursts is at least $L$, and that the Diracs within any burst are sufficiently close. In particular, let us denote the maximum separation between the last and first Dirac within a burst with $\Delta=\max(\tau_K-\tau_1)<\frac{L}{2}$, which can be determined according to Eq. (\ref{eq:threshold_cond_summary_1}) and (\ref{eq:threshold_cond_summary_2}). Moreover, let us assume the input is sufficiently sparse, such that the average separation between consecutive bursts is $L+S$, with $S>0$.
Under these assumptions, the results in \cite{4156380} show that in order to retrieve the $K$ Diracs from uniform samples, we need at least $2K$ samples within the interval $L-\Delta$ following the burst of Diracs. As a result, the uniform sampling period must satisfy $T \leq \frac{L-\Delta}{2K}$.
Then, the number of uniform samples we record within an interval of length $L+S$ is $\frac{L+S}{T} =  \frac{2K(L+S)}{L-\Delta}$.
On the other hand, in the case of time encoding using the integrate-and-fire TEM in Fig.~\ref{fig:integrate_and_fire_model_2}, the results in Section \ref{subsection:Estimation of Bursts of Diracs with a Multi-channel Approach Integrator} show that we need to record 4 output samples for each of the $K$ channels (or equivalently, $4K$ samples for the case of single-channel sampling), for each burst of $K$ Diracs. We note that Eq. (\ref{eq:threshold_cond_summary_2}) shows that in many situations, the TEM outputs more than 4 spikes per channel. Nevertheless, these samples can be discarded since they are not used in estimation. For example, one way to stop recording spikes once we have obtained 4 non-zero samples, is to increase the trigger mark $C_T$ of the comparator in Fig.~\ref{fig:integrate_and_fire_model_2}, for a duration of $L-\Delta$.

Moreover, when the input is constant (zero), the integrate-and-fire TEM does not fire, and hence there are no output samples. Therefore, in an interval of size $L+S$, the number of stored samples from a $K$-Dirac burst is $4K$, $\forall S$. 

Furthermore, $\frac{2K(L+S)}{L-\Delta}>4K$ for $S\geq L-2\Delta>0$ and $\forall K$, which shows that the average number of non-uniform spikes required for the retrieval of $K$ Diracs is lower than the number of uniform samples required to estimate the same number of free input parameters, when the input is sufficiently sparse.

\vspace{-0.5em}
\section{Conclusions}
\label{sec:Conclusions}
In this work we established time encoding as an alternative sampling method for some classes of signals that are neither bandlimited, nor belong to shift-invariant subspaces. The proposed sampling scheme is based on first filtering the input signal, before retrieving the timing information using a crossing or integrate-and-fire TEM. 
We demonstrated sufficient conditions for the exact recovery of streams of Diracs, streams of pulses and piecewise constant signals, from their time-based samples. Central to our reconstruction methods is the use of specific filters that we proved can locally reproduce polynomials or exponentials. We further highlighted the potential of this new framework by showing that it is resilient to noise and that it can handle non-ideal filters. Finally, the diverse applications of previous results of finite rate of innovation theory \cite{7857059, 7736135, 5686950, 1329542} also serve as evidence for the potential for real-world applications of the theoretical framework developed in this paper. 

\vspace{-1em}
\appendices
\section{Prony's method}
\label{appendix:Prony's method}
One way to solve the problem of estimating the parameters $\{b_k, u_k\}_{k=1}^K$ from the sequence $s_m=\sum_{k=1}^K b_k u_k^m$ is given by the annihilating filter method, also referred to as Prony's method \cite{Prony}. The name of this approach comes from the observation that if we filter $s_m$ with a filter which has zeros at $\{u_k\}_{k=1}^K$, the output is zero, or in other words, this filter annihilates the sequence $s_m$.

The z-transform of the annihilating filter satisfies:
\small
\begin{equation}
\label{eq:annihilating filter}
H(z) = \sum_{m=0}^K h_m z^{-m} = \prod_{k=1}^K(1-u_k z^{-1}),
\end{equation}
\normalsize
which evaluates to zero when $z=u_k$.

Filtering the sequence $s_m$ with $h_m$ corresponds to the convolution of these sequences:
\small
\begin{equation}
\label{eq:annihilating filter coeffs}
\begin{split}
h_m*s_m = \sum_{l=0}^K h_l s_{m-l} =\sum_{k=1}^K b_k u_k^{m}\sum_{l=0}^K h_lu_k^{-l} \aeq 0,
\end{split}
\end{equation}
\normalsize
where $(a)$ holds since $z=u_k$ gives $H(z)=0$ in Eq. (\ref{eq:annihilating filter}).

Eq. (\ref{eq:annihilating filter coeffs}) can be written in matricial form as follows:
\small
\begin{equation}
\label{annihilating filter coeffs matrix}
\begin{bmatrix}
    s_K  & s_{K-1}& \cdots & s_0 \\
    s_{K+1}  & s_K& \cdots &  s_1 \\
   \vdots & \vdots & \ddots & \vdots \\
   s_{2K-1}   & s_{2K-2} & \cdots & s_{K-1} 
\end{bmatrix}
\begin{bmatrix}
  1     \\
h_1      \\
\vdots \\
h_K
\end{bmatrix}
=\textbf{Sh} = 0.
\end{equation}
\normalsize

It can be shown that provided $\{b_k\}_{k=1}^K$ are non-zero and $\{u_k\}_{k=1}^K$ are distinct, matrix \textbf{S} has full row rank $K$, which means the solution \textbf{h} given by Eq. (\ref{annihilating filter coeffs matrix}) is unique. Moreover, the solution \textbf{h} can be obtained by performing a singular value decomposition of \textbf{S}, where \textbf{h} is the singular vector corresponding to the zero singular value. 

Then, once the coefficients $h_m$ of the polynomial $H(z)$ are known, the parameters $\{u_k\}_{k=1}^K$ are obtained from the roots of this filter.
Finally, once $\{u_k\}_{k=1}^K$ are found, the parameters $\{b_k\}_{k=1}^K$ can be computed from the linear system of $K$ equations given by $s_m=\sum_{k=1}^K b_k u_k^m$, with $m=0,1,...,K-1$.

\vspace{-1em}
\section{}
\label{Appendix: Comparator Bursts Diracs}
\vspace{-0.5em}
\subsection{Proof of Proposition \ref{prop: Comparator Bursts Diracs}}
For simplicity, let us assume the number of devices equals the number of Diracs in a burst, i.e. $M=K$.
Suppose we want to estimate the Diracs in the first burst, located at $\tau_{1,1}, ....,\tau_{1,K}$. Moreover, assume for simplicity that their amplitudes satisfy $x_{1,1},...,x_{1,K}>0$.
In addition, let us consider the output of the $m^{th}$ TEM device, and denote its timing information with $\{t_1, t_2,...,t_N\}$.

Since we assume all the amplitudes in the first burst satisfy $0<x_{1,k}<A_{max}$, and since $0 \leq \varphi(t)<1$, we get $0\leq y(t)$ and $y(t)=\sum_{k=1}^K x_k \varphi(\tau_k-t)< KA_{max}< A=\max(g(t))$. 

Then, Bolzano's intermediate value theorem \cite{Bolzano} guarantees that the $m^{th}$ TEM outputs at most one sample in the interval $(\tau_{1,1}, \tau_{1,K})$, given the assumption $\tau_{1,K}-\tau_{1,1}< \frac{T_s}{2}$, and the fact that $0\leq y(t)<\max(g(t))$.
At the same time, this theorem also guarantees that the filtered input $y(t)$ crosses the sinusoidal reference signal in at least 3 points, within the window $(\tau_{1,1}, \tau_{1,1}+\frac{7T_s}{4})$, such that $t_3-\tau_{1,1}\leq \frac{7T_s}{4}$. Moreover, the assumption $T_s\leq\frac{2L}{7}$ ensures that $t_3-\tau_{1,1}\leq \frac{L}{2}$.
Hence, whilst the spike at $t_1$ may occur before $\tau_{1,K}$, the second and third spikes satisfy $t_{2},t_{3}\in (\tau_{1,K}, \tau_{1,1}+\frac{L}{2})$, which means that $\tau_{1,1}, \tau_{1,2},...\tau_{1,K} \in (t_{3}-\frac{L}{2},t_{2})$. 

Since in the interval $I=(t_{3}-\frac{L}{2},t_{2})$ there are no knots of either $\varphi(t-t_{2})$ or $\varphi(t-t_{3})$, we can compute the following signal moments for the $m^{th}$ channel:
\small
\begin{equation*}
\begin{split}
s_{m_i}&= \sum_{n=2}^3 c_{m_i,n}^{I} y(t_{n}) \aeq  \sum_{n=2}^{3} c_{m_i,n}^{I} \langle x(t), \varphi(t-t_{n})\rangle\\
&\beq \int_{-\infty}^{\infty} x(t) \sum_{n=2}^{3} c_{m_i,n}^{I} \varphi(t-t_{n}) dt \ceq \int_{-\infty}^{\infty} x(t) e^{j \omega_{m_i} t} dt \\
&\deq \int_{I} \sum_{k=1}^K x_{1,k} \delta(t-\tau_{1,k})  e^{j \omega_{m_i} t}dt =  \sum_{k=1}^K x_{1,k} e^{j \omega_{m_i} \tau_{1,k}}.
\end{split}
\end{equation*}
\normalsize
where $i\in\{0,1\}$, and $\omega_{m_0} = \omega_0 + \lambda m$ and $\omega_{m_1}=-\omega_{m_0}$.

In the derivations above, $(a)$ follows from Eq. (\ref{eq:non-uniform samples comparator}), $(b)$ from the linearity of the inner product, and $(c)$ from the local exponential reproduction property of the sampling kernel described in Eq. (\ref{eq:exp_spline non-uniform}), for $N=2$. Moreover, $(d)$ follows from Eq. (\ref{eq:input burst Diracs}), and given that $\tau_{1,1}, \tau_{1,2},...\tau_{1,K} \in (t_{3}-\frac{L}{2},t_{1})$.

By using the same approach on each of the $K$ channels, we can retrieve $2K$ different moments and, due to the specific choice of exponents, the $2K$ moments can be expressed as:
\vspace{-0.5em}
\small
\begin{equation*}
s_p = \sum_{k=1}^K x_{1,k}  e^{j \omega_0 \tau_{1,k}} e^{j\lambda p\tau_{1,k}}=\sum_{k=1}^K b_k u_k^p,
\end{equation*}
\normalsize
where $b_k= e^{j \omega_0 \tau_{1,k}}$, $u_k= e^{j\lambda \tau_{1,k}}$, and $p=0,1,...,2M-1$.

We can then apply Prony's method on $s_p$ to retrieve the $K$ amplitudes and the $K$ locations of the Diracs.
Finally, we use subsequent output samples, located after $\tau_{1,K}+L$ to retrieve the free parameters of the Diracs in the second burst, and we reiterate the process for the following bursts.

The sampling and reconstruction of a sequence of bursts of 2 Diracs are depicted in Fig.~\ref{fig:comp_bursts_2_diracs}. Here, the sampling kernel is a second-order E-spline for each channel, of support of length $L=2$, shown in Fig.~\ref{fig:comp_bursts_2_diracs}(c) and (d). The first channel's kernel reproduces the exponentials $e^{\pm j \frac{ \pi}{3}t}$, whereas the second kernel reproduces $e^{\pm j \frac{\pi}{9}t}$. Moreover, the comparator's reference signal has frequency $f_s=1.76>\frac{7}{2L}$, and the separation between consecutive bursts of Diracs is at least $L$. The amplitudes and locations of the estimated Diracs are exact.

\vspace{-0.9em}
\begin{figure}[htb]
\centering
\includegraphics[width=0.5\textwidth]{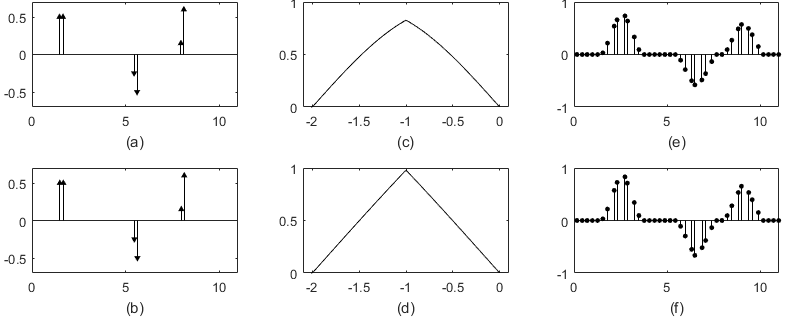}
\vspace{-0.5em}
\caption{Sampling of bursts of Diracs using the crossing TEM. The input signal is shown in (a), the reconstructed signal in (b), the sampling kernels of both channels in (c) and (d) respectively, and the corresponding non-uniform samples in (e) and (f).}
\label{fig:comp_bursts_2_diracs}
\vspace{-1em}
\end{figure}

\vspace{-1.2em}
\section{}
\label{appendix:Multi-channel Estimation of Bursts of Diracs Using the Integrate-and-fire TEM}
\subsection{Proof of Proposition \ref{prop:Multi-channel Estimation of Bursts of Diracs Using the Integrate-and-fire TEM}}
The input stream of bursts of Diracs can be sequentially estimated as follows. We estimate the first burst using the first four non-zero samples of each channel and the methods presented below. We then retrieve the second burst using the first four non-zero samples of each channel located after $\tau_{1,K}+L$, where $\tau_{1,K}$ denotes the estimated location of the last Dirac in the first burst, and $L$ is the length of the kernel's support. We then use the first non-zero samples located after $\tau_{2,K}+L$ to estimate the third burst, and repeat this procedure to estimate the subsequent bursts of Diracs.

Let us assume we want to retrieve burst $b$ and denote with $t_n,t_{n+1},t_{n+2}, t_{n+3}$ the first four output spikes located after $\tau_{b-1,K}+L$.
Then we have that $t_n>\tau_{b,1}>t_{n-1}$, where $\tau_{b,1}$ is the location of the first Dirac in the $b^{th}$ burst.
Furthermore, let us assume for simplicity that the Diracs in the $b^{th}$ burst satisfy $x_{b,1},...,x_{b,K}>0$, as depicted in Fig.~\ref{fig:int_bursts_all_signals}. 

\begin{figure}
\centering
\includegraphics[width=0.5\textwidth]{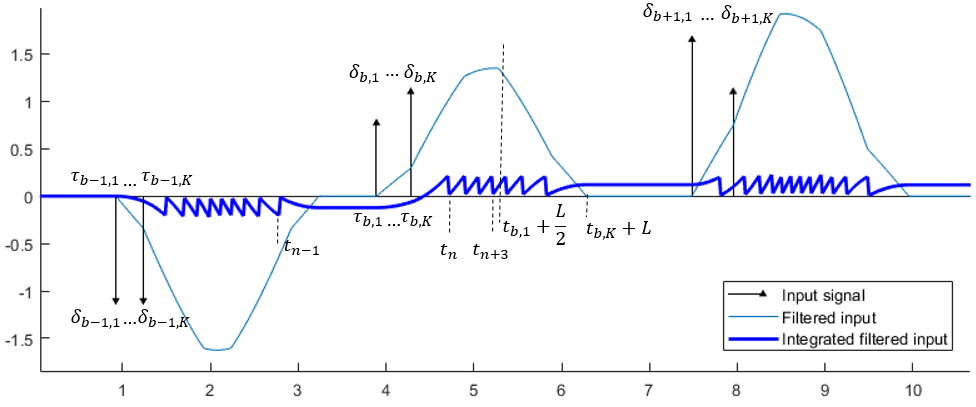}
\caption{Time encoding of a sequence of 2 bursts of 2 Diracs, when the amplitudes of the Diracs in a burst have the same sign.}
\label{fig:int_bursts_all_signals}
\vspace{-1.8em}
\end{figure}

In what follows, we show that the samples $y(t_{n+2})$ and $y(t_{n+3})$ can be reliably used to estimate the $b^{th}$ burst. 

We first prove that the following conditions hold:
\small
\begin{equation}
\vspace{-0.8em}
\label{eq:burst_integrator_sample_cond_1}
t_{n+1}>\tau_{b,K},
\end{equation}
\normalsize
and:
\vspace{-0.5em}
\small
\begin{equation}
\vspace{-0.7em}
\label{eq:burst_integrator_sample_cond_2}
t_{n+3}<\tau_{b,1}+\frac{L}{2}.
\end{equation}
\normalsize

We note that since we assume $x_{b,1},...,x_{b,K}>0$, the filtered input defined in Eq. (\ref{eq:filtered input integrator}) satisfies $f(\tau)>0$, and hence the condition in Eq. (\ref{eq:burst_integrator_sample_cond_1}) is equivalent to:
\small
\begin{equation}
\vspace{-0.5em}
\label{eq:burst_integrator_inequality_1}
\int_{\tau_{b,1}}^{t_{n+1}} f(\tau)d\tau>\int_{\tau_{b,1}}^{\tau_{b,K}} f(\tau)d\tau.
\end{equation}
\normalsize

The left-hand side of this inequality can be expressed as:
\small
\begin{equation}
\vspace{-0.5em}
\label{eq:burst_integrator_inequality_2}
\begin{split}
\int_{\tau_{b,1}}^{t_{n+1}} f(\tau)d\tau &= \int_{t_{n-1}}^{t_{n+1}} f(\tau)d\tau -\int_{t_{n-1}}^{\tau_{b,1}} f(\tau)d\tau \\
&\aeq 2C_T-\int_{t_{n-1}}^{\tau_{b,1}} f(\tau)d\tau \bg C_T,
\end{split}
\end{equation}
\normalsize
where $(a)$ holds given Eq. (\ref{eq:non_uniform_samples integrator}) and $(b)$ since $t_n>\tau_{b,1}>t_{n-1}$.

The right-hand side of Eq. (\ref{eq:burst_integrator_inequality_1}) can be re-written as:
\small
\begin{equation*}
\vspace{-0.5em}
\begin{split}
 \int_{\tau_{b,1}}^{\tau_{b,K}} f(\tau)d\tau &\cl \sum_{k=1}^{K-1} A_{\max}\int_{\tau_{b,k}}^{\tau_{b,K}} \varphi(\tau_{b,k}-\tau)d\tau  \\
&\dl \frac{(K-1)A_{\max}}{\omega_{m_0}^2} [1-\cos(\omega_{m_0}(\tau_{b,K}-\tau_{b,1}))] \\
&\el C_T \fl \int_{\tau_{b,1}}^{t_{n+1}} f(\tau)d\tau,
\end{split}
\end{equation*}
\normalsize
which proves the inequality in Eq. (\ref{eq:burst_integrator_sample_cond_1}).
 
In the derivations above, $(c)$ follows from the definition in Eq. (\ref{eq:filtered input integrator}) and since we assume $A_{max}>x_{b,1},...,x_{b,K}>0$. In addition, $(e)$ follows from Eq. (\ref{eq:threshold_cond_summary_1}) and $(f)$ from Eq. (\ref{eq:burst_integrator_inequality_2}). Finally, condition $(d)$ follows from:
\small
\begin{equation*}
\vspace{-0.5em}
\begin{split}
\int_{\tau_{b,k}}^{\tau_{b,K}}\varphi(\tau_{b,k}-\tau)d\tau &\hheq \frac{1}{\omega_{m_0}^2} [1-\cos(\omega_{m_0}(\tau_{b,K}-\tau_{b,k}))] \\
&\il  \frac{1}{\omega_{m_0}^2} [1-\cos(\omega_{m_0}(\tau_{b,K}-\tau_{b,1}))].
\end{split}
\end{equation*}
\normalsize
where $(h)$ follows from the definition of $\varphi(\tau_{b,k}-\tau)$ in Eq. (\ref{eq: first_order_e_spline_definition_1}) for $\tau \in[\tau_{b,k},\tau_{b,K}]$ with $\tau_{b,K}<\tau_{b,k}+\frac{L}{2}$, and from the hypothesis that $\varphi(\tau)$ reproduces the exponentials $e^{\pm j\omega_{m_0}\tau}$. Moreover, $(i)$ follows from the hypothesis that $0<\omega_{m_0}\leq\frac{\pi}{L}$ which is equivalent to $0<\frac{\omega_{m_0}L}{2}\leq\frac{\pi}{2}$, and from the assumption that $\tau_{b,K}-\tau_{b,k}<\frac{L}{2}$, which means that $0<\omega_{m_0}(\tau_{b,K}-\tau_{b,k})<\frac{\pi}{2}$, and hence $1-\cos(\omega_{m_0}(\tau_{b,K}-\tau_{b,1}))>1-\cos(\omega_{m_0}(\tau_{b,K}-\tau_{b,k}))$ $\forall k=2,...,K$.

In Fig.~\ref{fig:int_bursts_all_signals} we notice that in some cases (third burst of 2 Diracs), the spike $t_n$ may occur in the interval $(\tau_{b,1}, \tau_{b,K})$. Nevertheless, the condition in Eq. (\ref{eq:burst_integrator_inequality_1}) ensures that the sample at $t_{n+1}$ happens after $\tau_{b,K}$.

Similarly, since $f(\tau)>0$, Eq. (\ref{eq:burst_integrator_sample_cond_2}) is equivalent to:
\small
\begin{equation}
\vspace{-0.5em}
\label{eq:burst_integrator_inequality_3}
\int_{\tau_{b,1}}^{\tau_{b,1}+\frac{L}{2}} f(\tau)d\tau>\int_{\tau_{b,1}}^{t_{n+3}} f(\tau)d\tau,
\end{equation}
\normalsize
\vspace{-0.5em}
where the left-hand side can be expressed as:
\small
\begin{equation}
\label{eq:burst_integrator_inequality_4}
\begin{split}
&\int_{\tau_{b,1}}^{\tau_{b,1}+\frac{L}{2}} f(\tau)d\tau \aeq \sum_{k=1}^K \int_{\tau_{b,k}}^{\tau_{b,1}+\frac{L}{2}} x_k \varphi(\tau_{b,k}-\tau)d\tau \\
&\beq \frac{1}{\omega_{m_0}^2} \sum_{k=1}^K x_{b,k}[1-\cos(\omega_{m_0}(\frac{L}{2}-(\tau_{b,k}-\tau_{b,1})))]\\
&\cg \frac{1}{\omega_{m_0}^2} \sum_{k=1}^K x_{b,k}[1-\cos(\omega_{m_0}(\frac{L}{2}-(\tau_{b,K}-\tau_{b,1})))] \\
&\dg  \frac{K A_{\min}}{\omega_{m_0}^2}[1-\cos(\omega_{m_0}(\frac{L}{2}-(\tau_{b,K}-\tau_{b,1})))] \eg 5C_T,
\end{split}
\end{equation}
\normalsize
where $(a)$ follows from Eq. (\ref{eq:filtered input integrator}), $(b)$ follows from the definition of $\varphi(\tau_{b,k}-\tau)$ in Eq. (\ref{eq: first_order_e_spline_definition_1}) for $\tau \in (\tau_{b,k},\tau_{b,1}+\frac{L}{2})$, and $(c)$ follows from the hypothesis that $0<\omega_{m_0}\leq\frac{\pi}{L}$ which is equivalent to $0<\frac{\omega_{m_0}L}{2}\leq\frac{\pi}{2}$, and since $\tau_{b,k}-\tau_{b,1}<\frac{L}{2}$ $\forall k=2,...,K$. Moreover, $(d)$ holds since we assume $x_{b,1},...,x_{b,K}>0$, and $(e)$ follows from Eq. (\ref{eq:threshold_cond_summary_2}).

Finally, the right-hand side of Eq. (\ref{eq:burst_integrator_inequality_3}) is equivalent to:
\small
\begin{equation*}
\vspace{-0.5em}
\int_{\tau_{b,1}}^{t_{n+3}} f(\tau)d\tau  \feq 4C_T -  \int_{t_{n-1}}^{\tau_{b,1}} f(\tau)d\tau\ggl 5C_T \hl \int_{\tau_{b,1}}^{\tau_{b,1}+\frac{L}{2}} f(\tau)d\tau,
\end{equation*}
\normalsize
hence proving the result in Eq. (\ref{eq:burst_integrator_sample_cond_2}).

In these derivations, $(f)$ follows from Eq. (\ref{eq:non_uniform_samples integrator}), $(g)$ holds since $t_n>\tau_{b,1}>t_{n-1}$ and $(h)$ follows from Eq. (\ref{eq:burst_integrator_inequality_4}).

The conditions in Eq. (\ref{eq:burst_integrator_sample_cond_1}) and (\ref{eq:burst_integrator_sample_cond_2}) ensure that the output samples $y(t_{n+2})$ and $y(t_{n+3})$ have contributions only from all the Diracs in the $b^{th}$ burst. These samples can be computed using Eq. (\ref{eq:modified samples integrator summary}) and (\ref{eq:input burst Diracs}) for each channel $m$, as follows:
\small
\begin{equation}
\vspace{-0.5em}
\label{eq:sample_1_burst_same_sign}
\begin{split}
y(t_{n+2})&\aeq \sum_{k=1}^K x_{b,k}(\varphi*q_{\theta_{n+2}})(\tau_{b,k}-t_{n+1}).
\end{split}
\end{equation}
\normalsize

Similarly, we can write $y(t_{n+3})$ as:
\vspace{-0.8em}
\small
\begin{equation}
\label{eq:sample_2_burst_same_sign}
y(t_{n+3}) =\sum_{k=1}^K x_{b,k}(\varphi*q_{\theta_{n+3}})(\tau_{b,k}-t_{n+2}).
\vspace{-0.8em}
\end{equation}
\normalsize

For each channel $m$, the signal $(\varphi*q_{\theta_{n+2}})(t-t_{n+1})$ is a linear combination of the exponentials $e^{j\omega_{m_0} t}$ and $e^{j\omega_{m_1}t}$, for $t \in (t_{n+2}-\frac{L}{2},t_{n+1})$, given Eq. (\ref{eq: first_order_e_spline_definition_1}) and Eq. (\ref{eq:box function integrator}). Similarly, $(\varphi*q_{\theta_{n+3}})(t-t_{n+2})$  is a linear combination of the exponentials $e^{j\omega_{m_0} t}$ and $e^{j\omega_{m_1}t}$, for $t \in (t_{n+3}-\frac{L}{2},t_{n+1})$. Therefore, in the interval $(t_{n+3}-\frac{L}{2},t_{n+1})$, where there are no knots of either $(\varphi*q_{\theta_{n+2}})(t-t_{n+1})$ or $(\varphi*q_{\theta_{n+3}})(t-t_{n+2})$, we use the proof in Section \ref{subsubsection: Exponential reproducing kernels} to find unique $c_{m_i,2}$ and $c_{m_i,3}$ such that:
\small
\begin{equation}
\vspace{-0.5em}
\label{eq:recon exp burst same sign}
c_{m_i,2}(\varphi*q_{\theta_{n+2}})(t-t_{n+1})+ c_{m_i,3}(\varphi*q_{\theta_{n+3}})(t-t_{n+2})= e^{j\omega_{m_i}t}, 
\end{equation}
\normalsize
for $i \in \{0,1\}$, $t \in [t_{n+3}-\frac{L}{2}, t_{n+1}]$, $m_0=m$ and $m_1=2K-1-m$ (which ensures $\omega_{m_1}=-\omega_{m_0}$).

Then, for each channel $m$ we can compute the signal moments as before:
\small
\begin{equation*}
\vspace{-0.5em}
\begin{split}
&s_{m_i} = c_{m_i,2}y(t_{n+2})+c_{m_i,3}y(t_{n+3}) \\
&\aeq \sum_{k=1}^K x_{b,k} \sum_{l=2}^{3} c_{m_i,l}(\varphi*q_{\theta_{l}})(\tau_{b,k}-t_{l+n-1})\beq \sum_{k=1}^K x_{b,k} e^{j\omega_{m_i} \tau_{b,k}},
\end{split}
\end{equation*}
\normalsize
where $i \in \{0,1\}$, $m_0=m$ and $m_1=2K-1-m$.

In the derivations above, (a) follows from Eq. (\ref{eq:sample_1_burst_same_sign}) and (\ref{eq:sample_2_burst_same_sign}), and (b) from $\tau_{b,1},...,\tau_{b,K} \in(t_{n+3}-\frac{L}{2},t_{n+1})$ and the fact that Eq. (\ref{eq:recon exp burst same sign}) holds within this interval.
We can then uniquely retrieve the $2K$ input parameters of the $b^{th}$ burst from the $2K$ signal moments $s_{m_i}$ of all channels, using Prony's method.

Finally, we make the observation that the inequalities in Eq. (\ref{eq:threshold_cond_summary_1}) and Eq. (\ref{eq:threshold_cond_summary_2}) impose additional constraints on the maximum separation between the Diracs in a burst $b$, namely on $\tau_{b,K}-\tau_{b,1}$. Specifically, we need to impose:
\small
\begin{equation*}
5\int_{\tau_{b,1}}^{\tau_{b,K}} f(\tau)d\tau < \int_{\tau_{b,1}}^{\tau_{b,1}+\frac{L}{2}} f(\tau) d\tau,
\end{equation*}
\normalsize
which may give different constraints on the Dirac separation according to the filter characteristics, $A_{\max}$ and $A_{\min}$.

\vspace{-1em}


\end{document}